\documentclass[
final, nomarks
]{dmtcs-episciences}


\usepackage[utf8]{inputenc}
\usepackage{subfigure}

%

\usepackage[round]{natbib}

\usepackage{graphicx}
\usepackage{amsmath, amssymb, amsfonts, amstext, mathdots, latexsym}
\usepackage{algorithm, algorithmic}
\usepackage{caption, subcaption}
\usepackage{multirow, slashbox, rotating}
\usepackage{tipa, arcs, fix-cm}
\usepackage{enumerate, setspace, makeidx, epsfig, verbatim}

\newcommand{\ignore}[1]{}

\def\trs{\mathcal{D}}
\def\tr{\Delta}
\newcommand{\norm}[1]{\left\lVert#1\right\rVert}
\DeclareMathOperator*{\argmax}{arg\,max}

\newtheorem{theorem}{Theorem}[section]

\newtheorem{lemma}[theorem]{Lemma}
\newtheorem{corollary}[theorem]{Corollary}

\newtheorem{observation}[theorem]{Observation}

\author[Konstantinos Georgiou et al.]{Konstantinos Georgiou\affiliationmark{1}
  \and Somnath Kundu\affiliationmark{1}
  \and Pawe\l{} Pra\l{}at\affiliationmark{1}}
\title[Makespan Trade-offs for Visiting Triangle Edges]{Makespan Trade-offs for Visiting Triangle Edges\thanks{Research supported in part by NSERC.}}
\affiliation{
  Toronto Metropolitan University, Toronto, Canada}
\keywords{
2-Dimensional Search and Navigation,
Vehicle Routing,
Triangle,
Makespan,
Trade-offs
}

\begin{document}
\publicationdata{vol. 26:3}{2024}{22}{10.46298/dmtcs.8729}{2021-11-19; 2021-11-19; 2024-07-22}{2024-12-10}
\maketitle
\begin{abstract}
We study a primitive vehicle routing-type problem in which a fleet of $n$ unit speed robots start from a point within a non-obtuse triangle $\tr$, where $n \in \{1,2,3\}$. 
The goal is to design robots' trajectories so as to visit all edges of the triangle with the smallest visitation time makespan.
We begin our study by introducing a framework for subdividing $\tr$ into regions with respect to the type of optimal trajectory that each point $P$ admits, pertaining to the order that edges are visited and to how the cost of the minimum makespan $R_n(P)$ is determined, for $n\in \{1,2,3\}$. 
These subdivisions are the starting points for our main result, which is to study makespan trade-offs with respect to the size of the fleet. 
In particular, we define $\mathcal R_{n,m} (\tr)= \max_{P \in \tr} R_n(P)/R_m(P)$, and we prove that, over all non-obtuse triangles $\tr$: (i) 
$\mathcal R_{1,3}(\tr)$ ranges from $\sqrt{10}$ to $4$, 
(ii) $\mathcal R_{2,3}(\tr)$ ranges from $\sqrt{2}$ to $2$, and
(iii) $\mathcal R_{1,2}(\tr)$ ranges from $5/2$ to $3$. 
In every case, we pinpoint the starting points within every triangle $\tr$ that maximize $\mathcal R_{n,m} (\tr)$, as well as we identify the triangles that determine all $\inf_\tr \mathcal R_{n,m}(\tr)$ and $\sup_\tr \mathcal R_{n,m}(\tr)$ over the set of non-obtuse triangles. 
\end{abstract}

\section{Introduction}
\label{sec: intro}

Vehicle routing problems form a decades old paradigm of combinatorial optimization questions. In the simplest form, the input is a fleet of robots (vehicles) with some starting locations, together with stationary targets that need to be visited (served). Feasible solutions are robots' trajectories that eventually visit every target, while the objective is to minimize either the total length of traversed trajectories or the time that the last target is visited. 

Vehicle routing problems are typically NP-hard in the number of targets. 
The case of 1 robot in a discrete topology corresponds to the celebrated Traveling Salesman Problem whose variations are treated in numerous papers and books. Similarly, numerous vehicle routing-type problems have been proposed and studied, varying with respect to the number of robots, the domain's topology and the solutions' specs, among others. 

We deviate from all previous approaches and we focus on efficiency trade-offs, with respect to the fleet size, of a seemingly simple geometric variation of a vehicle routing-type problem in which targets are the edges of a non-obtuse triangle. The optimization problem of visiting all these three targets (edges), with either 1, 2 or 3 robots, is computationally degenerate. Indeed, even in the most interesting case of 1 robot, an optimal solution for a given starting point can be found by comparing a small number of candidate optimal trajectories (that can be efficiently constructed geometrically). From a combinatorial geometric perspective, however, the question of characterizing the points of an arbitrary non-obtuse triangle with respect to optimal trajectories they admit when served by 1 or 2 robots, e.g.\ the order that targets are visited, is far from trivial (and in fact it is still eluding us in its generality). 

In the same direction, we ask a more general question: Given an arbitrary non-obtuse triangle, what is the worst-case trade-off ratio of the cost of serving its edges with different number of robots, over all starting points?
Moreover, what is the smallest and what is the largest such value as we range over all non-obtuse triangles? Our main contributions pertain to the development of a technical geometric framework that allows us to pinpoint exactly the best-case and worst-case non-obtuse triangles, along with the worst-case starting points that are responsible for the extreme values of these trade-off ratios. 
To the best of our knowledge, the study of efficiency trade-offs with respect to fleet sizes is novel, at least for vehicle routing type problems or even in the realm of combinatorial geometry.

\subsection{Motivation \& Related Work}

The main motivation for our work stems from its classification as a vehicle routing-type problem, first introduced by~\cite{dantzig1959truck}. In vehicle routing problems (VRPs), the primary objective is to minimize either the visitation time (makespan) or the total distance traveled to serve a set of targets using a fleet of (usually capacitated) robots. Early results on this topic are detailed in surveys such as the one by~\cite{laporte2,toth2002vehicle}.

While VRPs are typically studied in discrete domains, geometric vehicle routing problems have also been extensively explored, as seen by~\cite{das2015quasipolynomial}. The long list of VRP variations proposed over time has led to numerous subject-focused surveys; see~\cite{kumar2012survey,mor2020vehicle,ritzinger2016survey} for three relatively recent examples.

Famously, VRPs generalize the celebrated Traveling Salesman Problem (TSP), where a single vehicle must efficiently tour a set of targets. Similar to VRPs, TSP has numerous variations, including geometric ones, e.g. by~\cite{Arora98,DumitrescuT16}, where targets are lines in the latter work. The natural extension of TSP to multiple vehicles is known as the Multiple Traveling Salesman Problem (MTSP), see~\cite{bektas2006multiple}, a variant of VRPs where vehicles are un-capacitated. MTSP has also been studied with variations in the initial deployment of vehicles, either from a single location (single depot), as in our problem, or from multiple locations.

Apart from being related to vehicle routing problems, the geometric traveling salesman problem, and search and exploration games, our problem also relates to the so-called shoreline search problem, first introduced by~\cite{baeza1988searching}. Our research question arose from attempting to derive new lower bounds for this problem.

In the shoreline search problem, a unit robot searches for a hidden line on the plane, unlike our problem where the triangle edges are visible. The objective is to visit the line as quickly as possible relative to the distance from the robot's initial placement. The best known algorithm for this problem has a performance ratio of roughly 13.81, with only very weak (unconditional) lower bounds known~\cite{baeza1995parallel}. Recently, the problem of searching with multiple robots was revisited, resulting in new lower bounds by~\cite{AcharjeeGKS19,DKP20}.

In typical online problems, a lower bound argument involves allowing an arbitrary algorithm to run for a certain time until the hidden item is placed at a location that the robot has not yet visited. The lower bound is then obtained by adding the elapsed time to the distance from the robot to the hidden item (the line), as it is assumed that the online algorithm has full knowledge of the input at this point. Applying this strategy to the shoreline problem involves identifying a number of lines as close as possible to the robot's starting point and computing the shortest trajectory for the robot to visit all of them, which mirrors our problem. This reasoning also applies to the case of multiple agents.

In the simplest configuration that could yield strong bounds, three lines forming a non-obtuse triangle are identified. The question then arises: what is the shortest trajectory that allows one (or multiple) agent(s) to visit all these edges? This question led to the research we present in this work. The motivation for restricting our attention to non-obtuse triangles is twofold: firstly, to address this specific configuration that arose from our motivating search problem, and secondly, because the optimal visitation cost for three robots in non-obtuse triangles is defined as the maximum distance over all triangle edges, treated as lines.
Although our quantified results do not have immediate implications for the shoreline problem (or its lower bounds), we hope that the techniques we developed to compare optimal visitation trajectories by one or more agents will provide new insights for improving the lower bounds of the shoreline problem.


\subsection{A Note on our Contributions \& Paper Organization}

We introduce and study a novel concept of efficiency/fleet size trade-offs in a special geometric vehicle routing-type problem that we believe is interesting in its own right. Deviating from the standard combinatorial perspective of the problem, we focus on the seemingly simple case of visiting the three edges of a non-obtuse triangle with $n\in \{1,2,3\}$ robots. Interestingly, the problem of characterizing the starting points within arbitrary non-obtuse triangles with respect to structural properties of the optimal trajectories they admit is a challenging question. 
More specifically, one would expect that the latter characterization is a prerequisite in order to analyze efficiency trade-offs when serving with different number of robots, over all triangles. 
Contrary to this intuition, and without fully characterizing the starting points of arbitrary triangles, 
we develop a framework that allows us 
(a) to pinpoint the starting points of any triangle at which these (worst-case) trade-offs attain their maximum values, and 
(b) to identify the extreme cases of non-obtuse triangles that set the boundaries of the inf and sup values of these worst-case trade-offs.

In Section~\ref{sec: definition and contributions} we give a formal definition of our problem, as well as we quantify our main results.
Section~\ref{sec: terminology obesrvations} introduces some basic terminology, together with some preliminary and important observations. 
Then, in Section~\ref{sec: visiting 1,2 edges}, we address first the basic question of visiting optimally two triangle edges, and we move in Section~\ref{sec: visiting 3 edges} to the problem of visiting optimally three triangle edges in a specific order. 
Section~\ref{sec: Rn regions} is the beginning of our technical contribution, were we introduce a framework for characterizing triangle points with respect to optimal solutions they admit when serving with $n=3,2,1$ robots, see Sections~\ref{sec: regions 3}, \ref{sec: regions 2} and \ref{sec: regions 1}, respectively. 
Using that framework, we expand our technical contribution by computing in Section~\ref{sec: special visitations} the visitation cost with 1, 2 robots of some special triangle starting points. 
Finally, in Section~\ref{sec: trade-offs} we quantify the efficiency trade-offs with respect to the fleet size where, in particular, Sections~\ref{sec: 13 sup}, \ref{sec: 23 sup} and \ref{sec: 12 sup} focus on the cases of serving with 1 vs.\ 3 robots, 2 vs.\ 3 robots, and 1 vs.\ 2 robots, respectively.

\section{Our Results \& Basic Terminology and Observations}
\label{sec: terminology results}

\subsection{Problem Definition \& Main Contributions}
\label{sec: definition and contributions}

We consider the family of non-obtuse triangles $\trs$, equipped with the Euclidean distance.
For any $n\in \{1,2,3\}$, any given triangle $\tr \in \trs$, and any point
$P$ in the triangle, denoted by $P \in \tr$,
we consider a fleet of $n$ unit speed robots starting at point $P$. 
A feasible solution to the triangle $\tr$ visitation problem with $n$ robots starting from $P$ is given by robots' trajectories that eventually visit every edge of $\tr$, that is, each edge needs to be touched by at least one robot in any of its points including the endpoints. 
The visitation cost of a feasible solution is defined as the makespan of robots' trajectory lengths, or equivalently as the first time by which every edge is touched by some robot. By $R_n(\tr,P)$ we denote the optimal visitation cost of $n$ robots, starting from some point $P\in \tr$. When the triangle $\tr$ is clear from the context, we abbreviate $R_n(\tr,P)$ simply by $R_n(P)$. 

In this work we are interested in determining visitation cost trade-offs with respect to different fleet sizes. In particular, for some triangle $\tr \in \trs$ (which is a compact set as a subest of $\reals^2$), and for $1\leq n<m\leq 3$, we define 
$$
\mathcal R_{n,m} (\tr) := \max_{P \in \tr} \frac{R_n(\tr,P)}{R_m(\tr,P)}.
$$
Our main technical results pertain to the study of $\mathcal R_{n,m} (\tr)$ as $\tr$ ranges over all non-obtuse triangles $\trs$. In particular, we determine $\inf_{\tr \in \trs} \mathcal R_{n,m} (\tr)$ and $\sup_{\tr \in \trs} \mathcal R_{n,m} (\tr)$ for all pairs $(n,m)\in \{(1,3),(2,3),(1,2)\}$. Our contributions are summarized in Table~\ref{tbl: main contributions}.\footnote{Note that the entries in column 1 are not obtained by multiplying the entries of columns 2,3. This is because the triangles that realize the $\inf$ and $\sup$ values are not the same in each column.} 

\begin{table}[h]
\begin{center}
\begin{tabular}{|l|ccc|}
\hline
	 		& 	$\mathcal R_{1,3}(\tr)$	& 	$\mathcal R_{2,3}(\tr)$ 	& 	$\mathcal R_{1,2}(\tr)$ \\
\hline
$\inf_{\tr \in \trs}$			&		$\sqrt{10}$	&		$\sqrt{2}$ &		$2.5$	\\
$\sup_{\tr \in \trs}$		&			$4$	&			2	&	 	$3$	\\
\hline
\end{tabular}
\caption{Our main contributions.}
\label{tbl: main contributions}
\end{center}
\end{table}

For establishing the claims above, we observe that $\inf_{\tr \in \trs} \max_{P \in \tr} \frac{R_n(\tr,P)}{R_m(\tr,P)}
= \alpha$ is equivalent to that 
$$
\forall \tr \in \trs, \exists P \in \tr, \frac{R_n(\tr,P)}{R_m(\tr,P)}
\geq \alpha 
~ \quad \text{ and } \quad ~~
\forall \epsilon >0, \exists \tr \in \trs, \forall P \in \tr, \frac{R_n(\tr,P)}{R_m(\tr,P)}
\leq \alpha+\epsilon.
$$
Similarly, $\sup_{\tr \in \trs} \max_{P\in \tr} \frac{R_n(\tr,P)}{R_m(\tr P)}
= \beta$ is equivalent to that 
$$
\forall \tr \in \trs, \forall P \in \tr, \frac{R_n(\tr,P)}{R_m(\tr,P)}
\leq \beta
 ~ \quad \text{ and } \quad ~~
\forall \epsilon >0, \exists \tr \in \trs, \exists P \in \tr, \frac{R_n(\tr,P)}{R_m(\tr,P)}
\geq \beta - \epsilon.
$$
Therefore, as a byproduct of our analysis, we also determine the best and the worst triangle cases of ratios $\mathcal R_{n,m}(\tr)$, as well as the starting points that determine these ratios. In particular we show that 
(i) the extreme values of $\mathcal R_{1,3}(\tr)$ are attained as $\tr$ ranges between ``thin'' isosceles and equilateral triangles, and the worst starting point is the incenter, 
(ii) the extreme values of $\mathcal R_{2,3}(\tr)$ are attained as $\tr$ ranges between right isosceles and equilateral triangles, and the worst starting point is again the incenter, and 
(iii) the extreme values of $\mathcal R_{1,2}(\tr)$ are attained as $\tr$ ranges between equilateral and right isosceles triangles, and the worst starting point is the middle of the shortest altitude.


\subsection{Basic Terminology \& Some Useful Observations}
\label{sec: terminology obesrvations}

The length of segment $AB$ is denoted by $\norm{AB}$.
An arbitrary non-obtuse triangle will be usually denoted by $\triangle ABC$, which we assume is of bounded size. 
More specifically, without loss of generality, we often consider $\triangle ABC$ represented in the Cartesian plane in \textit{standard analytic form}, with $A=(p,q), B=(0,0)$ and $C=(1,0)$ (certain conditions imposed on $p,q$ for the triangle to be non-obtuse and for $AC$ to be the largest edge will be invoked when necessary). The following will be used repeatedly.

\begin{observation}
\label{obs: A,I coordinates angles}
For $\triangle ABC$ in standard analytic form, where $A=(p,q)$, we have that 
\begin{equation}
\label{equa: A coordinates}
p=\frac{\cos(B)\sin(C)}{\sin(B+C)}, 
~~
q=\frac{\sin(B)\sin(C)}{\sin(B+C)}.
\end{equation}
Therefore, for the incenter $I=(p_I,q_I)$ (the intersection of angle bisectors), we have 
\begin{equation}
\label{equa: I coordinates}
p_I=\frac{\cos(B/2)\sin(C/2)}{\sin((B+C)/2)}, 
~~
q_I=\frac{\sin(B/2)\sin(C/2)}{\sin((B+C)/2)}.
\end{equation}
\end{observation}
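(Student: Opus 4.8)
The plan is to obtain both coordinate pairs by intersecting two lines, and to exploit the fact that the incenter is the apex of a triangle structurally identical to $\triangle ABC$ but with its two base angles halved, so that the second pair of formulas falls out of the first by a mere substitution rather than a repeated computation.

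First I would derive the coordinates of $A$. Since $B=(0,0)$ and $C=(1,0)$, the edge $BC$ lies along the positive $x$-axis. The side $BA$ emanates from $B$ at interior angle $B$, hence lies on the line $y = x\tan(B)$. The side $CA$ emanates from $C$ at interior angle $C$ measured above the direction $CB$ (the negative $x$-axis), so it lies on the line $y = \tan(C)(1-x)$. Setting these equal gives $x(\tan B + \tan C) = \tan C$, and rewriting $\tan B + \tan C = \sin(B+C)/(\cos B\cos C)$ one reads off
$$
p = \frac{\cos(B)\sin(C)}{\sin(B+C)}, \qquad q = p\tan(B) = \frac{\sin(B)\sin(C)}{\sin(B+C)},
$$
which is exactly~(\ref{equa: A coordinates}).

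Then I would handle the incenter with the same template. The internal bisector from $B$ splits the angle $B$, so it makes angle $B/2$ with the positive $x$-axis and lies on $y = x\tan(B/2)$; likewise the bisector from $C$ lies on $y = \tan(C/2)(1-x)$. Their intersection is precisely the configuration of the previous step with $B,C$ replaced by $B/2, C/2$, so~(\ref{equa: I coordinates}) follows immediately by that substitution, with no further algebra required.

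The only point demanding care — and the closest thing to an obstacle in an otherwise routine argument — is fixing the angle conventions correctly: identifying which ray each interior angle opens toward at $B$ and at $C$, and hence the correct sign of each slope. Once the two line equations are set up properly, the computation for $A$ is a one-line application of the tangent addition formula, and the half-angle observation then removes any need to repeat it for $I$.
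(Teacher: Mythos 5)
Your proposal is correct and is essentially the paper's own argument: intersecting the lines $y=x\tan(B)$ and $y=(1-x)\tan(C)$ is algebraically the same computation as the paper's decomposition $\norm{BC}=\norm{BD}+\norm{CD}$ via the foot of the altitude, and both then obtain the incenter formulas by observing that $\triangle IBC$ is again in standard analytic form with the base angles halved.
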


\begin{proof}
Consider the projection $D$ of $A$ onto $BC$. We have that 
$\tan(B)=\norm{AD}/\norm{BD}$, as well as 
$\tan(C)=\norm{AD}/\norm{CD}$. Therefore
\begin{align*}
\norm{BC}
& = \norm{BD}+\norm{CD} \\
& = \norm{AD} \left( \frac1{\tan(B)}+\frac1{\tan(C)}\right) \\
& = \norm{AD} \frac{\sin(C)\cos(B)+\cos(B)\sin(B)}{\sin(B)\sin(C)} \\
& = \norm{AD} \frac{\sin(B+C)}{\sin(B)\sin(C)}.
\end{align*}
Since $\norm{BC}=1$, it follows that 
$$
q=\norm{AD}=
\frac{\sin(B)\sin(C)}{\sin(B+C)}.
$$ 
Finally, we have
$$
p=\norm{BD}
=\frac{\norm{AD}}{\tan(B)}
=
\frac{\cos(B)\sin(C)}{\sin(B+C)},
$$
and so~\eqref{equa: A coordinates} follows.
Note that~\eqref{equa: I coordinates} is obtained as an immediate corollary, since $\triangle IBC$ is in analytic form too. 
  \end{proof}

The next corollary is obtained after elementary algebraic manipulations. 
\begin{corollary}
\label{cor: incenter}
For $\triangle ABC$ in standard analytic form, its incenter $I=(p_I,q_I)$ is given by the formula 

$$
p_I= \frac{1}{2} \left(\sqrt{p^2+q^2}-\sqrt{(p-1)^2+q^2}+1\right),
~~
q_I=\frac{q}{\sqrt{p^2+q^2}+\sqrt{(p-1)^2+q^2}+1}.
$$
\end{corollary}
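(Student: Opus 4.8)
The plan is to bypass the angle-based formulas of Observation~\ref{obs: A,I coordinates angles} and instead invoke the classical weighted-vertex (barycentric) description of the incenter: for a triangle with vertices $A,B,C$ whose opposite side lengths are $a=\norm{BC}$, $b=\norm{CA}$, $c=\norm{AB}$, the incenter is the side-length-weighted average $I=(a\,A+b\,B+c\,C)/(a+b+c)$. In standard analytic form we read off the side lengths immediately: $a=\norm{BC}=1$, $b=\norm{CA}=\sqrt{(p-1)^2+q^2}$, and $c=\norm{AB}=\sqrt{p^2+q^2}$. Substituting $A=(p,q)$, $B=(0,0)$, $C=(1,0)$ into the weighted average then gives $q_I=\frac{q}{1+b+c}$, which is already exactly the claimed expression for $q_I$, and $p_I=\frac{p+c}{1+b+c}$, which still needs to be reconciled with the target formula.

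So the only real work is to verify the $x$-coordinate. Writing $u:=\sqrt{p^2+q^2}=c$ and $v:=\sqrt{(p-1)^2+q^2}=b$ for brevity, I would reduce the desired identity $\frac{p+u}{1+u+v}=\frac{1}{2}(u-v+1)$ to its cross-multiplied form
$$
2(p+u)=(u-v+1)(u+v+1)=(u+1)^2-v^2=u^2+2u+1-v^2.
$$
The key algebraic fact is that the two square roots differ in a controlled way: $u^2-v^2=(p^2+q^2)-\big((p-1)^2+q^2\big)=2p-1$. Plugging this in, the right-hand side collapses to $(2p-1)+2u+1=2(p+u)$, matching the left-hand side, which establishes the formula for $p_I$ and completes the corollary.

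The main point to get right is that the verification must be done by cross-multiplication rather than by directly simplifying $\frac{p+u}{1+u+v}$, since the surds $u,v$ do not combine individually; it is only the \emph{difference of their squares} that rationalizes nicely. There is no genuine obstacle here, which is precisely why the statement is phrased as following from ``elementary algebraic manipulations.'' As an alternative route consistent with the paper's phrasing (it introduces the corollary as a consequence of the preceding observation), one could instead start from the angle expressions $p_I=\frac{\cos(B/2)\sin(C/2)}{\sin((B+C)/2)}$ and $q_I=\frac{\sin(B/2)\sin(C/2)}{\sin((B+C)/2)}$, express $u=\norm{AB}=\frac{\sin C}{\sin(B+C)}$ and $v=\norm{CA}=\frac{\sin B}{\sin(B+C)}$ via the law of sines (using $\sin A=\sin(B+C)$ and $\norm{BC}=1$), and then apply the sum-to-product identities $\sin C-\sin B=2\cos\frac{B+C}{2}\sin\frac{C-B}{2}$ together with $\sin(B+C)=2\sin\frac{B+C}{2}\cos\frac{B+C}{2}$ to confirm $\frac{1}{2}(u-v+1)=p_I$. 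I would favor the barycentric argument, as it is shorter and self-contained, and relegate the trigonometric derivation to a remark.
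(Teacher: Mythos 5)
Your proof is correct. The paper gives no explicit derivation for this corollary --- it simply asserts that it follows from Observation~\ref{obs: A,I coordinates angles} ``after elementary algebraic manipulations,'' i.e.\ the intended route is to take $p_I=\frac{\cos(B/2)\sin(C/2)}{\sin((B+C)/2)}$, $q_I=\frac{\sin(B/2)\sin(C/2)}{\sin((B+C)/2)}$ and rewrite the half-angle expressions in terms of the side lengths. You instead invoke the classical barycentric description $I=(aA+bB+cC)/(a+b+c)$, read off $q_I$ immediately, and verify $p_I$ by the cross-multiplication $(c-b+1)(c+b+1)=(c+1)^2-b^2$ together with $c^2-b^2=2p-1$; all of this checks out, and it is consistent with the form $x=\frac{\gamma+p}{1+\beta+\gamma}$ that the paper itself uses later in the proof of Lemma~\ref{lem: R2costIncenter}. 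What your route buys is brevity and independence from the trigonometric bookkeeping of half-angles; what it costs is that the statement is no longer literally a corollary of the preceding observation but rests on an external (albeit completely standard) fact about the incenter as a side-length-weighted average of the vertices. Your concluding remark sketching the law-of-sines/sum-to-product derivation covers that gap if one insists on deriving everything from Observation~\ref{obs: A,I coordinates angles}, so nothing is missing either way.
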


 The cost of optimally visiting a collection of line segments $\mathcal C$ (triangle edges) with 1 robot starting from point $P$ is denoted by $d(P,\mathcal C)$. For example, when $\mathcal C=\{AB,BC\}$ we write $d(P,\{AB,BC\})$. When, for example, $\mathcal C=\{AB\}$ is a singleton set, we slightly abuse the notation and for simplicity write $d(P,AB)$ instead of $d(P,\{AB\})$. Note that if the projection $P'$ of $P$ onto the line defined by points $A,B$ lies in segment $AB$, then $d(P,AB)=\norm{PP'}$, and otherwise $d(P,AB)=\min\{\norm{PA},\norm{PB}\}$. The following observation follows immediately from the definitions, and the fact that we restrict our study to non-obtuse triangles. 

\begin{observation}
\label{obs: cost 1,2,3}
For any non-obtuse triangle $\tr =\triangle ABC$, and $P\in \tr$, we have 
\begin{enumerate}[(i)]
\item $R_3(\tr,P) = \max\{d(P,AB), d(P,BC),d(P,CA) \}$.
\item $R_2(\tr,P) = 
\min\left\{
\begin{array}{l}
\max\{ d(P,AB), d(P,\{BC,CA\}) \} \\
\max\{ d(P,BC), d(P,\{AB,CA\}) \} \\
\max\{ d(P,CA), d(P,\{BC,AB\}) \}
\end{array}
\right\}$.
\item $R_1(\tr,P) = d(P,\{AB,BC,CA\})$. 
\end{enumerate}
\end{observation}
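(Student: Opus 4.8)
The plan is to reduce all three claims to a single structural principle about single-robot visitation costs and then dispatch the three cases separately. Part (iii) requires nothing beyond unwinding definitions: $R_1(\tr,P)$ is by definition the optimal cost of one robot starting at $P$ and visiting every edge, which is exactly what the notation $d(P,\{AB,BC,CA\})$ denotes. The engine for (i) and (ii) is the following elementary fact, which I would state and prove first. Fix any feasible solution (with any number of robots) and let $t$ be its makespan. For a single robot, the set $S$ of edges it has touched by time $t$ requires time at least $d(P,S)$, simply because $d(P,S)$ is by definition the minimum length of a trajectory from $P$ meeting every edge of $S$; conversely, one robot can touch all edges of any set $S$ in time exactly $d(P,S)$ by following an optimal such trajectory. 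In particular, the first time \emph{any} robot touches a fixed edge $e$ is at least $d(P,e)$, since reaching $e$ from $P$ requires traversing at least $d(P,e)$.

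For (i), the lower bound $R_3(\tr,P)\ge \max\{d(P,AB),d(P,BC),d(P,CA)\}$ follows because each of the three edges must be touched, and touching edge $e$ costs at least $d(P,e)$; hence the makespan is at least the largest of these three quantities. For the matching upper bound, assign one of the three robots to each edge and let it follow a shortest trajectory to that edge: robot $i$ finishes at time $d(P,e_i)$, so the makespan of this feasible solution equals $\max_e d(P,e)$, giving equality.

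For (ii), I would argue that any feasible two-robot solution induces a cover $S_1\cup S_2=\{AB,BC,CA\}$, where $S_i$ is the set of edges touched by robot $i$ by the makespan time. By the principle above the makespan is at least $\max\{d(P,S_1),d(P,S_2)\}$, and conversely every such partition is realizable with exactly this makespan by letting each robot follow an optimal trajectory for its own set. Hence $R_2(\tr,P)=\min \max\{d(P,S_1),d(P,S_2)\}$, the minimum ranging over covers of the three edges. Since $d(P,\cdot)$ is monotone under inclusion, it suffices to minimize over partitions, and the only partitions of a three-element set $E=\{AB,BC,CA\}$ into two blocks are, up to symmetry, the three ``singleton versus pair'' splits together with the degenerate ``empty versus full'' split. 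The degenerate split contributes $d(P,E)$, which is dominated by each genuine split because $\max\{d(P,e),d(P,E\setminus e)\}\le d(P,E)$ for every edge $e$; therefore the minimum is attained among the three listed splits, which is exactly the stated formula.

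The only point requiring care, and the main (minor) obstacle, is the cooperation argument: ruling out that letting two robots share work on a common edge, or move along coordinated non-shortest paths, could beat a clean partition into independent shortest trajectories. This is precisely what the lower-bound half of the principle above excludes, since it bounds the makespan below by the cost of the edge set each robot is individually responsible for. Non-obtuseness enters only to guarantee that the perpendicular foot from any interior point $P$ onto each side lands within that side, so that each $d(P,e)$ equals the perpendicular distance from $P$ to the line of $e$ and the ``one robot per line'' picture is literally distance-to-a-line; the three formulas themselves hold as stated through the $d(P,\cdot)$ notation regardless.
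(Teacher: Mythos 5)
Your proposal is correct and matches the paper's (essentially unwritten) justification: the paper asserts the observation "follows immediately from the definitions," and its commented-out sketch uses exactly your argument -- the per-edge lower bound with the projection/assignment upper bound for $R_3$, the pigeonhole/partition reasoning for $R_2$, and the definitional unwinding for $R_1$. Your version merely spells out the cover-versus-partition and monotonicity details more carefully than the paper bothers to.
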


Motivated by our last observation, we also introduce notation for the cost of \emph{ordered visitations}. Starting from point $P$, we may need to visit an \emph{ordered list} of (2 or 3) line segments in a specific order. For example, we write $d(P,[AB,BC,AC])$ for the optimal cost of visiting the list of segments $[AB,BC,AC]$, in this order, with 1 robot. As we will be mainly concerned with $\triangle ABC$  edge visitations, and due to the already introduced standard analytic form, we refer to the trajectory realizing $d(P,[AB,BC,AC])$ as the (optimal) \emph{LDR strategy} (L for ``Left'' edge $AB$, D for ``Down'' edge $BC$, and R for ``Right'' edge AC). We introduce analogous terminology for the remaining 5 permutations of the edges, i.e.\ LRD, RLD, RDL, DRL, DLR. Note that it may happen that in an optimal ordered visitation, robot visits a vertex of the triangle edges. In such a case we interpret the visitation order of the incident edges arbitrarily. 
For ordered visitation of 2 edges, we introduce similar terminology pertaining to (optimal) LD, LR, RL, RD, DR and DL strategies. 

In order to obtain the results reported in Table~\ref{tbl: main contributions}, it is necessary to subdivide any triangle $\tr$ into sets of points that admit the same optimal ordered visitations (e.g.\ all points $P$ in which an optimal $R_1(\tr,P)$ strategy is LRD). 
For $n\in \{2,3\}$ robots, the subdivision is also with respect to the cost $R_n(\tr,P)$. Specifically for $n=2$, the subdivision is also with respect to whether the cost $R_2(\tr,P)$ is determined by the robot that is visiting one or two edges (see Observation~\ref{obs: cost 1,2,3}). 
We will refer to these subdivisions as the \emph{$R_1,R_2,R_3$ regions}. For each $n\in\{1,2,3\}$, the $R_n$ regions will be determined by 
collection (loci) of points between neighbouring regions that admit more than one optimal ordered visitations.

Angles are read counter-clockwise, so that for example for $\triangle ABC$ in standard analytic form, we have $\angle A = \angle BAC$. For aesthetic reasons, we may abuse notation and drop symbol $\angle$ from angles when we write trigonometric functions.  
Visitation trajectories will be denoted by a list of points $\langle A_1,\ldots,A_n\rangle$ ($n\geq 2$), indicating a movement along line segments between consecutive points. Hence, the cost of such trajectory would be $\sum_{i=2}^n\norm{A_{i}A_{i-1}}$.


\section{Preliminary Results}
\label{sec: preliminaries}

\subsection{Optimal Visitations of Two Triangle Edges}
\label{sec: visiting 1,2 edges}

We consider the simpler problem of visiting two distinguished edges of a triangle $\tr=ABC$, starting from a point within the triangle. The following preliminary observations will be useful, and the reader may refer to Figures~\ref{fig: Visit2Edges-i} and~\ref{fig: Visit2Edges-ii}. 
\begin{figure}
\centering
\includegraphics[width=5.5cm]{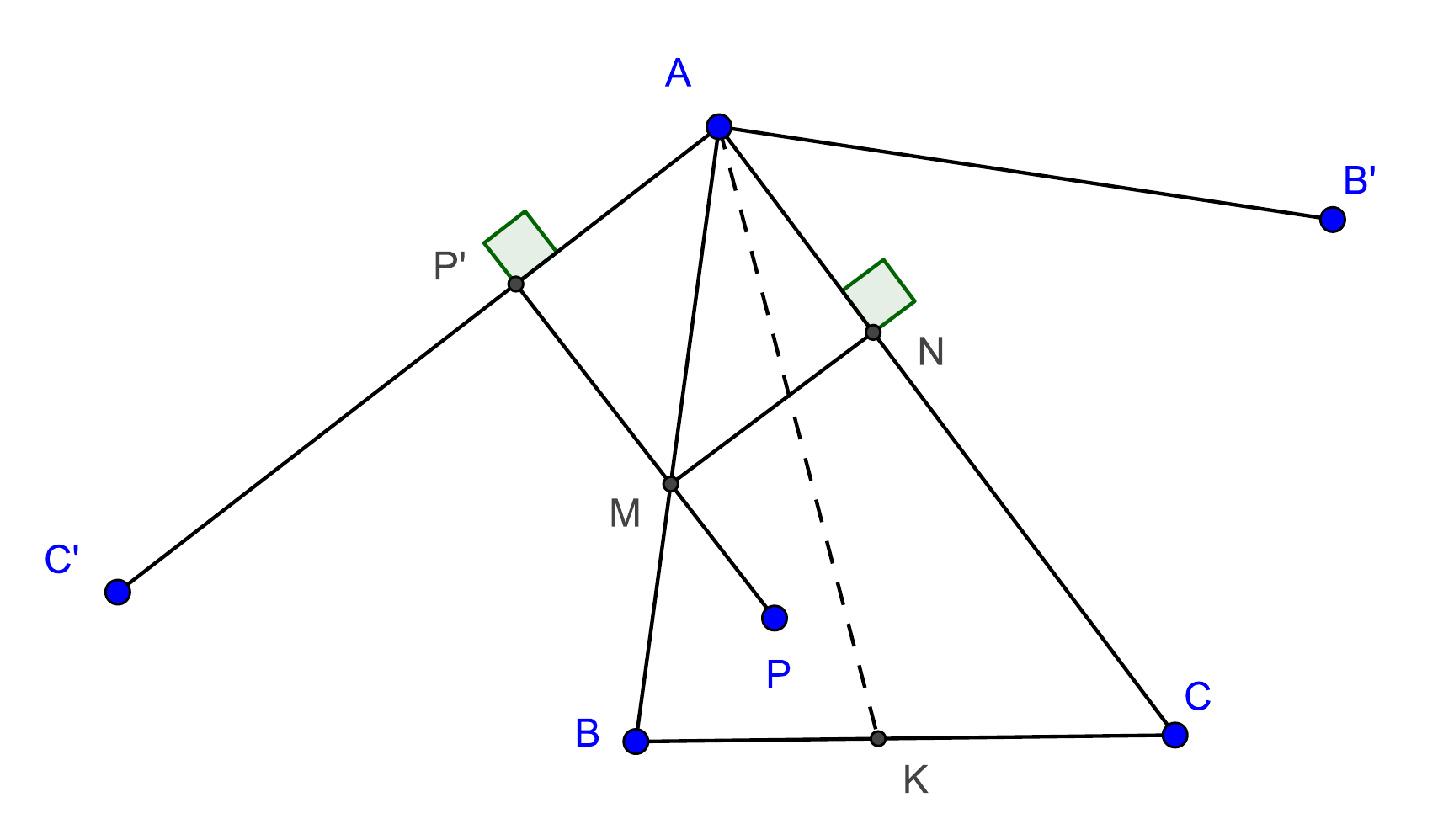}
\caption{
Optimal trajectory for visiting $\{AB,AC\}$ for $\angle A\leq \pi/3$.}
\label{fig: Visit2Edges-i}
\end{figure}
\begin{figure}
\centering
\includegraphics[width=7cm]{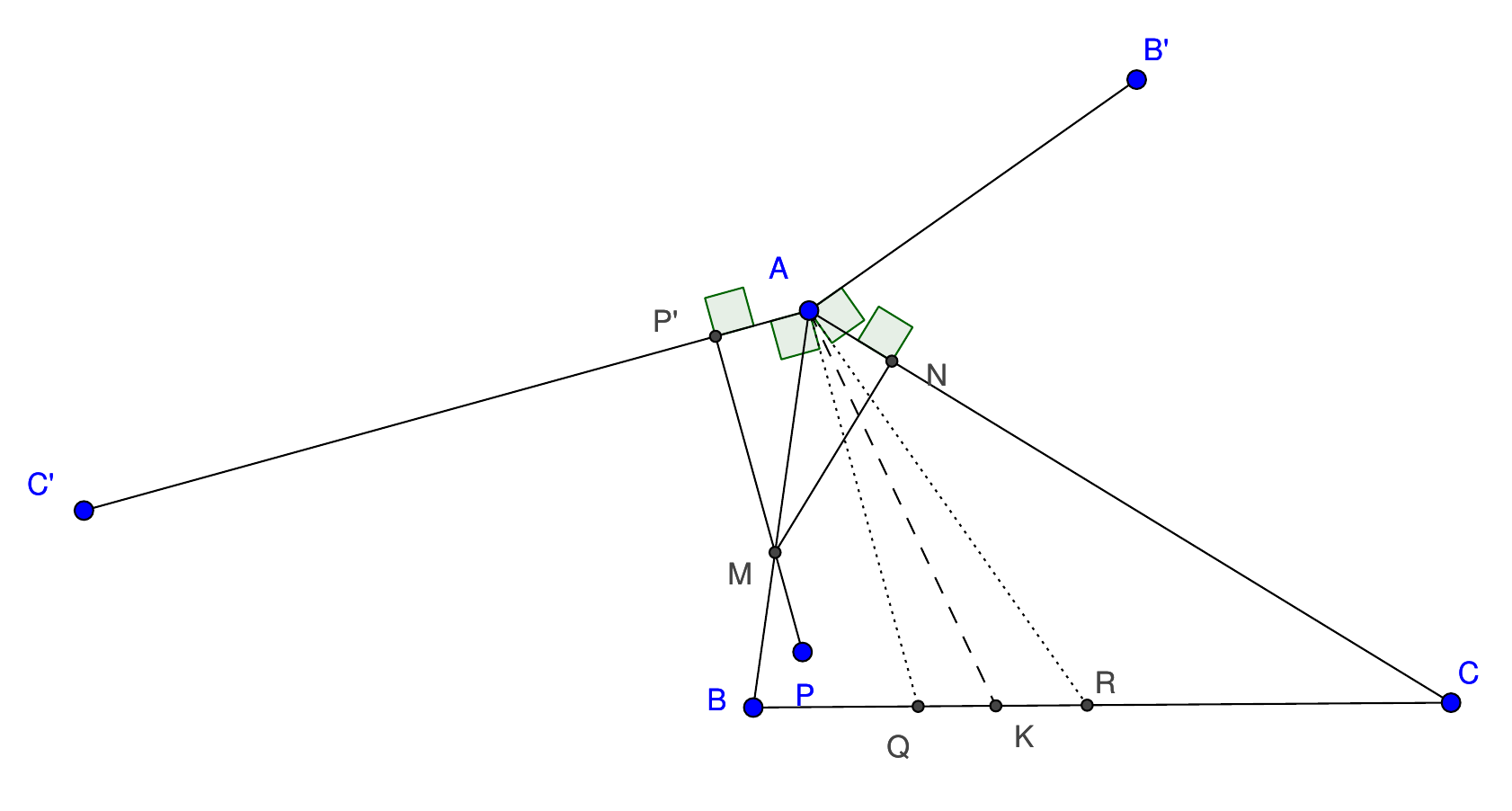}
\caption{
Optimal trajectory for visiting $\{AB,AC\}$ for $\angle A\geq \pi/3$, starting outside the optimal bouncing subcone.}
\label{fig: Visit2Edges-ii}
\end{figure}
Consider some $P\in \tr$,
and let $K$ be on $BC$ such that $AK$ is the angle bisector of $A$. Any point on $AK$ is equidistant from $AB,AC$. Moreover, for any $P\in ABK$ visiting $AB$ is not more costly than visiting $AC$. 

Now we consider the problem of visiting $AB,AC$ starting from $P\in \tr$. 
Let $C',B'$ be the reflections of $C,B$ around $AB,AC$, respectively. 
Clearly, 
\begin{equation}
\label{equa: cost 2 edges}
d(P,\{AB,AC\}) = \min\{ d(P,AC'), d(P,AB')\}.
\end{equation}
Since, in particular, $AK$ is also the angle bisector of $C'AB'$, we conclude that if $P \in ABK$, then $d(P,\{AB,AC\})$ is determined by visiting $AB$ no later than $AC$, that is, 
$$d(P,\{AB,AC\})=d(P,[AB,AC]).$$
In the remaining of this section, we fix such a $P$.

For the specifics of the optimal trajectory, we need some additional terminology. 
When $\angle A \geq \pi/3$, we define the concept of its \emph{optimal bouncing subcone}, which is defined as a cone of angle $3\angle A - \pi$ and tip $A$, so that $\angle A$ and the subcone have the same angle bisector. When $\angle A=\pi/3$, then the optimal bouncing subcone is a ray with tip $A$ that coincides with the angle bisector of $\angle A$.  Whenever $\angle A < \pi/3$ we define its optimal bouncing subcone as the degenerate empty cone. 

\begin{observation}
\label{obs: Visit2Edges-ii}
If $P$ is in the optimal bouncing subcone of $\angle A$, then 
$d(P,\{AB,AC\}) = \norm{PA}$. 
\end{observation}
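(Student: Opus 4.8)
The plan is to use the reflection identity~\eqref{equa: cost 2 edges} to convert the two-edge cost into two single-segment costs, and then to argue that for $P$ in the subcone each of these single-segment costs equals $\norm{PA}$. Since $A$ is a common endpoint of the reflected segments $AC'$ and $AB'$, we trivially have $d(P,AC')\le \norm{PA}$ and $d(P,AB')\le \norm{PA}$, so by~\eqref{equa: cost 2 edges} it always holds that $d(P,\{AB,AC\})\le \norm{PA}$. Consequently $\min\{d(P,AC'),d(P,AB')\}=\norm{PA}$ \emph{if and only if} both terms equal $\norm{PA}$, and this is exactly what I would establish.

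First I would set up angular coordinates with $A$ at the origin and the bisector of $\angle A$ along a fixed axis, so that the rays $AB$ and $AC$ point in directions $+\angle A/2$ and $-\angle A/2$. Reflecting a ray of direction $\theta$ across a line of direction $\phi$ sends it to direction $2\phi-\theta$; applying this, the ray $AC'$ (the reflection of $AC$ across $AB$) has direction $3\angle A/2$, and symmetrically $AB'$ has direction $-3\angle A/2$. The key elementary fact I would record is that, for a segment with endpoint $A$, the nearest point to $P$ is $A$ exactly when $P$ sees the segment at a non-acute angle at $A$; in our notation, $d(P,AC')=\norm{PA}$ iff $\angle PAC'\ge \pi/2$, and likewise $d(P,AB')=\norm{PA}$ iff $\angle PAB'\ge \pi/2$. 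This follows directly from the description of $d(P,\cdot)$ for a single segment given before Observation~\ref{obs: cost 1,2,3}.

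It then remains to verify the two angle conditions for $P$ in the subcone. Writing $\theta_P$ for the direction of $P$ measured from the bisector, membership in the subcone (a cone of aperture $3\angle A-\pi$ about the bisector) is precisely $|\theta_P|\le 3\angle A/2-\pi/2$. Then $\angle PAC' = 3\angle A/2-\theta_P \ge 3\angle A/2-(3\angle A/2-\pi/2)=\pi/2$, and symmetrically $\angle PAB'=\theta_P+3\angle A/2\ge \pi/2$, so both single-segment costs equal $\norm{PA}$ and the claim follows from the displayed minimum. I would also remark that the half-aperture $3\angle A/2-\pi/2$ is non-negative exactly when $\angle A\ge \pi/3$, which is why the subcone is empty for $\angle A<\pi/3$ and degenerates to the bisector ray at $\angle A=\pi/3$; no separate case analysis is thus needed.

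The routine parts are the reflection-angle computation and the single-segment distance characterization. The one step that needs genuine care is the angle bookkeeping that identifies the locus $\{|\theta_P|\le 3\angle A/2-\pi/2\}$ with the \emph{defined} subcone of aperture $3\angle A-\pi$ sharing the bisector of $\angle A$; getting the reflected directions $\pm 3\angle A/2$ right (and hence the factor $3$ in the aperture) is where the argument could most easily go astray.
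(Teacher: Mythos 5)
Your proposal is correct and follows essentially the same route as the paper: both reduce $d(P,\{AB,AC\})$ via the reflection identity~\eqref{equa: cost 2 edges} to $\min\{d(P,AC'),d(P,AB')\}$ and then show each term equals $\norm{PA}$ for $P$ in the subcone. The only difference is that the paper asserts without computation that the cone bounded by the perpendiculars at $A$ to $AC'$ and $AB'$ coincides with the optimal bouncing subcone, whereas you verify this explicitly via the reflected ray directions $\pm 3\angle A/2$ and the resulting aperture $3\angle A-\pi$ --- a worthwhile clarification, but not a different argument.
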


\begin{proof}
Consider a line passing through $A$ that is perpendicular to $AC'$ that intersects $BC$ at $Q$ (see Figure~\ref{fig: Visit2Edges-ii}). Consider also a line passing through $A$ that is perpendicular to $AB'$ that intersects $BC$ at $R$. Then, the cone with tip $A$ and angle $\angle QAR$ is the optimal bouncing subcone of $\angle A$. 
Let $P$ be a point within the subcone. 
By construction, the projection of $P$ onto the line defined by points $A,C'$ falls outside the line segment $AC'$, and similarly for points $A,B'$. 
Therefore, for any point $P$ within the subcone, we have that $d(P,AB')=d(P,AC')=\norm{PA}$, so combined with~\eqref{equa: cost 2 edges}, the claim follows. 
  \end{proof}

For a point $P\in \triangle ABC$ outside the optimal bouncing subcone of $\angle A$, we define the (two) \emph{optimal bouncing points $M,N$ of the ordered $[AB,AC]$ visitation} as follows. Let $P'$ be the projection of $P$ onto $AC'$. Then, $M$ is the intersection of $PP'$ with $AB$ and $N$ is the projection of $M$ onto $AC$. Note that equivalently, $M,N$ are determined uniquely by requiring that (i) $\angle BMP = \angle NMA$, and (ii) $\angle ANM = \pi/2$.

\begin{observation}
\label{obs: Visit2Edges-i}
If $P$ is outside the optimal bouncing subcone of $\angle A$, then 
$d(P,\{AB,AC\}) = \norm{PM}+\norm{MN}$, where $M,N$ are the optimal bouncing points of ordered $[AB,AC]$ visitation. 
\end{observation}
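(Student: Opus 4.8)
The plan is to reduce the claim to the already-established identity $d(P,\{AB,AC\})=d(P,[AB,AC])=d(P,AC')$ and then to unfold this single distance through one reflection across line $AB$. First I would record the standing reflection facts. Since $C'$ is the reflection of $C$ across $AB$ and $A$ lies on $AB$, the reflection across $AB$ is an isometry that carries the whole segment $AC$ onto $AC'$, fixes every point of $AB$, and preserves lengths; in particular any point of $AB$ is equidistant from $AC$ and $AC'$. For $P\in ABK$, visiting $AB$ no later than $AC$ is optimal, so by~\eqref{equa: cost 2 edges} the relevant branch of the minimum is $d(P,AC')$, and it suffices to show $d(P,AC')=\norm{PM}+\norm{MN}$.

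Next I would argue that, because $P$ lies \emph{outside} the optimal bouncing subcone, the foot $P'$ of the perpendicular from $P$ onto the line through $A,C'$ genuinely lands on the segment $AC'$, so that $d(P,AC')=\norm{PP'}$. This is exactly the complementary regime to Observation~\ref{obs: Visit2Edges-ii}: the boundary ray $AQ$ of the subcone was drawn perpendicular to $AC'$, so a point belongs to the subcone precisely when its projection onto $AC'$ falls on the far side of $A$ (making $A$ itself the nearest point of the segment); outside the subcone the projection falls on the $C'$-side of $A$, hence inside $AC'$. Verifying that $P'$ truly lies between $A$ and $C'$, and symmetrically that the crossing point $M$ lands on the segment $AB$ rather than on its extension, is the step that actually uses the hypotheses $P\in\triangle ABC$, $P\in ABK$, and the non-obtuseness of $\tr$; I expect this case analysis to be the main obstacle, whereas the remaining unfolding is clean.

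With $d(P,AC')=\norm{PP'}$ in hand, I would set $M=PP'\cap AB$ and note that, since $P$ is on the $AC$-side of line $AB$ while $P'\in AC'$ is on the opposite side, the segment $PP'$ crosses $AB$ with $M$ strictly between $P$ and $P'$; hence $\norm{PP'}=\norm{PM}+\norm{MP'}$. It then remains to identify $\norm{MP'}$ with $\norm{MN}$. Because $PP'\perp AC'$ and $M$ lies on $PP'$, the segment $MP'$ is perpendicular to $AC'$; reflecting across $AB$ sends $AC'$ to $AC$, sends $P'$ to the foot of the perpendicular dropped from $M$ onto $AC$ (which is exactly the point $N$ in the definition of the optimal bouncing points), and preserves length, so $\norm{MP'}=\norm{MN}$. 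Combining the two displays yields $d(P,\{AB,AC\})=\norm{PP'}=\norm{PM}+\norm{MN}$, as claimed. As a final consistency check I would confirm that this construction automatically satisfies the defining conditions $\angle BMP=\angle NMA$ and $\angle ANM=\pi/2$ of $M,N$.
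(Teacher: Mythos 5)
Your proposal is correct and follows essentially the same route as the paper's proof: reduce to $d(P,[AB,AC])=\norm{PP'}$ via~\eqref{equa: cost 2 edges} and the assumption $P\in ABK$, then observe $\norm{PP'}=\norm{PM}+\norm{MP'}$ and $\norm{MP'}=\norm{MN}$ by reflecting across $AB$. You simply spell out the details (that $P'$ lands inside segment $AC'$ when $P$ is outside the subcone, and that $M$ separates $P$ from $P'$) which the paper's three-sentence proof leaves implicit.
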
 

\begin{proof}
Since $P$ in $\triangle ABK$, we have that $d(P,\{AB,AC\})=d(P,[AB,AC])$. Also by~\eqref{equa: cost 2 edges}, we have that $d(P,[AB,AC])=\norm{PP'}$. The claim follows by noticing that $\norm{MP'}=\norm{MN}$.
  \end{proof}

\subsection{Optimal (Ordered) Visitation of Three Triangle Edges}
\label{sec: visiting 3 edges}

In this section we discuss optimal LRD visitations of non-obtuse $\triangle ABC$, together with optimality conditions (recall that optimality refers to the cost incured by one robot visiting all edges). 
Optimality conditions for the remaining 5 ordered visitations are obtained similarly. 
In order to determine the optimal LRD visititation, we obtain reflection $C'$ of $C$ across $AB$, and reflection $B'$ of $B$ across $C'A$, see also Figure~\ref{fig: 3edgesPredereminedOrderSmallC}.  
\begin{figure}
\centering
\includegraphics[width=4cm]{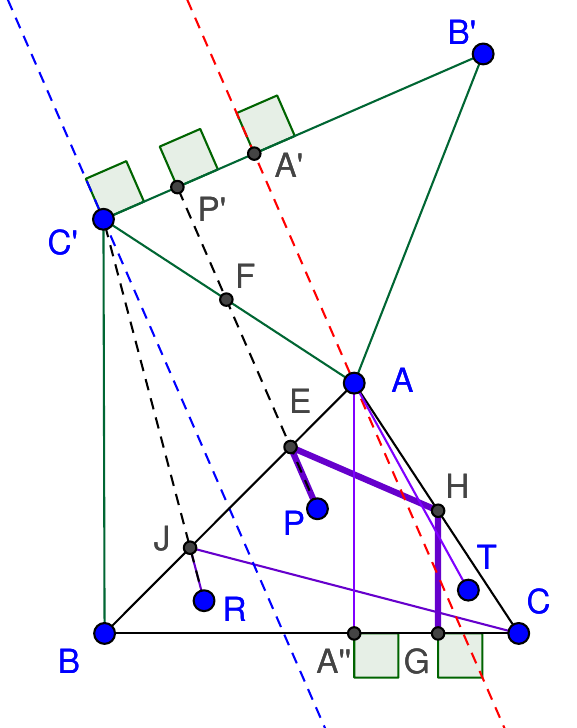}
\caption{
Arbitrary non-obtuse $\triangle ABC$ shown with its LRD bounce indicator line (blue dotted line) and its LRD subopt indicator line (red dotted line).}
\label{fig: 3edgesPredereminedOrderSmallC}
\end{figure}

From $C'$ and $A$, we draw a lines $\epsilon, \zeta$, both perpendicular to $C'B'$ which may (or may not) intersect $\triangle ABC$. 
We refer to line $\epsilon$ as the \emph{LRD bounce indicator line}.
We also refer to line $\zeta$ as the \emph{LRD subopt indicator line}.
Each of the lines identify a halfspace on the plane.
The halfspace associated with $\epsilon$ on the side of vertex $A$ will be called the \emph{positive halfspace of the $LRD$ bounce indicator line} or in short the \emph{positive LRD bounce halfspace}, and its complement will be called the \emph{negative LRD bounce halfspace}. 
The halfspace associated with $\zeta$ on the side of vertex $B$ will be called the \emph{positive halfspace of the $LRD$ subopt indicator line}, or in short the \emph{positive LRD subopt halfspace}, and its complement will be called the \emph{negative LRD subopt halfspace}. 

For a point $P$ in the positive LRD bounce and subopt halfspaces, let $P'$ be its projection onto $C'B'$. Let $E,F$ be the intersections of $PP'$ with $AB,AC'$, respectively. Let also $H$ be the reflection of $F$ across $AB$, and let $G$ be the projection of $H$ onto $BC$. 
Points $E,H,G$ will be called the \emph{optimal LRD bouncing points} for point $P$. The points are also uniquely determined by requiring that $\angle BEP = \angle HEA$ and that $HG$ is perpendicular to $BC$. 
For a point $R$ in the negative LRD bounce halfspace and in the positive subopt halfspace, let $J$ be the intersection of $RC'$ with $AB$. 
Point $J$ will be called the \emph{degenerate optimal LRD bouncing point}, which is also uniquely determined by the similar bouncing rule $\angle BJR = \angle CJA$.
Finally, let $A',A''$ be the projection of $A$ onto $B'C',BC$, respectively. 

The next lemma refers to such points $P,R$ together with the construction of Figure~\ref{fig: 3edgesPredereminedOrderSmallC}. Its proof follows immediately by noticing that 
the optimal LRD visitation is in 1-1 correspondence with the optimal visitation of segment $B'C'$ using a trajectory that passes from segment $AB$. 
\begin{lemma}
\label{lem: +bounce +subopt ordered visitiation}
The optimal LRD visitation trajectory, with starting points $P,R,T$, is: 
\begin{itemize}
\item Trajectory $\langle P,E,H,G\rangle$, provided that $P$ is in the positive LRD bounce and subopt halfspaces,
\item Trajectory $\langle R,J,C \rangle$, provided that $R$ is in the negative LRD bounce halfspace and in the positive subopt halfspace, 
\item Trajectory $\langle T,A,A''\rangle$, provided that $T$ is in the negative LRD subopt halfspace (see Figure~\ref{fig: 3edgesPredereminedOrderSmallC}). 
\end{itemize}
\end{lemma}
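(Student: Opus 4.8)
The plan is to prove the three claimed trajectories by reducing the ordered LRD visitation to a single \emph{constrained segment-visitation} problem via two successive unfoldings (reflections), exactly as foreshadowed by the correspondence stated just before the lemma. Recall that $C'$ is the reflection of $C$ across $AB$ and $B'$ the reflection of $B$ across $C'A$. Since reflection across line $AB$ is an isometry fixing every point of $AB$, it carries edge $AC$ to $AC'$ and edge $BC$ to $BC'$; since reflection across line $C'A$ fixes $C'$ and carries $B$ to $B'$, it carries $BC'$ to $B'C'$. First I would record the resulting bijection: a trajectory that starts at a point $X$, touches $AB$, then $AC$, then reaches $BC$, is mapped---by reflecting the portion after the first contact across $AB$ and the portion after the second contact across $C'A$---to a path of equal length that starts at $X$, meets line $AB$, and ends on segment $B'C'$. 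Consequently $d(X,[AB,AC,BC])$ equals the length of the shortest path from $X$ to segment $B'C'$ subject to the constraint that it meet segment $AB$, and conversely any such shortest path folds back uniquely to the optimal LRD trajectory. This is the promised one-to-one correspondence, and every claimed trajectory is obtained by folding back a concrete shortest path.

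Next I would analyse the unconstrained shortest path from $X$ to segment $B'C'$: it is the perpendicular from $X$ when the foot $X'$ lies on the segment, and otherwise the segment to the nearer endpoint $C'$ or $B'$. The bounce indicator line $\epsilon$ (through $C'$, perpendicular to $C'B'$) is precisely the locus separating the points whose projection onto line $B'C'$ lands on the $B'$-side of $C'$ (the positive LRD bounce halfspace) from those whose projection overshoots $C'$ (the negative one). The subopt indicator line $\zeta$ (through $A$, perpendicular to $C'B'$, i.e.\ line $AA'$) plays the analogous role for the constraint ``meets segment $AB$'': since $A$ is the endpoint of the mirror segment $AB$ and is fixed by both reflections, the unconstrained optimizer crosses the mirror inside segment $AB$ exactly for $X$ in the positive LRD subopt halfspace (the $B$-side of $\zeta$), whereas for $X$ in the negative subopt halfspace it would cross line $AB$ beyond $A$, so the constrained optimum is forced to touch the mirror at the endpoint $A$.

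With these two dichotomies I would then read off the three bullets. If $P$ is in the positive bounce and positive subopt halfspaces, the perpendicular $PP'$ both lands on $B'C'$ and crosses $AB$ within the segment; folding $P'$ back across $C'A$ and then $AB$ returns $G$ on $BC$, folding the second contact returns $H$ on $AC$ (the reflection across $AB$ of $F=PP'\cap AC'$), and the first contact $E=PP'\cap AB$ is unchanged, giving $\langle P,E,H,G\rangle$ with $\norm{FP'}=\norm{HG}$ and $HG\perp BC$. If $R$ is in the negative bounce but positive subopt halfspace, the nearest point of $B'C'$ is the endpoint $C'$, whose fold-back is the vertex $C$ (as $C'$ is the image of $C$ and lies on the second mirror); the segment $RC'$ meets $AB$ at the single bounce point $J$, yielding $\langle R,J,C\rangle$, and the reflection identity $\angle BJR=\angle CJA$ is exactly the bouncing rule defining $J$. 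Finally, if $T$ is in the negative subopt halfspace, the constraint pins the path to the vertex $A$; but reaching $A$ simultaneously visits $AB$ and $AC$, so it only remains to visit $BC$ from $A$, whose cost is the perpendicular $\norm{AA''}$ (with $A''\in BC$ because $\tr$ is non-obtuse), giving $\langle T,A,A''\rangle$. Observe that this case is governed by $\zeta$ alone and overrides the bounce distinction, since once the path is fixed at $A$ the second reflection is irrelevant.

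The step I expect to be the main obstacle is the rigorous treatment of the ``meets segment $AB$'' constraint and its interaction with the endpoint-$C'$ degeneracy: I must verify that for $X$ in the positive subopt halfspace the unconstrained optimizer already respects the constraint (so no correction is needed), that in the negative subopt halfspace the constrained optimum genuinely touches at $A$ rather than at an interior point of $AB$, and that the projection of an interior point of $\tr$ never overshoots the $B'$ endpoint, so that $C'$ is the only relevant endpoint; here I would invoke the standing assumptions that $\tr$ is non-obtuse and that $AC$ is its largest edge. Everything else reduces to the isometry of reflections and elementary point-to-segment distance facts, so the argument is indeed short once the correspondence and the two indicator lines are in place.
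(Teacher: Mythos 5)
Your proposal is correct and takes essentially the same route as the paper: the paper's entire proof is the single observation that the optimal LRD visitation is in 1--1 correspondence (via the two reflections across $AB$ and $C'A$) with the optimal visitation of segment $B'C'$ by a trajectory passing through segment $AB$, and your argument is a faithful, more detailed unpacking of exactly that unfolding correspondence together with the case analysis the two indicator lines induce. The endpoint and constraint subtleties you flag at the end are real but are left implicit in the paper as well.
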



\section{Computing the $R_n$ Regions, $n=1,2,3$}
\label{sec: Rn regions}

By Observations~\ref{obs: Visit2Edges-ii},~\ref{obs: Visit2Edges-i} and Lemma~\ref{lem: +bounce +subopt ordered visitiation}, we see that optimal visitations of 2 or 3 edges have cost equal to (i) the distance of the starting point to a line (reflection of some triangle edge), or (ii) the distance of the starting point to some point (triangle vertex) or (iii) the distance of the starting point to some triangle vertex plus 
the length of some triangle altitude. 
In this section we describe the $R_n$ regions of certain triangles, $n \in \{1,2,3\}$. For this, we compare optimal ordered strategies, and the subdivisions of the regions are determined by loci of points that induce ordered trajectories of the same cost. As these costs are of type (i), (ii), or (iii) above (and considering all their combinations) the loci of points in which two ordered strategies have the same cost will be either some line (line bisector or angle bisector), or some conic section (parabola or hyperbola).


\subsection{Triangle Visitation with 3 Robots - The $R_3$ Regions}
\label{sec: regions 3}

Consider $\tr \in \trs$ with vertices $A,B,C$. For every $P\in \tr$, any trajectories require time at least the maximum distance of $P$ from all edges, in order to visit all of them. This bound is achieved by having all robots moving along the projection of $P$ onto the 3 edges, and so we have 
$$
R_3(P) = \max \{ d(P,AB),d(P,BC), d(P,CA) \},
$$
as also in Observation~\ref{obs: cost 1,2,3}.
Next we show how to subdivide the region of $\tr$ with respect to  which of the 3 projections is responsible for the optimal visitation cost. For this, we let $I$ denote the incenter (the intersection of angle bisectors) of $\tr$. Let also $K,L,M$ be the intersections of the bisectors with edges $BC, CA$ and $AB$, respectively, see also Figure~\ref{fig: R3 regions}. 

 \begin{figure}[h!]
\centering
  \includegraphics[width=5.5cm]{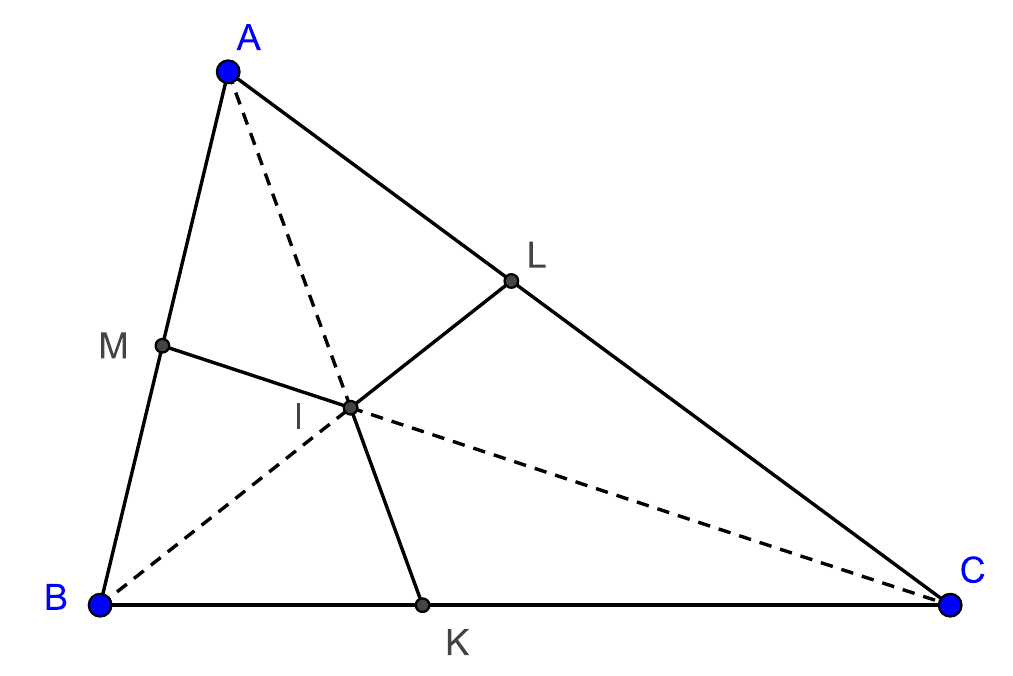}
\caption{
The $R_3$ regions of an arbitrary non-obtuse $\triangle ABC$. $AK, BL, CM$ are the angle bisectors of $\angle A, \angle B, \angle C$, respectively. Recall that the incenter $I$ is equidistant from all triangle edges.}
\label{fig: R3 regions}
\end{figure}

\begin{lemma}
\label{lem: r3 regions}
For every starting point $P \in \tr$, we have that 
$$
R_3(\tr,P) =
\left\{
\begin{array}{ll}
d(P,AB)& \mbox{, provided that } P\in  CLIK\\
d(P,BC)& \mbox{, provided that } P\in  AMIL\\
d(P,CA)& \mbox{, provided that } P\in  BKIM
\end{array}
\right..
$$
\end{lemma}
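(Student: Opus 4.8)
The plan is to start from the formula $R_3(\tr,P) = \max\{d(P,AB), d(P,BC), d(P,CA)\}$ of Observation~\ref{obs: cost 1,2,3} and simply identify, in each of the three quadrilaterals, which of the three terms realizes the maximum. Since $\tr$ is non-obtuse, for every interior $P$ the foot of the perpendicular from $P$ to each edge-line lands on the corresponding segment, so each $d(P,\cdot)$ is exactly the perpendicular distance from $P$ to the line through that edge. Thus the whole lemma reduces to three pairwise comparisons of perpendicular distances.

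The key tool is the comparison of the distances to two edges sharing a vertex. Fix vertex $B$ with edges $BA,BC$ and internal bisector $BL$, and let $\phi\in[0,\angle B]$ be the angle between rays $BP$ and $BC$. Then $d(P,BC)=\norm{BP}\sin\phi$ and $d(P,AB)=\norm{BP}\sin(\angle B-\phi)$, which agree exactly on the bisector $\phi=\angle B/2$. On the side of $BL$ adjacent to edge $BC$ we have $\phi<\angle B/2$, hence $\phi<\angle B-\phi$; because $\angle B\le\pi/2$ (this is where non-obtuseness enters) both arguments lie in $[0,\pi/2]$ where $\sin$ is increasing, giving $d(P,BC)\le d(P,AB)$. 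In words: on the side of an internal bisector adjacent to one edge, that edge is the closer one, so the other edge is the farther one. The loci where two of the three distances agree are therefore exactly the three internal bisectors $AK,BL,CM$, which concur at $I$ and cut $\tr$ into the three quadrilaterals of the statement.

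Now consider $P\in CLIK$. Its boundary contains the bisector segments $IL\subset BL$ and $IK\subset AK$, so $CLIK$ lies on one side of each of these two bisectors. Since $C$ and $K$ lie in the sub-triangle $\triangle BLC$, the region $CLIK$ sits on the edge-$BC$ side of $BL$, whence $d(P,AB)\ge d(P,BC)$; and since $C$ and $L$ lie in the sub-triangle $\triangle ACK$, it sits on the edge-$CA$ side of $AK$, whence $d(P,AB)\ge d(P,CA)$. Hence $d(P,AB)$ realizes the maximum and $R_3(\tr,P)=d(P,AB)$ throughout $CLIK$. The remaining two cases follow verbatim under the cyclic relabelling $A\to B\to C\to A$ (which sends $K\to L\to M\to K$ and permutes the three quadrilaterals), and since these quadrilaterals tile $\tr$ the case analysis is exhaustive.

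The main obstacle I expect is purely bookkeeping: making sure each quadrilateral is placed on the correct side of each of its two bounding bisectors, i.e.\ correctly matching \emph{side of the bisector} with \emph{which edge is nearer}. Everything else is the one-line trigonometric comparison above, together with cyclic symmetry.
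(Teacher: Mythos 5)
Your proposal is correct and follows essentially the same route as the paper: the paper's (much terser) proof likewise observes that the separators between the three regions are exactly the internal angle bisectors, being the loci of points equidistant from the two lines forming each angle. Your trigonometric comparison and the explicit placement of each quadrilateral on the correct side of its two bounding bisectors simply fill in the details the paper leaves implicit.
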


\begin{proof}
The subdivision of the $R_3$ regions are determined by the regions' separators, i.e. as per loci of points that are equidistant from the most distant edge. Since also the angle bisector is the locus of points that are equidistant from the lines forming the angle, the lemma follows. 
  \end{proof}



\subsection{Triangle Visitation with 2 Robots - The $R_2$ Regions}
\label{sec: regions 2}

In this section we show how to subdivide the region of any non-obtuse triangle $\tr \in \trs$ into subregion with respect to the optimal trajectories and their costs, for a fleet consisting of 2 robots. The following lemma describes a geometric construction. 

\begin{lemma}
\label{lem: angle ABC R2 divisor}
Consider non-obtuse $\triangle ABC$ along with its incenter $I$. Let $K,M$ be the intersections of angle bisectors of $A,C$ with segments $BC,AB$ respectively.
From $K,M$ we consider cones of angles $A,C$ respectively, having direction toward the interior of the triangle, and placed so that their bisectors are perpendicular to $BC,AB$, respectively. Then, the extreme rays of the cones intersect in line segment $BI$.
\end{lemma}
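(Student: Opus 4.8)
The plan is to place the triangle in a scaled standard analytic form, with $B=(0,0)$ and $BC$ along the positive $x$-axis, so that $A$ lies above the $x$-axis and the ray $BI$ leaves the origin at angle $B/2$ to the horizontal. Writing $a=\norm{BC}$, $b=\norm{CA}$, $c=\norm{AB}$, the angle-bisector ratios give the two cone tips immediately: $K\in BC$ with $\norm{BK}=ac/(b+c)$ and $M\in AB$ with $\norm{BM}=ac/(a+b)$. The cone at $K$ has bisector along the upward normal $(0,1)$ to $BC$ and half-angle $A/2$, so its extreme ray pointing toward $B$ has direction $(-\sin(A/2),\cos(A/2))$; the cone at $M$ has bisector along the inward normal $(\sin B,-\cos B)$ to $AB$ and half-angle $C/2$, so its extreme ray pointing toward $BC$ makes angle $B-\pi/2-C/2$ with the horizontal. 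These are the two extreme rays I expect to meet on $BI$, and identifying \emph{which} ray of each cone is relevant is the first delicate point, since the wrong choice yields a ray that never reaches the segment.

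Next I would intersect each of these rays with the line $BI=\{s(\cos(B/2),\sin(B/2)):s\ge0\}$, solving two $2\times2$ linear systems for the signed distance $s$ from $B$. A short computation yields
\[
s_K=\norm{BK}\,\frac{\cos(A/2)}{\cos((A-B)/2)},\qquad s_M=\norm{BM}\,\frac{\cos(C/2)}{\cos((B-C)/2)}.
\]
The heart of the lemma is the identity $s_K=s_M$. To establish it I would substitute the law of sines $a=2\rho\sin A$, etc.\ (with $\rho$ the circumradius), together with the sum-to-product identities $b+c=4\rho\cos(A/2)\cos((B-C)/2)$ and $a+b=4\rho\cos(C/2)\cos((A-B)/2)$; here I use $\sin((B+C)/2)=\cos(A/2)$ and $\sin((A+B)/2)=\cos(C/2)$, both consequences of $A+B+C=\pi$. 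After cancellation, \emph{both} $s_K$ and $s_M$ collapse to the single expression $\rho\sin A\sin C/\big(\cos((A-B)/2)\cos((B-C)/2)\big)$. Hence the two extreme rays cross $BI$ at one and the same point $T$, and since the corresponding ray parameters are positive, $T$ is genuinely the intersection of the two rays.

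It remains to show $T$ lies on the \emph{segment} $BI$ rather than its extension beyond $I$, and this is exactly where non-obtuseness enters. Using $\norm{BI}=r/\sin(B/2)=4\rho\sin(A/2)\sin(C/2)$ (with $r=4\rho\sin(A/2)\sin(B/2)\sin(C/2)$ the inradius) and the value of $s_K$ above, the inclusion $s_K\le\norm{BI}$ reduces, after $\sin A=2\sin(A/2)\cos(A/2)$ and $\sin C=2\sin(C/2)\cos(C/2)$, to
\[
\cos(A/2)\cos(C/2)\le\cos((A-B)/2)\cos((B-C)/2).
\]
Expanding both sides by the product-to-sum formula and cancelling the common term $\cos((A-C)/2)$ turns this into $\sin(B/2)\le\sin(3B/2)$, i.e.\ $2\cos(B)\sin(B/2)\ge0$, which holds precisely when $B\le\pi/2$ — guaranteed for a non-obtuse triangle, with equality (so $T=I$) exactly when $B=\pi/2$. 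The main obstacle I anticipate is the orientation bookkeeping in the first paragraph: once the correct extreme ray is selected from each cone, the coincidence $s_K=s_M$ and the segment-containment both follow cleanly from the half-angle identities, and the argument even pinpoints why non-obtuseness is the sharp hypothesis.
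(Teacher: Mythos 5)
Your proof is correct, but it takes a genuinely different route from the paper's. The paper argues synthetically: it reflects $A$ across $BC$ (and $C$ across $AB$), produces auxiliary intersection points $D$ and $E$ outside the triangle, shows that $DK$ and $EM$ are angle bisectors in the triangles $DA'C$ and $AEB$ by identifying $K$ and $M$ as incenters there, verifies $\angle DKB = \pi/2-\angle A/2$ to recognize $DK$ as the claimed extreme ray, and finally observes that the intersection $F$ of $DK$ and $EM$ is the incenter of $\triangle BDE$, hence lies on the bisector of $\angle DBE$, which coincides with the bisector of $\angle ABC$. Your computation with the angle-bisector ratios, the law of sines, and sum-to-product identities reaches the same point $T$ with $s_K=s_M=\rho\sin A\sin C/\bigl(\cos((A-B)/2)\cos((B-C)/2)\bigr)$; I verified the ray orientations, the two linear systems, and the cancellations, and they are all right. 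What your approach buys is twofold: it avoids the paper's case split (the paper assumes WLOG $\angle B>\angle C$ and defers the isosceles case, since $D$ and $E$ only exist when the relevant lines are not parallel), and, more substantially, it actually proves the full statement. The paper's proof only establishes that $F$ lies on the bisector \emph{ray} of $\angle B$, not that it lies between $B$ and $I$; your reduction of $s_K\leq\norm{BI}$ to $\cos(A/2)\cos(C/2)\leq\cos((A-B)/2)\cos((B-C)/2)$, i.e.\ to $2\cos(B)\sin(B/2)\geq 0$, supplies exactly this missing containment and shows that non-obtuseness at $B$ is the sharp hypothesis, with $T=I$ precisely when $\angle B=\pi/2$ (consistent with the paper's later observation that for the right isosceles triangle the separator of the right angle degenerates to the incenter). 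What the paper's synthetic argument buys in exchange is the geometric meaning of the extreme rays as equidistance loci (points on $DK$ are equidistant from $AC$ and from the reflection of $BC$ across $AB$), which is reused immediately afterwards to interpret the hexagon separator; your calculation does not expose that structure. The only small gaps in your write-up are the asserted (but easily checked) positivity of the ray parameters and the identification of which extreme ray of each cone is the relevant one; both claims are correct as stated.
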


\begin{proof}
Consider non-obtuse triangle $\tr=ABC$ and angle $A$ bisector $AK$, where $K \in BC$. Let $A'$ be the reflection of $A$ across $BC$. Without loss of generality we may assume that $\angle B \geq \angle C$ or, in other words, that $\angle AKB \leq \pi/2$. 
First we consider the case that $\angle B > \angle C$, see Figure~\ref{fig: Cost1and2equalizer}, and in particular that $\triangle ABC$ is not isosceles (the case of isosceles triangle is much easier and can be treated similarly). 
Then
$$\angle A'BA + \angle BAC = 2 \angle B + \angle A > \angle C + \angle B + \angle A = \pi,$$ 
and therefore $BA'$ and $AC$ are not parallel. Moreover the extensions of these line segments (on the directions of $B,A$, respectively) meet at a point, call it $D$. 
\begin{figure}[h!]
\centering
  \includegraphics[width=12cm]{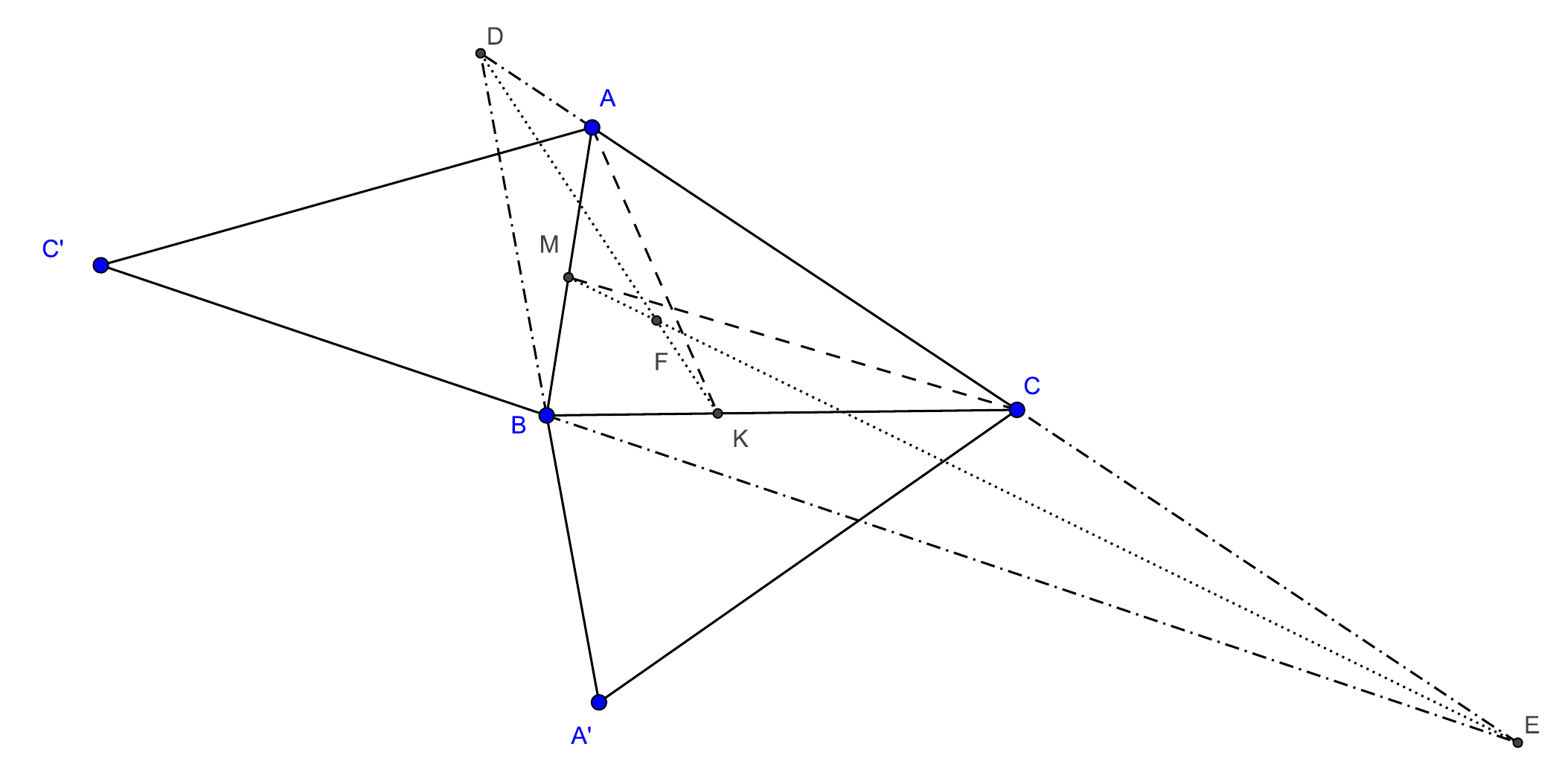}
\caption{
The case of $\angle B > \angle C$, in the proof of Lemma~\ref{lem: angle ABC R2 divisor}.
}
\label{fig: Cost1and2equalizer}
\end{figure}

Next we claim that $DK$ is the angle bisector of $\angle A'DC$. 
To prove this, it suffices to show that $K$ is the incenter of $\triangle DA'C$.
Indeed, $K$ lies on the bisector of $\angle ACA'$, since $\triangle A'BC$ was the result of the reflection of $ABC$ across $BC$. For the same reason, $K$ lies on the bisector of $\angle CA'B$ (which is an equivalent claim to that $AK$ is the bisector of $\angle BAC$). 
Hence, $K$ is the intersection of two bisectors of $\triangle DA'C$, as promised.


Next we denote by $\angle A, \angle B, \angle C$ the angles of $\triangle ABC$. 
We have the following claim: 
$\angle DKB = \pi/2-\angle A/2$. 
Indeed, 
\begin{align*}
\angle DKB
& = \pi - \angle KBD - \angle BDK \\
&= \pi - (\pi-\angle A'BC) - \angle A'DC/2 \\
&= \angle B - (\pi - \angle CA'B - \angle DCA')/2 \\
&=\angle B - \pi/2 + \angle A/2 + \angle C \\
&=\pi - \angle A - \pi/2 + \angle A/2 \\
& = \pi /2 - \angle A/2. 
\end{align*}
In particular, this shows that $KD$ is an extreme ray of a cone with tip $K$ and angle $\angle A$, having direction toward the interior of the triangle, and placed so that its bisector is perpendicular to $BC$.

Similarly, consider the reflection of $C$ across $AB$. Since $\angle A \not = \angle C$, the lines passing through pairs $C',B$ and $A,C$ are not parallel. 
So their extensions meet at some point, call it $E$, in the directions of points $B,C$ respectively. Let also $CM$ be the angle bisector of $\angle C$, where $M\in AB$. Exactly as before, $EM$ is the angle bisector of $\angle AEB$, and hence all points on $EM$ are equidistant from $EA$ and $EB$. 
It follows that $EM$ and $DK$ intersect at the incenter of $\triangle BED$, call it $F$. 

The above argument proves that $F$ is the intersection of two of the extreme rays of the two cones with tips $K,M$ and angles $A,C$, as per the description of the lemma.  It remains to prove that $F$ lies on the bisector of angle $B$. 
Indeed, $F$ is on the bisector of $\angle DEB$ and on the bisector of $\angle EDB$. Therefore, $F$ is the incenter of $\triangle BDE$. In particular, $F$ should lie on the bisector of $\angle DBE$. Note however that $\angle DBE$ and $\angle ABC$ have the same bisector, because $\angle CBE=\angle ABD=\pi-2 \angle B$.
  \end{proof}

Motivated by Lemma~\ref{lem: angle ABC R2 divisor}, we will be referring to the subject point $F$ in the line segment $BI$ as the \emph{separator of the angle $B$ bisector}.
Similarly, we obtain separators $J,H$ of angles $C,A$ bisectors, respectively, see also Figure~\ref{fig: R2triangleNEW}.
  \begin{figure}
\centering
\includegraphics[width=6cm]{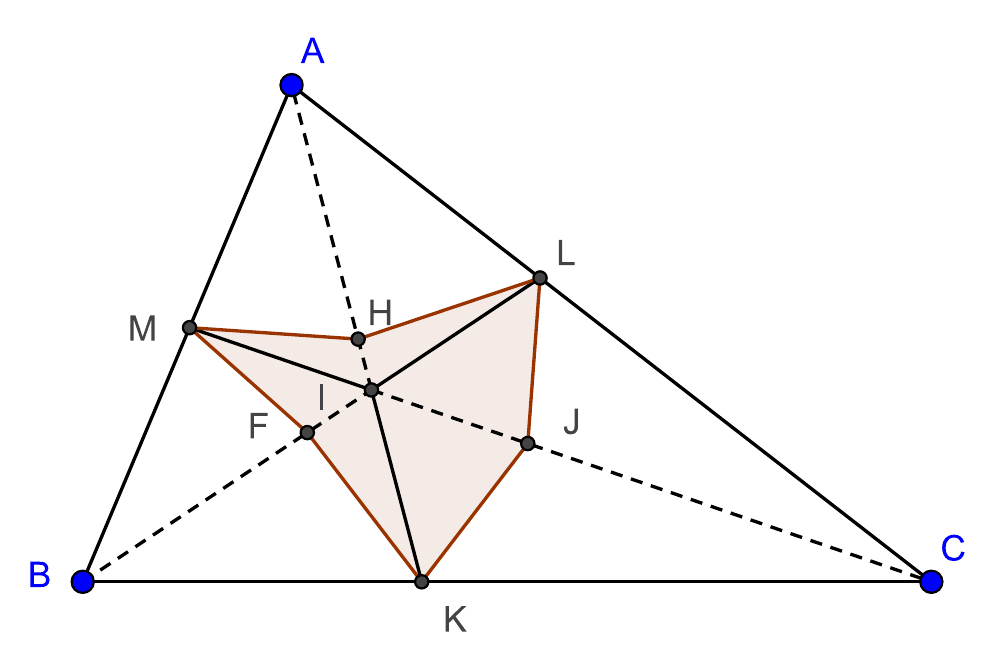}
\caption{
The $R_2$ hexagon separator of $\triangle ABC$.
}
\label{fig: R2triangleNEW}
\end{figure}
In what follows, we will be referring to the (possibly non-convex) hexagon $MFKJLH$ as the \emph{$R_2$ hexagon separator of $\triangle ABC$}. 

From the proof of Lemma~\ref{lem: angle ABC R2 divisor} we also derive a useful observation. Referring to Figure~\ref{fig: Cost1and2equalizer}, recall that $DK$ is the angle bisector of $\angle A'DC$. Therefore all points on $DK$ are equidistant from $DA'$ and $AC$. 
At the same time the distance of any point $P$ on $FK$ 
to $DA'$ equals $d(P,\{AB,BC\})=d(P,[BC,AB])$ 
unless $d(P,\{AB,BC\}) = \norm{PB}$, that is, unless the optimal strategy for visiting $\{AB,BC\}$ would be to move directly to vertex $B$ (which happens only if $\angle B \geq \pi/3$ and $P$ lies within the optimal bouncing subcone of angle $B$, see Observation~\ref{obs: Visit2Edges-ii}). 

The remaining of the section refers to non-obtuse triangle $\tr=\triangle ABC$ as in Figure~\ref{fig: R2triangleNEW}, where in particular 
$MFKJLH$ is the $R_2$ separator of $\tr$.
We are now in a position to make a preliminary observation which will be generalized soon. 
Suppose that $\angle B \leq \pi/3$. Then for every point $P$ either on $MF$ or on $FK$, we have that 
$d(P, AC) = d(P, \{BC,AB\})$. Moreover, for every $P$ within tetragon $BKFM$, we have that $R_2(P) = d(P, AC)$. 

The previous observation can be extended to larger angles. Assume that $\angle B \geq \pi/3$. Then, the same reasoning shows that for every point $P$ either on $MF$ or on $FK$, which are outside the optimal bouncing subcone of angle $B$, we have that 
$d(P, AC) = d(P, \{BC,AB\})$. For points within the subcone, the optimal trajectory to visit $\{BC,AB\}$ would be to go directly to $B$. So for points $P$ within the optimal bouncing subcone, condition $d(P, AC) = d(P, \{BC,AB\})$ translates into that $P$ is equidistant from $AC$ and $B$. Hence, $P$ lies in a parabola with $AC$ being the directrix and $B$ being the focus. Next, we refer to that parabola as the \emph{separating parabola of $B$}. 

Motivated by the previous observation, we introduce the notion of the \emph{refined $R_2$ mixed-hexagon separator of triangle $\tr$} as follows. For every angle of $\tr$ which is more than $\pi/3$, we replace the portion of the $R_2$ hexagon separator within the optimal bouncing subcone of the same angle by the corresponding separating parabola. In Figure~\ref{fig: R2triangleNEWrefined} we display an example where only one angle is more than $\pi/3$. 
\begin{figure}
\centering
\includegraphics[width=6cm]{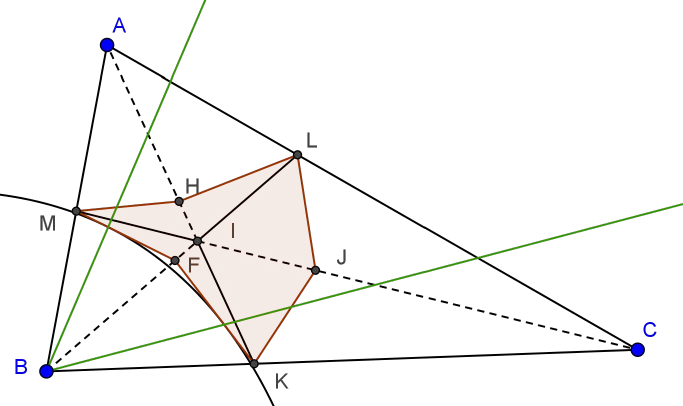}
\caption{
The refined $R_2$ mixed-hexagon separator of $\triangle ABC$, where $\angle B > \pi/3$. 
}
\label{fig: R2triangleNEWrefined}
\end{figure}
Combined with Observation~\ref{obs: cost 1,2,3} (ii), we can formalize our findings as follows. 

\begin{lemma}
\label{lem: R2 region partial}
For every starting point on the boundary of the refined $R_2$ mixed-hexagon separator of a triangle $\tr$, the cost of visiting only the opposite edge equals the cost of visiting the other two edges. For every starting point $P$ outside the $R_2$ separator, $R_2(P)$ equals the distance of $P$ to the opposite edge. 
Moreover, for every starting point $P$ in the interior of the refined $R_2$ separator, $R_2(P)$ is determined by the cost of visiting two of the edges of $\tr$. 
\end{lemma}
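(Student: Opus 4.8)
The plan is to work directly from the formula in Observation~\ref{obs: cost 1,2,3}(ii), which expresses $R_2(P)$ as the minimum over the three vertices of a two-term maximum. For a vertex $v$ with opposite edge $e_v$ and the two incident edges $e_v',e_v''$, I would set
\[
\phi_v(P) := d(P,e_v) - d(P,\{e_v',e_v''\}),
\]
so that the $v$-th term equals $\max\{d(P,e_v),\, d(P,\{e_v',e_v''\})\}$, which is $d(P,e_v)$ exactly when $\phi_v(P)\ge 0$ and $d(P,\{e_v',e_v''\})$ otherwise. The whole argument then reduces to controlling the sign of each $\phi_v$ relative to the corresponding piece of the refined separator. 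I would first record the one structural inequality that makes the three terms comparable: for distinct vertices $u\ne v$, the opposite edge $e_v$ has as endpoints precisely the two vertices other than $v$, hence $e_v$ is incident to $u$, so $e_v\in\{e_u',e_u''\}$ and therefore $d(P,\{e_u',e_u''\})\ge d(P,e_v)$. Consequently the $u$-th term satisfies $\max\{d(P,e_u),d(P,\{e_u',e_u''\})\}\ge d(P,e_v)$ for every $P$; this is the only comparison needed to pin down $R_2$ outside the separator.

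Next I would establish the sign structure of $\phi_v$. By Lemma~\ref{lem: angle ABC R2 divisor} together with the parabola discussion preceding the statement, the zero set of $\phi_v$ inside $\tr$ is exactly the $v$-piece of the refined separator, an arc joining the two edges incident to $v$ and cutting off the corner at $v$. To fix the sign on each side I would show that $\phi_v$ is strictly monotone along every ray emanating from $v$ into $\tr$: moving away from $v$, the affine quantity $d(P,e_v)$ strictly decreases (in a non-obtuse triangle every interior point projects onto each side within that side, so $d(P,e_v)$ is the perpendicular distance), whereas $d(P,\{e_v',e_v''\})$ strictly increases — by Observations~\ref{obs: Visit2Edges-ii} and~\ref{obs: Visit2Edges-i} it equals either $\norm{Pv}$ inside the optimal bouncing subcone of $\angle v$ or a minimum of distances to reflected lines through $v$, and along a fixed ray from $v$ each of these grows proportionally to $\norm{Pv}$. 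Hence $\phi_v$ is strictly decreasing along each such ray, so the ray meets $\{\phi_v=0\}$ at most once, with $\phi_v>0$ on the corner side and $\phi_v<0$ beyond it. Evaluating at $v$ (where $d(P,e_v)>0$ but $d(P,\{e_v',e_v''\})=0$, so $\phi_v>0$) and at the incenter $I$ (where $\phi_v(I)=r-d(I,\{e_v',e_v''\})<0$, since any path visiting two edges has length at least the inradius $r$, and strictly more than $r$ for a genuine triangle) confirms the orientation.

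With these facts in hand the three cases follow quickly. On a boundary piece, $\phi_v=0$ is precisely the asserted equality $d(P,e_v)=d(P,\{e_v',e_v''\})$, so the boundary statement is just the defining property of the separator. If $P$ lies in the corner region at $v$, then $\phi_v(P)>0$ makes the $v$-th term equal to $d(P,e_v)$, while the structural inequality of the first paragraph gives that the other two terms are each $\ge d(P,e_v)$; hence $R_2(P)=d(P,e_v)$, the distance to the opposite edge. If $P$ lies in the interior of the hexagon, then every ray from each $v$ to $P$ first crosses the $v$-piece, so $\phi_v(P)<0$ for all three $v$, and each term equals a two-edge cost $d(P,\{e_v',e_v''\})$; thus $R_2(P)$ is the minimum of three two-edge visitation costs and is therefore determined by visiting two edges.

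I expect the only delicate point to be the monotonicity/clean-separation step of the second paragraph for angles exceeding $\pi/3$, where the separator switches from the straight bisector segments to the separating parabola: there one must verify that $\phi_v$ remains strictly monotone across the boundary of the optimal bouncing subcone, i.e.\ that the transition between the regime $d(P,\{e_v',e_v''\})=\norm{Pv}$ and the reflected-line regime is continuous and preserves monotonicity, so that the zero set is a single arc and the sign is constant on each of the two resulting components.
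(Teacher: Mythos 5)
Your proof is correct and follows essentially the same route as the paper, which presents this lemma as a direct formalization of Lemma~\ref{lem: angle ABC R2 divisor}, the separating-parabola discussion, and Observation~\ref{obs: cost 1,2,3}(ii) rather than giving a separate formal proof. Your explicit sign function $\phi_v$, the monotonicity of $\phi_v$ along rays emanating from $v$ (note that such a ray never crosses the boundary of the optimal bouncing subcone, since that boundary consists of rays from $v$ itself, so the transition you flag as delicate is vacuous ray-by-ray), and the comparison $d(P,\{e_u',e_u''\})\ge d(P,e_v)$ for $u\ne v$ supply rigorously the sign determination that the paper leaves implicit.
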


Lemma~\ref{lem: R2 region partial} implies the following corollaries pertaining to specific triangles $\triangle ABC$. In both statements, and the associated figures, $I$ is the incenter of the triangles, and points $K,L,M$ are defined as in Figure~\ref{fig: R3 regions}. 
 
\begin{corollary}[Hexagon separator of equilateral triangle]
\label{cor: r2 hexagon equilateral}
Consider equilateral $\tr=\triangle ABC$, see Figure~\ref{fig: R2 regions equilateral}. 
\begin{figure}[h!]
\centering
\includegraphics[width=6cm]{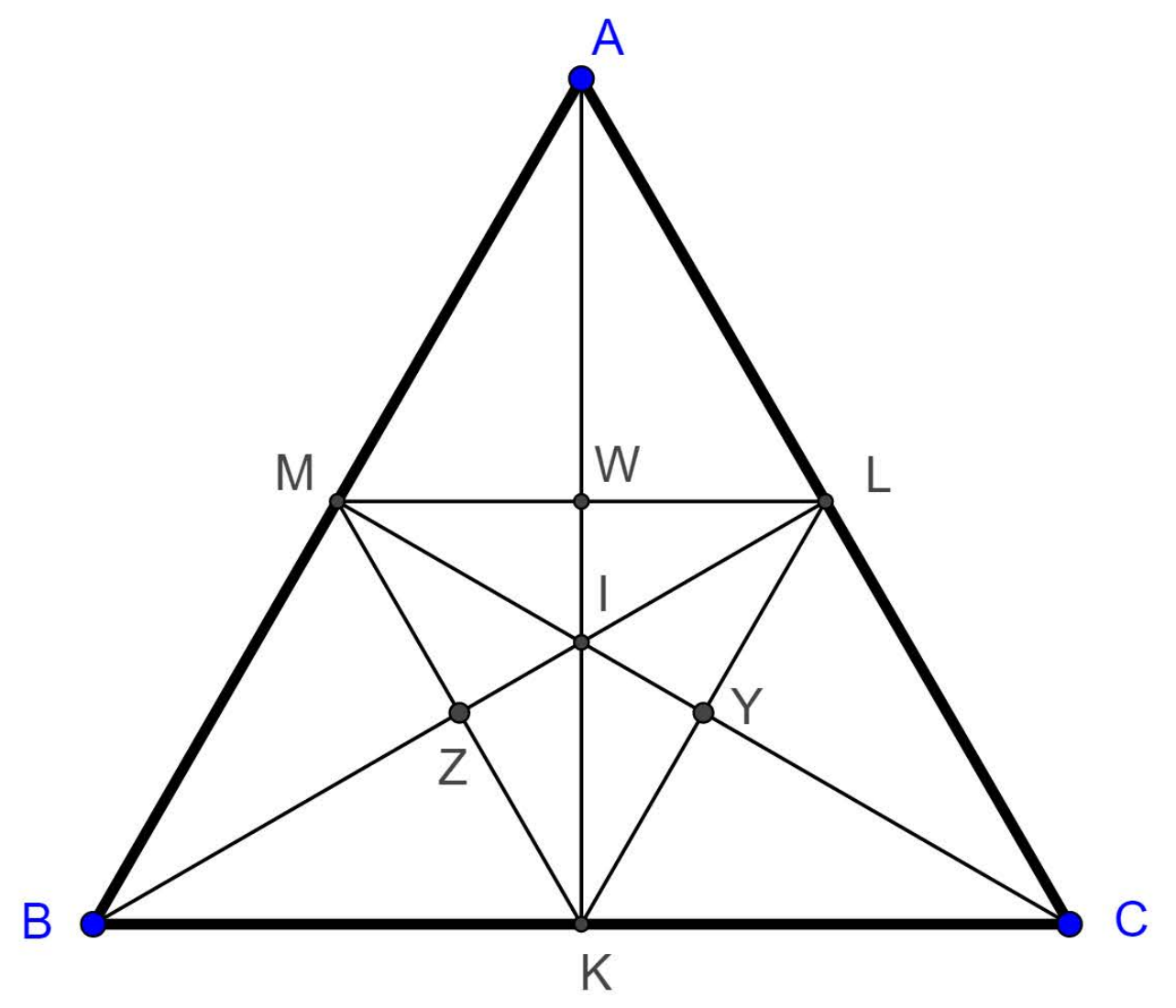}
\caption{
The $R_2$ regions of the equilateral triangle, see also Corollary~\ref{cor: r2 hexagon equilateral} and Corollary~\ref{cor: r2 equilateral within} for detailed description. 
}
\label{fig: R2 regions equilateral}
\end{figure}
Let $W,Z,Y$ be the intersections of $AK,BL,CM$ respectively (also the separators of angle $A$ bisector, angle $B$ bisector, and angle $C$ bisector, respectively). Then, the $R_2$ hexagon separator of $\tr$ is $MZKYLW$, which is also triangle $MKL$. More specifically, for all $P \in \triangle AML$, we have that $R_2(\tr,P)=d(P,BC)$.
\end{corollary}
By symmetry, one can derive $R_2(\tr,P)$ for all starting points $P$ outside the hexagon separator of isosceles $\tr$. 

\begin{corollary}[Mixed-hexagon separator of right isosceles]
\label{cor: r2 hexagon right isosceles}
Consider right isosceles $\tr=\triangle ABC$, see Figure~\ref{fig: R2 regions right isosceles}.
\begin{figure}
\centering
\includegraphics[width=5cm]{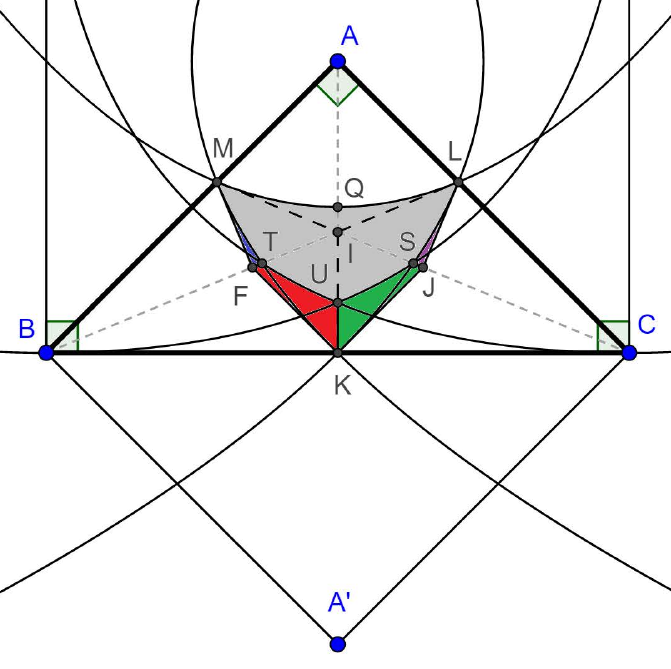}
\caption{
The $R_2$ regions of the right isosceles, see also Corollary~\ref{cor: r2 hexagon right isosceles} and Lemma~\ref{lem: r2 rightisosceles within} for detailed description. The coloured region identifies the refined $R_2$ mixed-hexagon separator.
}
\label{fig: R2 regions right isosceles}
\end{figure}
The separator of angle $A$ bisector is incenter $I$. Let also $F,J$ be the separators of angle $B$ bisector and angle $C$ bisector, respectively. Then, the $R_2$ hexagon separator of $\tr$ is $IMFKJL$. The parabola with directix $BC$ and focus $A$, intersecting $AK$ at $Q$ and passing through $M,L$ is the separating parabola of $A$. Hence, for every point $P \in \tr$ above the parabola, we have $R_2(\tr,P)=d(P,BC)$, as well as for every point $X$ in tetragon $MBKF$, we have $R_2(\tr,X)=d(X,AC)$. The remaining case of starting points in tetragon $KCLJ$ follows by symmetry.
\end{corollary}


Describing the subdivisions within the refined $R_2$ mixed-hexagon separator for arbitrary triangles is a challenging task. 
On the other hand, by Observation~\ref{obs: cost 1,2,3} (ii) and Lemma~\ref{lem: R2 region partial} the cost within the separator is determined by the cost of visiting just two edges. Also, by Observations~\ref{obs: Visit2Edges-ii},~\ref{obs: Visit2Edges-i} the cost of such visitation can be described either as a distance to a line or to a point.
We conclude that, within the $R_2$ separator, the subdivisions are determined by separators that are either parts of lines or parabolas (loci of points for which the cost of visiting some two edges are equal). 
Hence, for any fixed triangle, an extensive case analysis pertaining to pairwise comparisons of visitations costs can determine all $R_2$ subdivisions (and the challenging ones are within the refined separator). In what follows we summarize formally the subdivisions only of two triangle types, focusing on the visitation cost of all starting points within the (refined) hexagon separators. 

\begin{corollary}[$R_2$ regions of an equilateral triangle]
\label{cor: r2 equilateral within}
Consider equilateral $\tr=\triangle ABC$, as in Corollary~\ref{cor: r2 hexagon equilateral}, see Figure~\ref{fig: R2 regions equilateral}. Then for every starting point $P\in \triangle MWI$, we have that $$R_2(\tr,P)=d(P,[AB,AC]).$$
The remaining cases of starting points within the hexagon separator $MZKYLW$ follow by symmetry. 
\end{corollary}

\begin{lemma}[$R_2$ regions of a right isosceles]
\label{lem: r2 rightisosceles within}
Consider right isosceles $\tr=\triangle ABC$, as in Corollary~\ref{cor: r2 hexagon right isosceles}, see Figure~\ref{fig: R2 regions right isosceles}. 
Consider parabola with directrix the line passing through $B$ that is perpendicular to $BC$ (also the reflection of $BC$ across $AB$) and focus $A$, passing through $M,K$ and intersecting $BL$ at point $T$ (define also $S$ as the symmetric point of $T$ across $AK$). That parabola is the locus of points $P$ for which $\norm{PA}=d(P,[AB,BC])$. 
Let also $A'$ be the reflection of $A$ across $BC$. 
Consider parabola with directrix $BA'$ and focus $A$, passing through $T$ and intersecting $AK$ at point $U$. That parabola is the locus of points $P$ for which $\norm{PA}=d(P,[BC,AB])$. 
Therefore, if $P$ is a starting visitation point, we have that:
\begin{itemize}
\item $R_2(\tr,P)=\norm{PA}$, for all $P$ in mixed closed shape $MTUSLQ$ (grey shape in Figure~\ref{fig: R2 regions right isosceles}), 
\item $R_2(\tr,P)=d(P,[AB,BC])$, for all $P$ in mixed closed shape $MFT$ (blue shape in Figure~\ref{fig: R2 regions right isosceles}), 
\item $R_2(\tr,P)=d(P,[AC,BC])$, for all $P$ in mixed closed shape $LJS$ (purple shape in Figure~\ref{fig: R2 regions right isosceles}), 
\item $R_2(\tr,P)=d(P,[BC,AB])$, for all $P$ in mixed closed shape $FKUT$ (red shape in Figure~\ref{fig: R2 regions right isosceles}). 
\item $R_2(\tr,P)=d(P,[BC,AC])$, for all $P$ in mixed closed shape $JSUK$ (green shape in Figure~\ref{fig: R2 regions right isosceles}). 
\end{itemize}
\end{lemma}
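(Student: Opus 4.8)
The plan is to fix the standard coordinates $B=(0,0)$, $C=(1,0)$ and right-angle vertex $A=(1/2,1/2)$, so that $\angle A=\pi/2$ and $\angle B=\angle C=\pi/4$. The first thing I would record is that $3\angle A-\pi=\pi/2=\angle A$, so the optimal bouncing subcone of $\angle A$ fills the \emph{entire} angle; hence by Observation~\ref{obs: Visit2Edges-ii} we have $d(P,\{AB,AC\})=\norm{PA}$ for every $P\in\tr$. Consequently, among the three pairings of Observation~\ref{obs: cost 1,2,3}(ii), the one letting a single robot serve both legs costs $\max\{\norm{PA},d(P,BC)\}$, and by Lemma~\ref{lem: R2 region partial} the governing quantity inside the separator is the two-edge term. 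Writing $t_b=\norm{PA}$, $t_c=d(P,\{AB,BC\})$ and $t_a=d(P,\{BC,CA\})$ for the three two-edge costs, the task reduces to identifying, inside the refined separator, which of these is smallest and---when it is $t_c$ (or, by symmetry, $t_a$)---in which order the two edges are visited.

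Next I would identify the two parabolas via reflections. To compute $d(P,[AB,BC])$ I reflect $P$ across line $AB$ and drop a perpendicular to $BC$; equivalently $d(P,[AB,BC])=d(P,\ell_1)$, where $\ell_1$ is the reflection of line $BC$ across $AB$. Since $\angle B=\pi/4$, this reflection sends the direction of $BC$ to $2(\pi/4)-0=\pi/2$, so $\ell_1$ is exactly the line through $B$ perpendicular to $BC$---the claimed directrix. Hence the locus $\norm{PA}=d(P,[AB,BC])$ is the parabola $\Pi_1$ with focus $A$ and directrix $\ell_1$, and a direct check that the midpoint $K$ of $BC$ and the point $M$ are equidistant from $A$ and $\ell_1$ shows $\Pi_1$ passes through $M,K$. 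Symmetrically, reflecting across $BC$ gives $d(P,[BC,AB])=d(P,BA')$ with $A'$ the reflection of $A$ across $BC$, so the second locus is the parabola $\Pi_2$ with focus $A$ and directrix $BA'$. Throughout I must verify feasibility of these reflective shortest paths---that the bounce point lands on segment $AB$ and the terminal foot on $BC$ (resp.\ $BA'$) for the relevant range of $P$---which uses non-obtuseness and the explicit coordinates.

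The key structural step is to show that the bisector $BL$ is precisely the locus $d(P,[AB,BC])=d(P,[BC,AB])$, i.e.\ the curve separating the two visiting orders. Since these costs equal $d(P,\ell_1)$ and $d(P,BA')$, and both $\ell_1$ and $BA'$ pass through $B$, their equidistant locus is the angle bisector of the two lines; as $\ell_1$ lies at $\pi/2$ and $BA'$ at $-\pi/4$ from $BC$, the interior bisector points into the triangle at $\pi/8$, which is exactly the direction of $BL$. Therefore the point $T:=\Pi_1\cap BL$ satisfies $\norm{TA}=d(T,[AB,BC])=d(T,[BC,AB])$, so $T$ lies on $\Pi_2$ as well and the three boundary curves meet there. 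I would then assemble the regions: on the left half one argues $t_a\ge\min\{t_b,t_c\}$ (pairing the far edge $CA$ with $BC$ is never cheapest there), so $R_2(P)=\min\{t_b,t_c\}$; the locus $t_b=t_c$ is the arc $MT$ of $\Pi_1$ on the blue side and the arc $TU$ of $\Pi_2$ on the red side, giving $R_2=\norm{PA}$ above these arcs (grey $MTUSLQ$, closed along $AK$ at $U,S$ and along the separating parabola of $A$ at $Q$, with $L$ supplied by symmetry) and $R_2=d(P,\{AB,BC\})$ below them, split by $FT\subset BL$ into the order-$[AB,BC]$ region (blue $MFT$) and the order-$[BC,AB]$ region (red $FKUT$). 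The remaining regions follow from the reflection symmetry across $AK$.

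The main obstacle I anticipate is not any single computation but the bookkeeping needed to certify the global picture: verifying uniformly over each region that the reflective straight-line shortest paths are feasible (bounce and terminal points staying on the actual segments), and confirming that $t_a$ is genuinely dominated on the left half, so that the clean parabola/bisector loci really are the $R_2$-region boundaries \emph{everywhere} inside the separator rather than only locally near their defining conditions. Pinning down the consistency of the triple point $T\in\Pi_1\cap\Pi_2\cap BL$ and that the arcs close up into the stated shapes is the delicate part; once this is established, the cost assignments follow from the locus characterizations together with Lemma~\ref{lem: R2 region partial}.
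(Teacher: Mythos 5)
Your proposal is correct and follows essentially the same route as the paper's proof: both rest on the observation that the optimal bouncing subcone of $\angle A=\pi/2$ is the whole angle (so $d(P,\{AB,AC\})=\norm{PA}$ everywhere), on realizing the two ordered costs $d(P,[AB,BC])$ and $d(P,[BC,AB])$ as distances to the reflected lines $\ell_1$ and $BA'$ (making the stated parabolas the equality loci with $\norm{PA}$), and on the bisector $BL$ separating the two visitation orders. The paper's proof is terser—it cites ``the definitions of the separating parabolas'' and the general two-edge ordering fact from Section~\ref{sec: visiting 1,2 edges} where you re-derive $BL$ as the bisector of the two directrices—but the substance is the same.
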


\begin{proof}
For every starting point $P$ within the refined $R_2$ mixed-hexagon separator, the cost $R_2(\tr,P)$ is determined by the visitation cost of two edges. 

Note that the optimal bouncing subcone of $\angle A$ is one with extreme rays $AB,AC$ (i.e. the angle itself). 
Hence, by Observation~\ref{obs: Visit2Edges-ii}, for every $P$ in mixed closed shape $MTUSLQ$, we have that $d(P,\{AB,AC\}=\norm{PA}$. Therefore, by the definitions of the separating parabolas, we have that $R_2(\tr,P)=\norm{PA}$. 

Consider now some starting point $P$ in mixed closed shape $MFT$. By the definition of the separating parabolas, we have that $$R_2(\tr,P)=d(P, \{AB,BC\})=d(P, [AB,BC]),$$ where the last equality follows since $P$ is not below line segment $BL$, the angle bisector of $\angle B$ (and $d(P, [AB,BC])$ can be computed as the distance of $P$ to the reflection of $BC$ across $AB$). 

Finally, consider some starting point $P$ in mixed closed shape $FKUT$. By the definition of the separating parabolas, we have that 
$$R_2(\tr,P)=d(P, \{AB,BC\})=d(P, [BC,AB]),$$ where the last equality follows since $P$ is not above line segment $BL$, the angle bisector of $\angle B$. 
  \end{proof}

\subsection{Triangle Visitation with 1 Robot - The $R_1$ Regions}
\label{sec: regions 1}

In this section we show how to partition the region of an arbitrary non-obtuse $\triangle ABC$ into  sets of points $P$ with respect to the optimal strategy of $R_1(P)$. There are 6 possible visitation strategies for $d(P, \{AB,AC,BC\})$, one for each permutation of the edges indicating the order they are visited (ordered visitations). Clearly, it is enough to describe, for each two ordered visitations, the borderline (separator) of points in which the two visitations have the same cost. 
By Lemma~\ref{lem: +bounce +subopt ordered visitiation}, any such ordered visitation cost is the distance of the starting point either to a point, or to a line, or a distance to a line plus the length of some altitude. Since the $R_1$ regions are determined by separators, i.e. loci of points in which different ordered visitations induce the same costs, it follows that these separators are either lines, or conic sections. 
Therefore, by exhaustively pairwise-comparing all ordered visitations along with their separators, we can determine the $R_1$ regions of any triangle. Next, we explicitly describe the $R_1$ regions only for three types of triangles that we will need for our main results. For the sake of avoiding redundancies, we omit any descriptions that are implied by symmetries. 

The next lemma describes the $R_1$ regions of an equilateral triangle, as in Figure~\ref{fig: R1regionsEquilateral}.
\begin{figure}
\centering
\includegraphics[width=2in]{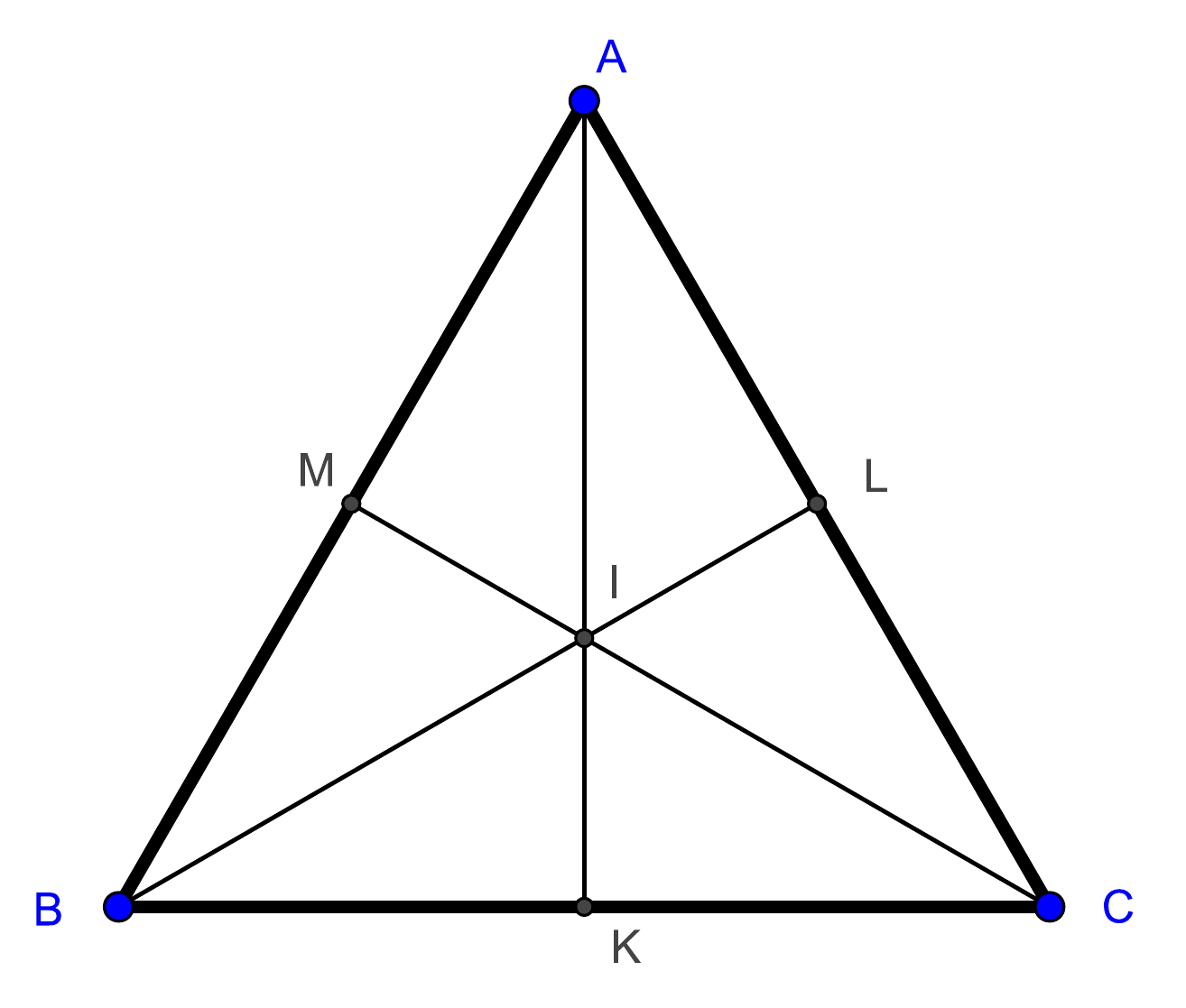}
\caption{The $R_1$ regions of an equilateral triangle.}
\label{fig: R1regionsEquilateral}
\end{figure}
\begin{lemma}[$R_1$ regions of an equilateral triangle]
\label{lem: R1regionsEquilateral}
Consider equilateral triangle $\triangle ABC$ with angle bisectors $AK,BL,CM$ and incenter $I$. 
Then, the angle bisectors are the loci of points in which optimal ordered visitations have the same cost. Moreover, for every starting point $P\in \triangle AMI$, the optimal strategy of $R_1(\tr,P)$ is LRD visitation. 
\end{lemma}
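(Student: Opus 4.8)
The plan is to reduce the six-way comparison to a single region using the full dihedral symmetry $D_3$ of the equilateral triangle, and then to establish optimality of the ordering LRD on that region by pairwise comparison, invoking the ordered-visitation machinery of Lemma~\ref{lem: +bounce +subopt ordered visitiation} together with the two-edge results of Observations~\ref{obs: Visit2Edges-ii} and~\ref{obs: Visit2Edges-i}.

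First I would record the symmetry. The equilateral triangle is invariant under the group generated by the three reflections across the angle bisectors $AK$, $BL$, $CM$ (which coincide with the medians and altitudes and all meet at $I$). Such a reflection permutes the edges, hence permutes the six ordered visitations, while fixing its axis pointwise. For instance, the reflection across $AK$ swaps $AB\leftrightarrow AC$ and fixes $BC$, so it interchanges the pairs (LRD, RLD), (LDR, RDL) and (DLR, DRL). Consequently, for any $P$ lying on $AK$ the two orderings in each pair have equal cost, and the same holds on $BL$ and $CM$. This already shows that every angle bisector is contained in the locus where at least two optimal ordered visitations tie, which is one inclusion of the first claim. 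Moreover the group acts simply transitively on the six small triangles cut out by the bisectors and carries $R_1$-regions to $R_1$-regions, so it suffices to prove that LRD is optimal throughout $\triangle AMI$. Placing $A=(1/2,\sqrt3/2)$, $B=(0,0)$, $C=(1,0)$, the triangle $\triangle AMI$ is the corner region at $A$ on the $AB$-side of $AK$ (the side closer to $AB$), bounded by $AM\subset AB$, $AI\subset AK$ and $MI\subset CM$.

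Third, I would compare LRD against the five other orderings on $\triangle AMI$. The two principal comparisons are with RLD and with LDR, the orderings reached from LRD by a single transposition. As noted, LRD and RLD are swapped by the reflection across $AK$, so they tie exactly on $AK$; evaluating at a convenient boundary point (for example $P$ near $M$, where visiting $AB$ first is almost free while RLD must first travel to the distant edge $AC$) shows LRD $\le$ RLD on the entire $AB$-side of $AK$, which contains $\triangle AMI$. Likewise LRD and LDR are swapped by the reflection across $CM$ (the perpendicular bisector of $AB$), so they tie on $CM$ and LRD $\le$ LDR on the $A$-side, again containing $\triangle AMI$. The remaining three orderings RDL, DLR, DRL all begin either with the far edge $BC$ or with $AC$, and are dominated on $\triangle AMI$, either by direct unfolding via Lemma~\ref{lem: +bounce +subopt ordered visitiation} or by transitivity through the comparisons already made. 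I would also note that, because every angle of the equilateral triangle equals $\pi/3$, the optimal bouncing subcones degenerate to the bisector rays and the parabolic separators of the two-edge analysis collapse onto the bisectors, so every separator here is a straight line.

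Once LRD is established as uniformly optimal on $\triangle AMI$, its interior admits a unique optimal ordering, and applying the six elements of the symmetry group reproduces the statement on each of the six small triangles; this simultaneously shows that off the bisectors the optimal ordered visitation is unique, which is the reverse inclusion needed for the first claim, namely that the bisectors are exactly the tie-locus. The step I expect to be the main obstacle is the bookkeeping of the third paragraph: pinning down the tie-locus of LRD against each competing ordering and verifying that none of them enters the interior of $\triangle AMI$, so that LRD is optimal on the whole region rather than merely on a part of it. The symmetry reduction makes this tractable, and the degeneration of the parabolas in the equilateral case removes the most delicate subcase.
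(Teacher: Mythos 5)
Your proposal is correct in substance, but it is organized around a different engine than the paper's own (admittedly terse) proof sketch. The paper works directly with the unfolding machinery of Section~\ref{sec: visiting 3 edges}: it observes that points of $AI$ lie in the positive LRD and RLD bounce and subopt halfspaces, so that both ordered costs are distances to the two doubly-reflected lines $B'C'$ and $B''C''$, and then shows that the bisector of the angle formed by these two lines is the altitude from $A$ -- a fact it notes holds for \emph{every} non-obtuse triangle and reuses later (Observation~\ref{obs: reflectors bisector}, Lemma~\ref{lem: R1regionsRightIsosceles}); the equilateral case follows because there the altitude coincides with the bisector $AK$. You instead put the dihedral symmetry of the equilateral triangle first: the three reflections permute the six orderings, which yields the ties on the bisectors for free and reduces everything to showing LRD is optimal on the single cell $\triangle AMI$, which you then do by two transposition comparisons (LRD vs.\ RLD across $AK$, LRD vs.\ LDR across $CM$) plus transitivity for RDL, DLR, DRL. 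Your route is cleaner for this particular lemma precisely because the symmetry is maximal, whereas the paper's computation is the one that generalizes to the non-symmetric triangles treated in Lemmas~\ref{lem: R1regionsRightIsosceles} and~\ref{lem: R1regionsThinIsosceles}. Two points in your write-up are asserted rather than argued and deserve a line each if this were to be made rigorous: (i) that the tie locus of LRD vs.\ RLD is \emph{exactly} $AK$ (symmetry only gives containment; the ``exactly'' needs the observation that both costs are distances to two distinct unfolded lines, so the equidistant set is the pair of their angle bisectors, only one of which meets the triangle -- this is precisely the computation the paper does); and (ii) the transitivity chains for RDL, DLR, DRL, which do close up (e.g.\ $\mathrm{RLD}(P)=\mathrm{LRD}(\sigma(P))\leq \mathrm{LDR}(\sigma(P))=\mathrm{RDL}(P)$ for the reflection $\sigma$ across $AK$, since $\sigma(\triangle AMI)=\triangle ALI$ still lies on the $A$-side of $CM$), but only after one checks which side of each bisector the relevant image cell lands on. Neither issue is a gap in the approach; both are fillable with the same distance-to-line structure the paper uses, and your level of detail matches that of the paper's own proof sketch.
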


\begin{proof}
Points on $AI$ are in the positive LRD and RDL bounce and subopt halfspaces, since $AI$ is the bisector of $\angle A$. 
As such, points on segment $AI$ are the loci of points $P$, for which 
$$d(P, \{AB,BC,AC\})=d(P,[AB,AC,BC]), d(P,[AC,AB,BC]),$$ 
i.e. the points in which the optimal $R_1$ visitation is both LRD and RLD. 
\begin{figure}
\centering
\includegraphics[width=5cm]{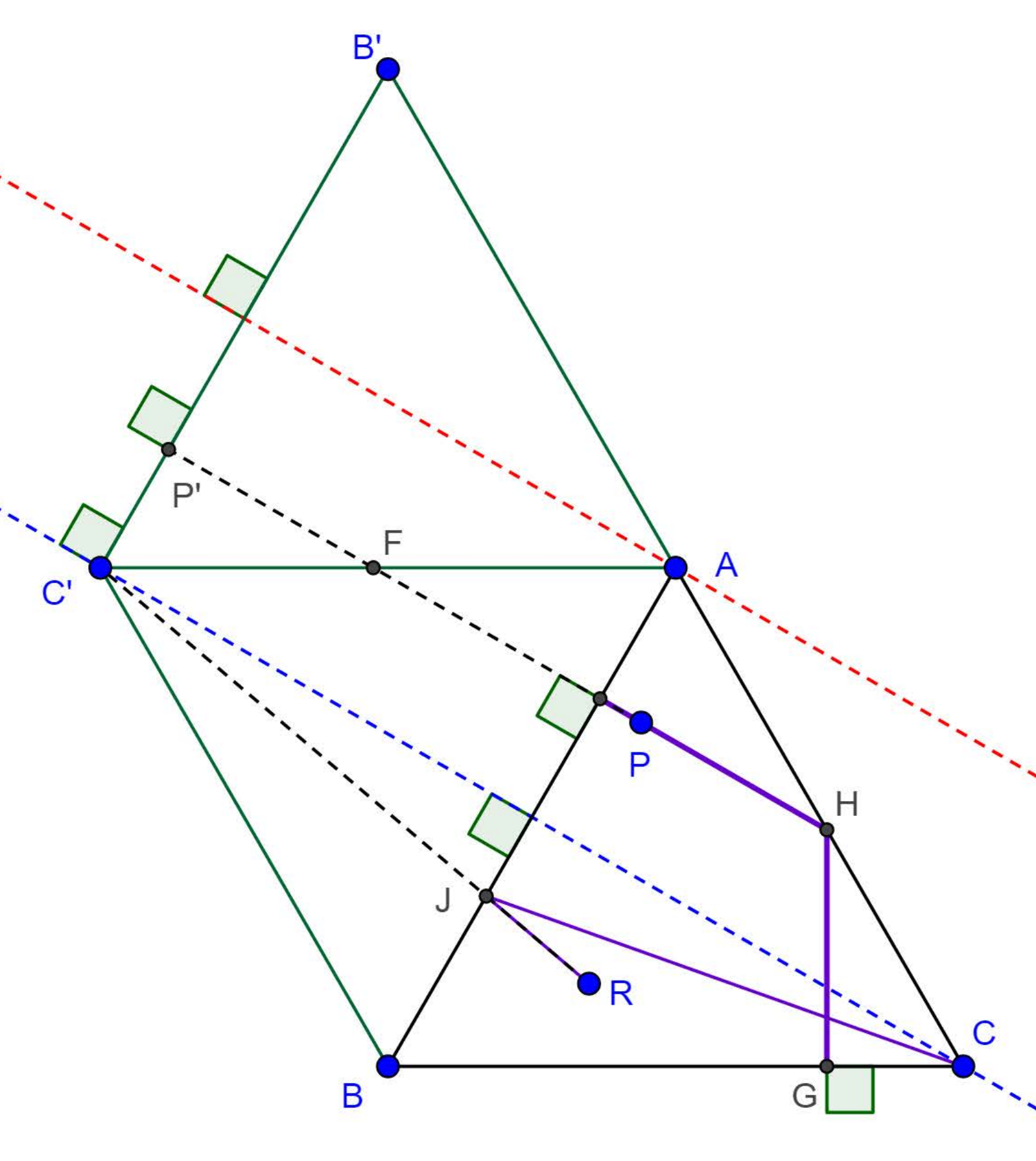}
\caption{
Equilateral $\triangle ABC$ shown with its LRD bounce indicator line (blue dotted line) and its LRD subopt indicator line (red dotted line).}
\label{fig: 3edgesPredereminedOrderSmallC-Equilateral}
\end{figure}
To see why, note that by Figure~\ref{fig: 3edgesPredereminedOrderSmallC-Equilateral}
we have that $d(P,[AB,AC,BC])=\norm{PP'}$. Considering the reflection of $C''$ of $C'$ around $A$, and the reflection $B''$ of $B$ around $A$, it is easy to see that the bisector of the line formed by $B'C'$ and $B''C''$ coincides with the altitude of $\triangle ABC$ corresponding to $BC$ (this property is actually true for every non-obtuse triangle). Since $\triangle ABC$ is equilateral triangle, its altitude corresponding to $BC$ is also the angle $\angle A$ bisector. 
  \end{proof}

The next lemma describes the $R_1$ regions of a right isosceles, as in Figure~\ref{fig: R1regionsRightIsosceles}. 
\begin{figure}
\centering
\includegraphics[width=2.3in]{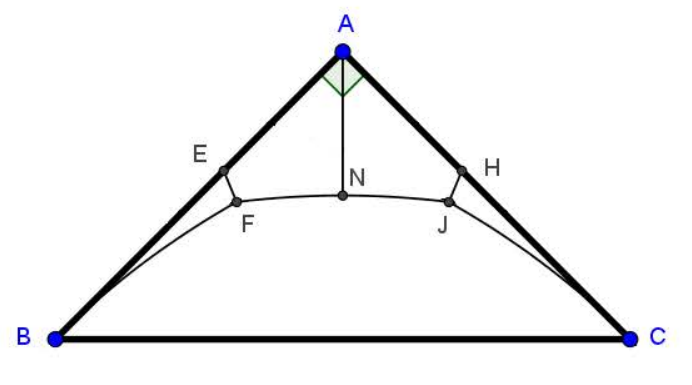}
\caption{The $R_1$ regions of a right isosceles triangle.}
\label{fig: R1regionsRightIsosceles}
\end{figure}
Curve $FJ$ is part of the parabola with directrix the relfection of $BC$ across $A$ and focus the reflection of $A$ across $BC$. 
Curve $BF$ is part of the parabola with directrix a line parallel to $AB$ which is $\norm{AB}$ away from $AB$, and focus the reflection of $A$ across $BC$. 
Curve $CJ$ is part of the parabola with directrix a line parallel to $AC$ which is $\norm{AC}$ away from $AC$, and focus the reflection of $A$ across $BC$. 
$CE$ (not shown in the figure) is the bisector of $\angle C$, and segment $EF$ is part of the reflection of that bisector across $AB$. 
$BH$ is the bisector of $\angle C$, and segment $HJ$ is part of the reflection of that bisector across $AC$. 
Segment $AN$ is part of the altitude corresponding to $A$. 
\begin{lemma}[$R_1$ regions of a right isosceles triangle]
\label{lem: R1regionsRightIsosceles}
Consider right isosceles $\tr=\triangle ABC$, and starting point $P$. Then, the optimal visitation strategy for $R_1(\tr,P)$ is: \\
- An LRD visitation, if $P \in AEFN$,\\
- An LDR visitation if $P \in BFE$, and \\
- Both an $DRL, DLR$ visitation if $P \in BCJF$ (trajectory visits $\{AB,AC\}$ at point $A$). 
\end{lemma}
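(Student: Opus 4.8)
The plan is to exploit the symmetry of the right isosceles triangle and to reduce every ordered‑visitation cost to a distance‑to‑line or distance‑to‑point via the unfolding behind Lemma~\ref{lem: +bounce +subopt ordered visitiation}. I place the right angle at $A$ and use standard analytic form $A=(1/2,1/2)$, $B=(0,0)$, $C=(1,0)$, so that $\angle A=\pi/2$ and $\angle B=\angle C=\pi/4$. The altitude $AN$ from $A$ lies on the line $x=1/2$ and is the axis of symmetry that swaps $B\leftrightarrow C$ and interchanges the roles of L and R; hence it maps LRD$\leftrightarrow$RLD, LDR$\leftrightarrow$RDL, and fixes the pair $\{$DRL, DLR$\}$. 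It therefore suffices to analyse the half $x\le 1/2$ and to read off the $C$‑side regions by symmetry. Throughout, let $A'=(1/2,-1/2)$ denote the reflection of $A$ across $BC$, the common focus of all three parabolas.

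First I would compute the three relevant ordered costs by unfolding, reflecting the final target back across the intermediate mirrors exactly as in the proof of Lemma~\ref{lem: +bounce +subopt ordered visitiation}. Reflecting $BC$ across $AC$ and then across $AB$ sends it to the line $y=1$, so $d(P,[AB,AC,BC])=1-y_P$; reflecting $AC$ across $BC$ and then across $AB$ sends it to the line $x-y+1=0$, so $d(P,[AB,BC,AC])=(x_P-y_P+1)/\sqrt2$. For the DRL and DLR visitations the key simplification is that, since $\angle A=\pi/2$, the optimal bouncing subcone of $\angle A$ (of angle $3\angle A-\pi=\pi/2$) is all of $\angle A$; hence by Observation~\ref{obs: Visit2Edges-ii} every $T\in BC$ satisfies $d(T,\{AB,AC\})=\norm{TA}$, so a robot that visits $BC$ first and then the two edges meeting at $A$ pays $\min_{T\in BC}(\norm{PT}+\norm{TA})=\norm{PA'}$, and the two orders DRL, DLR coincide. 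Along the way I must verify that each unfolded bounce point actually lands on the relevant edge (and not beyond a vertex) for the $P$ in the cell under consideration; this validity check is routine but indispensable, and it is what makes the first‑L unfoldings valid only on the half $x\le 1/2$.

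Next I would identify the separators as equal‑cost loci and match them to the curves in the statement. The loci $1-y_P=\norm{PA'}$ and $(x_P-y_P+1)/\sqrt2=\norm{PA'}$ are by definition the parabolas with focus $A'$ and directrices $y=1$ and $x-y+1=0$; the former directrix is the reflection of $BC$ across $A$, and the latter is the line parallel to $AB$ at distance $\norm{AB}=1/\sqrt2$ on the far side of $AB$. These are precisely curves $FJ$ and $BF$ (with $CJ$ the mirror of $BF$), separating the DRL/DLR region from the LRD and LDR regions respectively. For the LRD‑versus‑LDR separator I would argue conceptually rather than algebraically: both strategies first bounce off $AB$, so reflecting $P$ across $AB$ to $P^{\ast}$ reduces the comparison to the order in which the two edges $AC,BC$ meeting at $C$ are visited from $P^{\ast}$; by the two‑edge analysis (Observations~\ref{obs: Visit2Edges-ii} and~\ref{obs: Visit2Edges-i}) the borderline between the two orders is the bisector of $\angle C$, so reflecting back, the LRD/LDR separator is the reflection of the bisector of $\angle C$ across $AB$, namely segment $EF$ (and $HJ$ is its mirror). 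Finally $AN$ is the LRD/RLD separator, where the two costs agree by symmetry.

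The last and hardest step is to assemble these separators into the claimed cells $AEFN$, $BFE$, $BCJF$ and to establish that the named strategy is optimal among all six, not merely against its neighbours. I would show the three separator families meet consistently at the vertices $E,F,J,N$ (so these cells and their mirrors tile $\tr$), and then argue optimality cell by cell: on the half $x\le1/2$ any strategy whose first move is R is sub‑optimal, since its valid unfolding is the reflection of a first‑L strategy living on the right half, while the three pairwise comparisons above single out the winner among the first‑L and first‑D strategies in each cell. The main obstacle is precisely this exhaustive‑but‑clean domination argument together with the bounce‑validity checks: one must rule out, for instance, a genuine three‑bounce DRL trajectory (with a real reflection off $AC$ rather than the collapse to vertex $A$) ever beating $\norm{PA'}$ inside $BCJF$, and confirm that no ordered visitation degenerates in a cost‑changing way inside the interior of each cell. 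Once the partition is shown to be consistent and each cell's winner verified, the three stated region descriptions follow, the $C$‑side cases being obtained by reflection across $AN$.
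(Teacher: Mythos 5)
Your proposal is correct and follows essentially the same route as the paper's (sketched) proof: unfold each ordered visitation via reflections as in Lemma~\ref{lem: +bounce +subopt ordered visitiation}, observe that the DRL/DLR cost degenerates to $\norm{PA'}$ because the bouncing subcone of the right angle $\angle A$ fills the whole angle, and obtain the separators $FJ$, $BF$ as parabolas with focus $A'$ and the stated directrices, and $EF$ as the reflection of the $\angle C$ bisector across $AB$. Your explicit coordinate computations and the closing remarks on bounce-validity and global (six-way) domination only make explicit the checks the paper leaves implicit in its proof sketch.
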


\begin{proof}
The reader may consult Figure~\ref{fig: R1regionsRightIsosceles}. 
As in the proof of Lemma~\ref{lem: R1regionsEquilateral}, points on $AN$ are the loci of points in which the optimal strategy is both LRD and RLD. 
Points $P$ in $AEFN$ are in the positive LRD bounce and subopt halfspace, and so by Lemma~\ref{lem: +bounce +subopt ordered visitiation}, $R_1(\tr,P)=d(P,B'C')$, where $B',C'$ are the reflections of $B,C$ across $A$, respectively (points $B', C'$ along with the LRD bounce indicator line and the LRD subopt indicator line are depicted separately in Figure~\ref{fig: 3edgesPredereminedOrderSmallC-RightIsosceles}). 
\begin{figure}[h!]
\centering
\includegraphics[width=4cm]{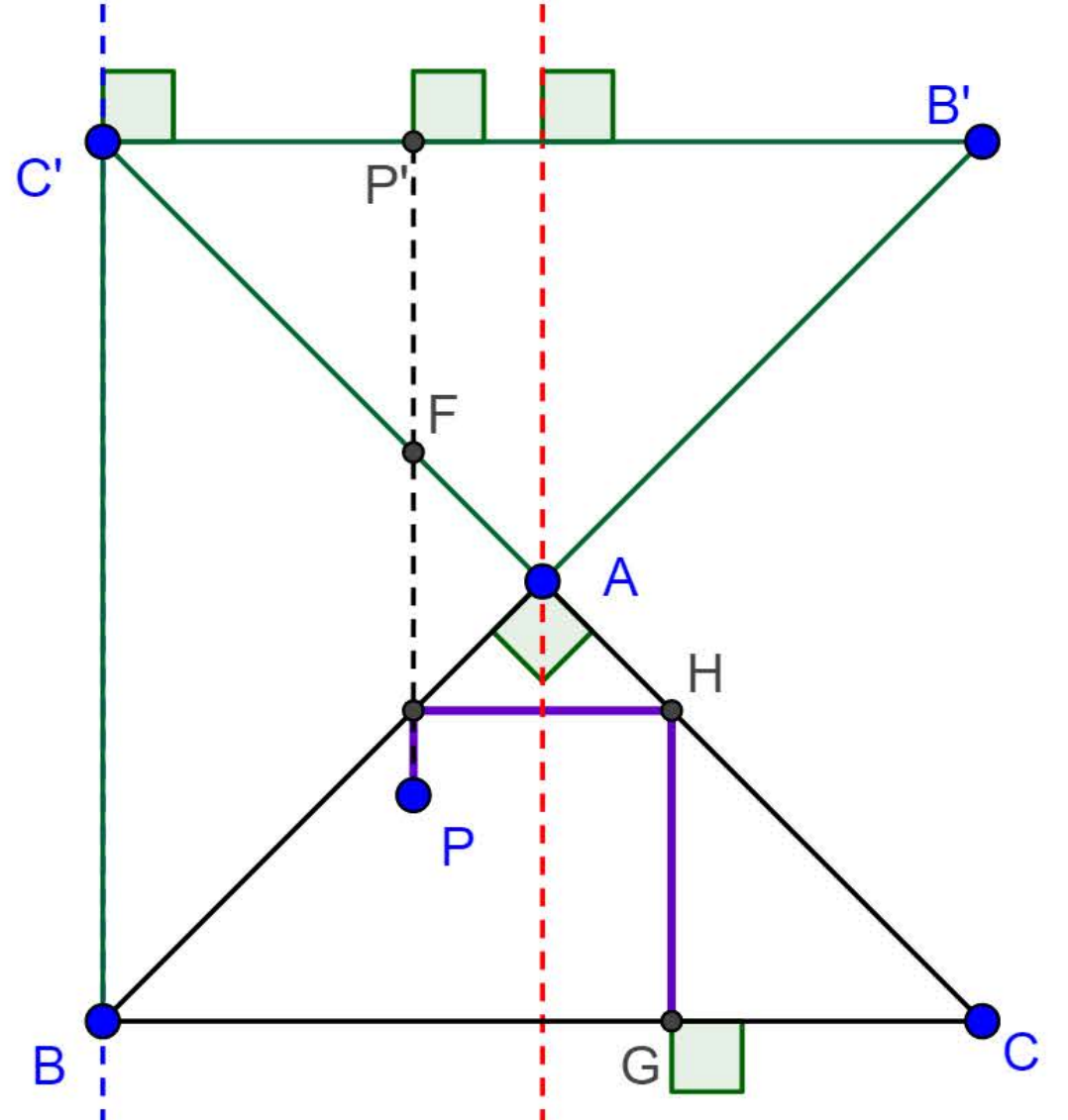}
\caption{
Right isosceles $\triangle ABC$ shown with its LRD bounce indicator line (blue dotted line) and its LRD subopt indicator line (red dotted line).}
\label{fig: 3edgesPredereminedOrderSmallC-RightIsosceles}
\end{figure}
Points $P$ in region $BCJF$ are in the negative DRL (and DLR) bounce halfspace and in the positive subopt halfspace, and so by Lemma~\ref{lem: +bounce +subopt ordered visitiation}, we have that $R_1(\tr,P)$ is obtained by a degenerate bouncing trajectory visiting both $AB,AC$ at point $A$. Considering the reflection $A'$ of $A$ across $BC$, the cost in this case would be $\norm{PA'}$. 
Hence, the loci of points $P$ for which $d(P,B'C')=\norm{PA'}$ is the parabola described in the statement of the lemma, whose portion reads as curve $FJ$.

The rest of the separators follow using similar arguments. Indeed, line segment $EF$ is the loci of points in which the optimal $R_1$ visitation is both LDR and LRD (so the separator is formed by the reflection of the angle bisector of the angle across the first visited edge, a property that can be shown for every non-obtuse triangle). 
Finally, the curve $BF$ (part of a parabola, as described in the statement of the lemma), is the loci of points $P$ for which the optimal $R_1$ visitation is LDR and has cost equal to $\norm{PA'}$. 

The analysis for the rest of the depicted separators is identical, due to the triangle's symmetry along the bisector of $\angle A$ (that coincides with $AN$). 
  \end{proof}

Next we consider a ``thin'' isosceles $\tr=\triangle ABC$ with $\angle A \leq \pi/3$, as in Figure~\ref{fig: R1regionsThinIsocleles} (Eventually we will invoke the next lemma for $\angle A \rightarrow 0$).
\begin{figure}
\centering
\includegraphics[width=1.5in]{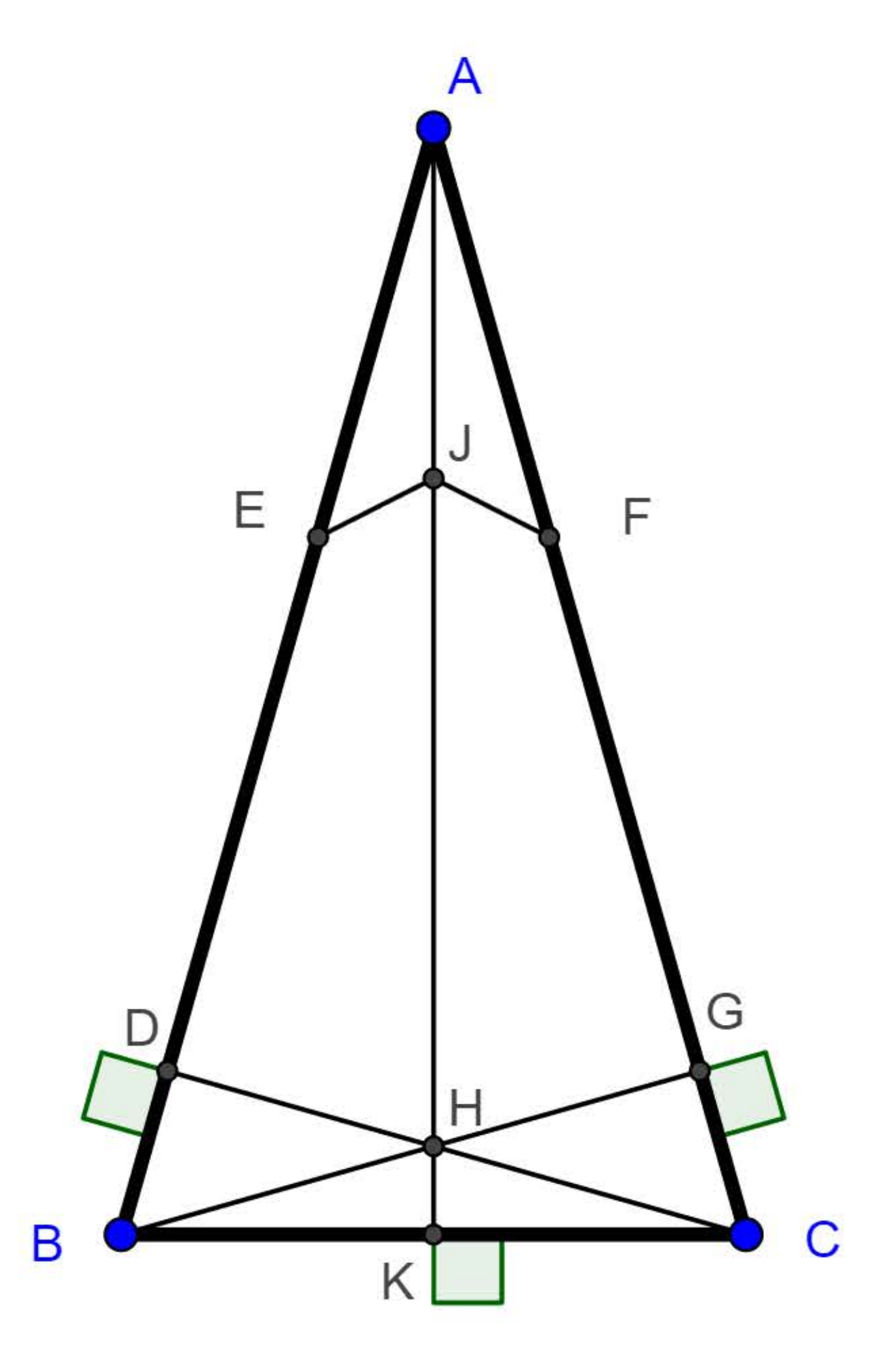}
\caption{The $R_1$ regions of an isosceles triangle $ABC$ with small $\angle A$.}
\label{fig: R1regionsThinIsocleles}
\end{figure}
$AK$ is the altitude corresponding to $A$. 
$CD, BG$ are the 
altitudes corresponding to $AB, AC$, respectively. 
$CE,BF$ (not shown) are the extreme rays of the optimal bouncing subcone corresponding to $C,B$, respectively. 
$H$ is the intersection of $AK$ with $BG$ (and $CD$), i.e. the orthocenter of the triangle. 
Segment $EJ$ (as part of a line) is the reflection of $EC$ (as part of a line) across  $AB$. 
Segment $FJ$ (as part of a line) is the reflection of $BF$ (as part of a line) across  $AC$. 
\begin{lemma}[$R_1$ regions of a thin isosceles triangle]
\label{lem: R1regionsThinIsosceles}
Consider isosceles $\tr=\triangle ABC$, with $\angle A \leq \pi/3$ and starting point $P$. Then, the optimal visitation strategy for $R_1(\tr,P)$ is: \\
- An LRD visitation if $P \in AEJ$, \\
- Both LRD and LDR (optimal strategy is to visit first $AB$ and then move to $C$), if $P \in EDHJ$, \\
- An LDR visitation, if $P \in DBH$, and \\
- A DLR visitation if $P \in BKH$. 
\end{lemma}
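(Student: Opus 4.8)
I place $\tr$ symmetric about the altitude $AK$, so that $K$ is the midpoint of $BC$ and reflection across $AK$ swaps $B\leftrightarrow C$; since $\angle A\leq \pi/3$ we have $\angle B=\angle C\geq \pi/3$, hence the optimal bouncing subcones of $\angle B,\angle C$ are non-degenerate while that of $\angle A$ is empty. The reflection across $AK$ pairs the six ordered visitations as LRD$\leftrightarrow$RLD, LDR$\leftrightarrow$RDL, DLR$\leftrightarrow$DRL, so exactly as in the proof of Lemma~\ref{lem: R1regionsEquilateral} the axis $AK$ is the locus where each left strategy ties its mirror, and it suffices to analyse the half $\triangle ABK$ together with the three left strategies LRD, LDR, DLR. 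Along the way $AJ,JH$ will be LRD$=$RLD ties (and, inside $EDHJ$, also LDR$=$RDL ties), and $HK$ a DLR$=$DRL tie.

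The regimes come from Lemma~\ref{lem: +bounce +subopt ordered visitiation} applied to LRD and its LDR, DLR analogues. Because $\angle A\leq\pi/3$, visiting $\{AB,AC\}$ at the apex $A$ is never optimal, so the ``negative subopt'' trajectory $\langle T,A,A''\rangle$ never wins: the subopt indicator lines miss $\triangle ABK$ and every point of it lies in the positive subopt halfspace, leaving only the bounce indicator line to cut the half-triangle. I would show that $\triangle AEJ$ lies in the positive LRD bounce halfspace, so there $R_1(\tr,P)$ is realised by the genuine three-bounce LRD trajectory of Lemma~\ref{lem: +bounce +subopt ordered visitiation}, of cost $d(P,B'C')$, the distance to the unfolded image $B'C'$ of $BC$; and that the tetragon $EDHJ$ lies in the negative LRD bounce (still positive subopt) halfspace, so there the optimum is the degenerate trajectory that bounces once on $AB$ and then runs to the vertex $C$, of cost $\norm{PC'}$ with $C'$ the reflection of $C$ across $AB$. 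Since this trajectory touches $BC$ and $AC$ simultaneously at $C$, it reads as both an LRD and an LDR visitation, which is the ``both'' claim; $DBH$ is then treated as genuine LDR and $BKH$ as DLR by the analogous indicator lines.

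It remains to match each separator, and the key observation is a degeneration: in every unfolding above the reflected image of the vertex $C$ lands \emph{on} the unfolded target line, so the locus where a genuine ordered visitation ties its vertex-$C$ degeneration is not a parabola but the straight line through that common point, perpendicular to the unfolded edge. Concretely, for LRD the focus $C'$ lies on the directrix $B'C'$ by construction, so the separator $EJ$ is the line through $C'$ perpendicular to $B'C'$, i.e.\ the LRD bounce indicator line, and a short angle chase identifies it with the reflection across $AB$ of the subcone extreme ray $CE$. For LDR the focus $C'$ again lies on the relevant unfolded copy of $AC$, so the genuine-LDR versus degenerate separator $DH$ is a line through $C'$ perpendicular to that copy, which coincides with the altitude $CD$ \emph{precisely because $\angle B=\angle C$} (the required perpendicularity holds iff the base angles are equal). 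Finally $BH$ is the LDR$=$DLR tie, the bisector of the two images of $AC$ obtained by reflecting across $AB,BC$ in the two orders, i.e.\ rotating $AC$ about $B$ by $\pm 2\angle B$; their bisector is the line through $B$ perpendicular to $AC$, that is the altitude from $B$.

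The main obstacle is exactly the verification of these identities together with the global consistency of the picture: confirming the straight-line degenerations and that in the isosceles case they are the stated altitudes (which genuinely uses $\angle B=\angle C$), checking the incidences and the order of $E,D$ along $AB$ and of $J,H$ along $AK$ so that $AEJ,EDHJ,DBH,BKH$ tile $\triangle ABK$ with no gaps or overlaps for every $\angle A\leq\pi/3$, and carrying the bouncing-subcone bookkeeping at $B,C$ carefully so that no spurious ``go straight to a vertex'' region is created.
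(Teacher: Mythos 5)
Your sketch is correct and follows essentially the same route as the paper's proof: the symmetry axis $AK$ handles the left/right ties, Lemma~\ref{lem: +bounce +subopt ordered visitiation} splits each half into positive/negative bounce regimes (distance to the unfolded line $B'C'$ in $AEJ$ versus distance to the reflected vertex $C'$ in $EDHJ$), and the separators $EJ$, $DH$, $BH$ are identified as the bounce indicator lines through $C'$ (which are straight lines rather than parabolas exactly because $C'$ lies on the unfolded target) together with the LDR/DLR angle bisector. Your explicit check that $DH$ and $BH$ coincide with the altitudes from $C$ and $B$ precisely because $\angle B=\angle C$ is a detail the paper defers to ``similar reasoning as in Lemma~\ref{lem: R1regionsEquilateral}'', and it is correct.
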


\begin{proof}
The separator $AK$ is justified as in the proof of Lemma~\ref{lem: R1regionsEquilateral}.
Points in region $AEJ$ are in the positive LRD bounce and subopt halfspace, and so by Lemma~\ref{lem: +bounce +subopt ordered visitiation}, the optimal $R_1$ visitation is given as a distance of $P$ to a line. 
Points in region $EDHJ$ are in the negative LRD bounce and positive subopt halfspace, and so by Lemma~\ref{lem: +bounce +subopt ordered visitiation}, the optimal $R_1$ visitation is given as a distance of $P$ to a point (the projection $C'$ of $C$ across $AB$). The transition in which the optimal visitation does not visit vertex $C$ happens exactly at segment $EJ$ whose extension passes through $C'$, and in particular has the property that $JC'$ is perpendicular to $BC'$. 
Finally, the justification of separator segments $DH, BH$ is identical to the reasoning of the equilateral triangle (see Figure~\ref{fig: R1regionsEquilateral}) as provided in the proof of Lemma~\ref{lem: R1regionsEquilateral}. 
  \end{proof}

\section{Optimal Visitations of Some Special Starting Points}
\label{sec: special visitations}

\subsection{$R_2$ Cost of the Incenter}
\begin{lemma}
\label{lem: R2costIncenter}
Consider $\triangle ABC \in \trs$ with largest angle vertex $C$ and incenter $I$. Then $R_2(I)=\norm{IC}$. 
\end{lemma}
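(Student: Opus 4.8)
The plan is to use the $R_2$ characterization of Observation~\ref{obs: cost 1,2,3}(ii) to reduce the statement to a comparison of three two-edge (i.e., angle) visitation costs, one per vertex. Write $r$ for the inradius, so that $d(I,AB)=d(I,BC)=d(I,CA)=r$ since $I$ is equidistant from all three edges. For any pair of edges, touching both forces the robot to travel at least the perpendicular distance to one of them, hence at least $r$; consequently each of the three ``$\max$'' terms in Observation~\ref{obs: cost 1,2,3}(ii) is realized by the robot serving two edges, and
\[
R_2(I)=\min\bigl\{\, d(I,\{CA,CB\}),\ d(I,\{AB,AC\}),\ d(I,\{BC,BA\})\,\bigr\},
\]
where the three pairs are exactly the two edges incident to $C$, $A$, $B$, respectively.

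Next I would evaluate each term with the two-edge tools of Section~\ref{sec: visiting 1,2 edges}. For a vertex $V$ with angle $\theta_V$, note $\norm{IV}=r/\sin(\theta_V/2)$ (the foot of the inradius at $V$ subtends the half-angle $\theta_V/2$ in a right triangle with hypotenuse $IV$), and recall $I$ lies on the bisector of $\angle V$. If $\theta_V\ge \pi/3$, then $I$ is inside the optimal bouncing subcone of $\angle V$, so by Observation~\ref{obs: Visit2Edges-ii} the corresponding cost is $\norm{IV}$. If $\theta_V<\pi/3$, the subcone is empty and Observation~\ref{obs: Visit2Edges-i} applies: the cost equals the perpendicular distance from $I$ to the reflection of one incident edge-line across the other. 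Placing $V$ at the origin with the bisector along the positive $x$-axis, that reflected line passes through $V$ at angle $3\theta_V/2$, so the cost is $\norm{IV}\sin(3\theta_V/2)=r\,(3-4\sin^2(\theta_V/2))$, using the identity $\sin(3x)/\sin(x)=3-4\sin^2 x$.

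Since $C$ is the largest angle, $\theta_C\ge \pi/3$, giving $d(I,\{CA,CB\})=\norm{IC}=r/\sin(\theta_C/2)\le 2r$ (as $\sin(\theta_C/2)\ge \sin(\pi/6)=1/2$). It remains to show the other two terms are at least $\norm{IC}$. For a vertex $V\in\{A,B\}$ with $\theta_V\ge\pi/3$, its cost $r/\sin(\theta_V/2)\ge r/\sin(\theta_C/2)$ because $\theta_V\le\theta_C$ and $\sin$ is increasing on $[0,\pi/2]$. For a vertex $V$ with $\theta_V<\pi/3$, its cost is $r\,(3-4\sin^2(\theta_V/2))>2r\ge \norm{IC}$, since $\theta_V/2<\pi/6$ forces $\sin(\theta_V/2)<1/2$. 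In every case the minimum is attained at $C$, so $R_2(I)=\norm{IC}$.

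The main obstacle is the small-angle regime $\theta_V<\pi/3$, where the optimal two-edge trajectory does \emph{not} pass through the vertex and $d(I,\{\cdot,\cdot\})<\norm{IV}$, so the naive ``go to the vertex'' bound fails; the resolution is the exact reflected-line distance together with the inequality $3-4\sin^2(\theta_V/2)>2\ge 1/\sin(\theta_C/2)$, which is precisely where the largest-angle hypothesis ($\theta_C\ge\pi/3$) enters.
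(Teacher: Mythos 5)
Your proposal is correct, and it reaches the conclusion by a genuinely different (and shorter) route than the paper. Both arguments share the same first reduction: at the incenter every single-edge distance equals the inradius $r$, so each max in Observation~\ref{obs: cost 1,2,3}(ii) is realized by the two-edge robot, $R_2(I)$ is the minimum of the three two-edge costs, and the cost at the largest-angle vertex $C$ is $\norm{IC}$ by Observation~\ref{obs: Visit2Edges-ii}. Where you diverge is in proving that the costs at $A$ and $B$ are no smaller: the paper passes to the standard analytic form, writes $\norm{I'D}-\norm{IC}$ as an explicit algebraic function of $(p,q)$ via Corollary~\ref{cor: incenter}, and runs a monotonicity-plus-boundary analysis over the region $3(1-p)^2\leq q^2\leq 3p^2$, checking signs of two one-variable expressions. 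You instead derive exact closed forms for the two-edge cost at a vertex $V$ purely in terms of $r$ and $\theta_V$ --- namely $r/\sin(\theta_V/2)$ when $\theta_V\geq\pi/3$ and $\norm{IV}\sin(3\theta_V/2)=r\left(3-4\sin^2(\theta_V/2)\right)$ when $\theta_V<\pi/3$, via $\sin(3x)/\sin x=3-4\sin^2x$ --- after which the comparison with $\norm{IC}=r/\sin(\theta_C/2)\leq 2r$ is a one-line monotonicity statement. This buys a coordinate-free proof with no calculus or root-finding, and as a bonus it exhibits all three candidate costs explicitly. The only small point worth adding for completeness (the paper does the analogous check for its point $D$) is that the foot of the perpendicular from $I$ to the reflected edge-line actually lies on the reflected segment; this is immediate since it sits at distance $\norm{IV}\cos(3\theta_V/2)\leq\norm{IV}\cos(\theta_V/2)$ from $V$, which is the tangent length from $V$ to the incircle and hence at most the incident edge length by the triangle inequality.
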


\begin{proof}
Incenter $I$ is equidistant from all edges of $\triangle ABC$. Since in every optimal $R_2$ strategy, one robot visits an edge and the other visits the remaining two (which is at least as costly as visiting any one edge), the cost of $R_2(I)$ equals the cheapest cost of visiting any two edges of $\triangle ABC$. Now, since $\angle C$ is the largest angle, it follows that $\angle C \geq \pi/3$, and hence by Observation~\ref{obs: Visit2Edges-ii} we have that $d(I,\{AC,BC\})=\norm{IC}$. Therefore, the claim follows once we prove that 
$\norm{IC} \leq \max\{ d(I,\{AB,BC\}), d(I,\{BA,AC\})$, or equivalently 
once we prove that $\norm{IC} \leq  d(I,\{AB,BC\})$, for every $\angle B \leq \angle C$. 


 \begin{figure}[h!]
\centering
  \includegraphics[width=6cm]{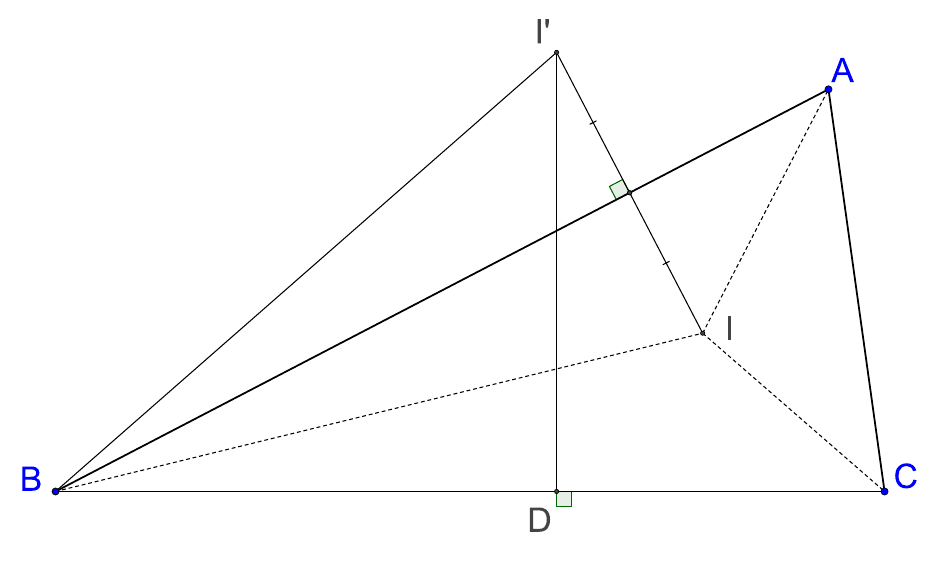}
\caption{
Showing that $\norm{IC} \leq \max\{ d(I,\{AB,BC\}), d(I,\{BA,AC\})$.
}
\label{fig: R2costIncenter}
\end{figure}

Since $\angle C$ is the dominant angle, it follows that $\angle B\leq \pi/3$, see also Figure~\ref{fig: R2costIncenter}. Consider the reflection $I'$ of $I$ around $AB$, and the projection $D$ of $I'$ onto $BC$ (because $\angle B \leq \pi/3$, point $D$ falls within segment $BC$). Then, by Observation~\ref{obs: Visit2Edges-i}, we have $d(I, \{AB,BC\}) = \norm{I'D}$.
So it remains to prove that $\norm{IC}\leq \norm{I'D}$. 
For that we employ the standard analytic form of $\triangle ABC$. 
 For notational comvenience, we introduce notation $\alpha = \norm{BC}=1$, $\beta=\norm{AC}=\sqrt{(1-p)^2+q^2}$ and $\gamma=\norm{AB}=\sqrt{p^2+q^2}$.

Note that $\angle B\leq \pi/3$, and hence $\sqrt3/2 \geq \sin( B) = q/\gamma$. Moreover, since $\angle C\geq \pi/3$, we have $\sqrt3/2 \leq \sin C = q/\beta$. Combining the two inequalities we get the following condition
\begin{equation}
\label{equa: pq condition}
3(1-p)^2 \leq q^2 \leq 3p^2,
\end{equation}
which in particular (combined with that $p\leq 1$) implies that $1/2\leq p\leq 1$. 

By Corollary~\ref{cor: incenter}, the coordinates of the incenter $I=(x,y)$ can be computed as 
$$
x=
\frac{\gamma+p}{1+\beta+\gamma}, 
~~
y=
\frac{q}{1+\beta+\gamma}.
$$
Point $I'=(x',y')$ can be computed by rotating $I$ by angle $B$ (using the Cartesian system), so it follows that $\norm{I'D} =y'= x\sin(B) +y\cos(B)$. After elementary algebraic manipulations, we obtain that 
$$
\norm{I'D} - \norm{IC}
=
\frac{1}{(1+\beta+\gamma)^2}
\left(
\frac{4p^2q^2+4pq^2\sqrt{p^2+q^2}}{p^2+q^2}
-(1+\sqrt{(1-p)^2+q^2}-p)^2
\right).
$$
It is easy to see that 
$\frac{4p^2q^2+4pq^2\sqrt{p^2+q^2}}{p^2+q^2}$ is increasing in $p$, and that $(1+\sqrt{(1-p)^2+q^2}-p)^2$ is decreasing in $p$ (when $p\in [0,1]$).
Hence, using also~\eqref{equa: pq condition}, a lower bound to $\norm{I'D} - \norm{IC}$ is obtained by setting $3(1-p)^2 =q^2$ or by setting $q^2 = 3p^2$ (and the valid lower bound would be the minimum of the two). Next we will use that $1/2\leq p \leq 1$. 

Elementary calculations show that when $3(1-p)^2 =q^2$, we have
$$
\norm{I'D} - \norm{IC}
\geq
-\frac{3 (p-1)^2 \left(2 p \left(-2 \sqrt{4 p^2-6 p+3}+4 p-9\right)+9\right)}{4 p^2-6 p+3}.
$$
The real roots of the latter function of $p$ are $p=1/2,1$. Hence, $\norm{I'D} - \norm{IC}$ preserves sign for all $p\in [1/2,1]$, and the sign is the same as, say, when $p=2/3$, in which case the value of the function becomes $\frac{9}{7} \left(\frac{8 \sqrt{7}}{27}+\frac{16}{27}\right)-1 \approx 0.76981 \geq 0$. 

Finally, when $q^2 = 3p^2$, we obtain that 
$$
\norm{I'D} - \norm{IC}
\geq
2 p-\sqrt{4 p^2-2 p+1}.
$$
The latter continuous expression of $p$ has only one real root $p=1/2$, and it is clearly positive when $p\geq 1/2$.
Hence, we conclude again that $\norm{I'D} - \norm{IC}\geq 0$, as wanted. 
  \end{proof}

\subsection{$R_1$ Cost of the Incenter}

\begin{lemma}
\label{lem: incenter R1 cost}
For non-obtuse $\triangle ABC$ with incenter $I$, let $\angle A$ be its largest angle. Then, $R_1(I) = \norm{IA'}$, where $A'$ is the reflection of $A$ across $BC$.  
\end{lemma}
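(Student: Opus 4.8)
The plan is to establish matching upper and lower bounds on $R_1(I)$, both equal to $\norm{IA'}$.

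For the upper bound, I first note that since $\angle A$ is the largest angle of a triangle we have $\angle A \ge \pi/3$, so $\angle A$ has a (possibly degenerate) optimal bouncing subcone whose axis is the bisector of $\angle A$. As the incenter $I$ lies on every angle bisector, it lies on this axis and hence inside the subcone; by Observation~\ref{obs: Visit2Edges-ii} this already gives $d(I,\{AB,AC\})=\norm{IA}$. To cover all three edges cheaply, let $T_0$ be the intersection of segment $\overline{IA'}$ with $BC$ (this intersection exists and is unique because $I$ and $A'$ lie on opposite sides of line $BC$). The trajectory $\langle I,T_0,A\rangle$ visits $BC$ at $T_0$ and both $AB,AC$ at the vertex $A$, and since $\norm{T_0A}=\norm{T_0A'}$ its length is $\norm{IT_0}+\norm{T_0A'}=\norm{IA'}$. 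Hence $R_1(I)\le \norm{IA'}$.

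For the lower bound I would decompose $R_1(I)$ according to the first edge that an optimal trajectory touches: writing $g_e=\min_{T\in e}\{\norm{IT}+d(T,\mathcal C_e)\}$, where $\mathcal C_e$ denotes the pair of edges other than $e$, any feasible trajectory has length at least $g_e$ for the edge $e$ it reaches first, so $R_1(I)=\min\{g_{BC},g_{AB},g_{AC}\}$. The claim then reduces to showing $g_{BC}=\norm{IA'}$ and $g_{AB},g_{AC}\ge \norm{IA'}$. For $e=BC$, since $d(T,\{AB,AC\})\le \norm{TA}$ for every $T$, we get $g_{BC}\le \min_{T\in BC}\{\norm{IT}+\norm{TA}\}=\norm{IA'}$; for the reverse inequality the subcone geometry re-enters. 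The down-first ordered visitations $[BC,AB,AC]$ and $[BC,AC,AB]$ are governed by the DLR/DRL analogue of Lemma~\ref{lem: +bounce +subopt ordered visitiation}, and because $\angle A\ge \pi/3$ the relevant bounce off $AB$ (resp.\ $AC$) degenerates to the vertex $A$; I would argue that $I$ lies in the corresponding negative-bounce/positive-subopt halfspaces, so the optimal down-first visitation is exactly $\langle I,T_0,A\rangle$ of cost $\norm{IA'}$, giving $g_{BC}=\norm{IA'}$.

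The main obstacle is the remaining inequality $g_{AB},g_{AC}\ge \norm{IA'}$, i.e.\ ruling out that touching one of the short legs $AB$ or $AC$ first could beat the down-first strategy. I would attack $g_{AB}$ (and symmetrically $g_{AC}$) by passing to the standard analytic form $B=(0,0),C=(1,0),A=(p,q)$, substituting the incenter coordinates from Corollary~\ref{cor: incenter}, and using the hypothesis that $\angle A$ is maximal to pin down the admissible range of $(p,q)$ (exactly as in the derivation of~\eqref{equa: pq condition}). The optimal value $g_{AB}$ is again one of the three closed-form types (distance to a reflected line, to a vertex, or a vertex-plus-altitude expression), so after selecting the governing regime the inequality $g_{AB}\ge\norm{IA'}$ becomes a single algebraic inequality in $p,q$; I expect to finish it exactly as in the proof of Lemma~\ref{lem: R2costIncenter}, by reducing to the extreme configurations of the constraint, factoring out the boundary roots, and checking the sign at one interior point. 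This coordinate case analysis, rather than any single clever step, is the hard part.
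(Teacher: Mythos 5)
Your upper bound and the identification $g_{BC}=\norm{IA'}$ are correct and match the paper's opening step (the degenerate down-first bounce through $T_0=IA'\cap BC$, landing at $A$ inside the bouncing subcone of $\angle A$). The decomposition by first edge visited is also equivalent to the paper's comparison against the six ordered visitations, since $g_{AB}=\min\{d(I,[AB,BC,AC]),d(I,[AB,AC,BC])\}$. However, the entire substance of the lemma is the inequality $g_{AB},g_{AC}\geq\norm{IA'}$, and your plan for it contains a genuine gap: you assert that after ``selecting the governing regime'' the claim becomes a single algebraic inequality in $p,q$ finishable as in Lemma~\ref{lem: R2costIncenter}. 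This underestimates the problem in two concrete ways. First, which of the three cost formulas governs $d(I,[AB,AC,BC])$ (distance to the reflected line $B'C'$, distance to the vertex $C'$, or $\norm{IA}$ plus an altitude) is itself a nontrivial condition on the triangle --- the paper needs Observations~\ref{obs: condition on I interior} and~\ref{obs: condition on I interior new}, which involve a nonlinear trigonometric constraint --- so the case analysis cannot be bypassed. Second, and more importantly, the key inequality in the reflected-line regime is \emph{false} without those membership constraints: the paper explicitly notes that the ratio $h_1(B,C)=\bigl(d(I,B'C')/\norm{IA'}\bigr)^2$ drops to about $9/10$ if one does not condition on $I$ lying in the positive LRD bounce and subopt halfspaces. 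Hence no unconditional algebraic inequality in $p,q$ over all non-obtuse triangles with $\angle A$ dominant can close this case; the nonlinear constraint must be used actively. The paper does this by a two-variable optimization over the constrained domain, establishing concavity of $h_1$ (via its Hessian) and non-vanishing of the gradient of the analogous $h_2$ for the LDR comparison, and then checking all boundary pieces --- a substantially heavier argument than the one-variable boundary-root reduction of Lemma~\ref{lem: R2costIncenter}. As written, your proposal correctly locates the difficulty but does not resolve it.
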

\begin{proof}

The reader may consult Figure~\ref{fig: R1IncetnerDU-LRDLDR}. 
\begin{figure}
\centering
\includegraphics[width=2in]{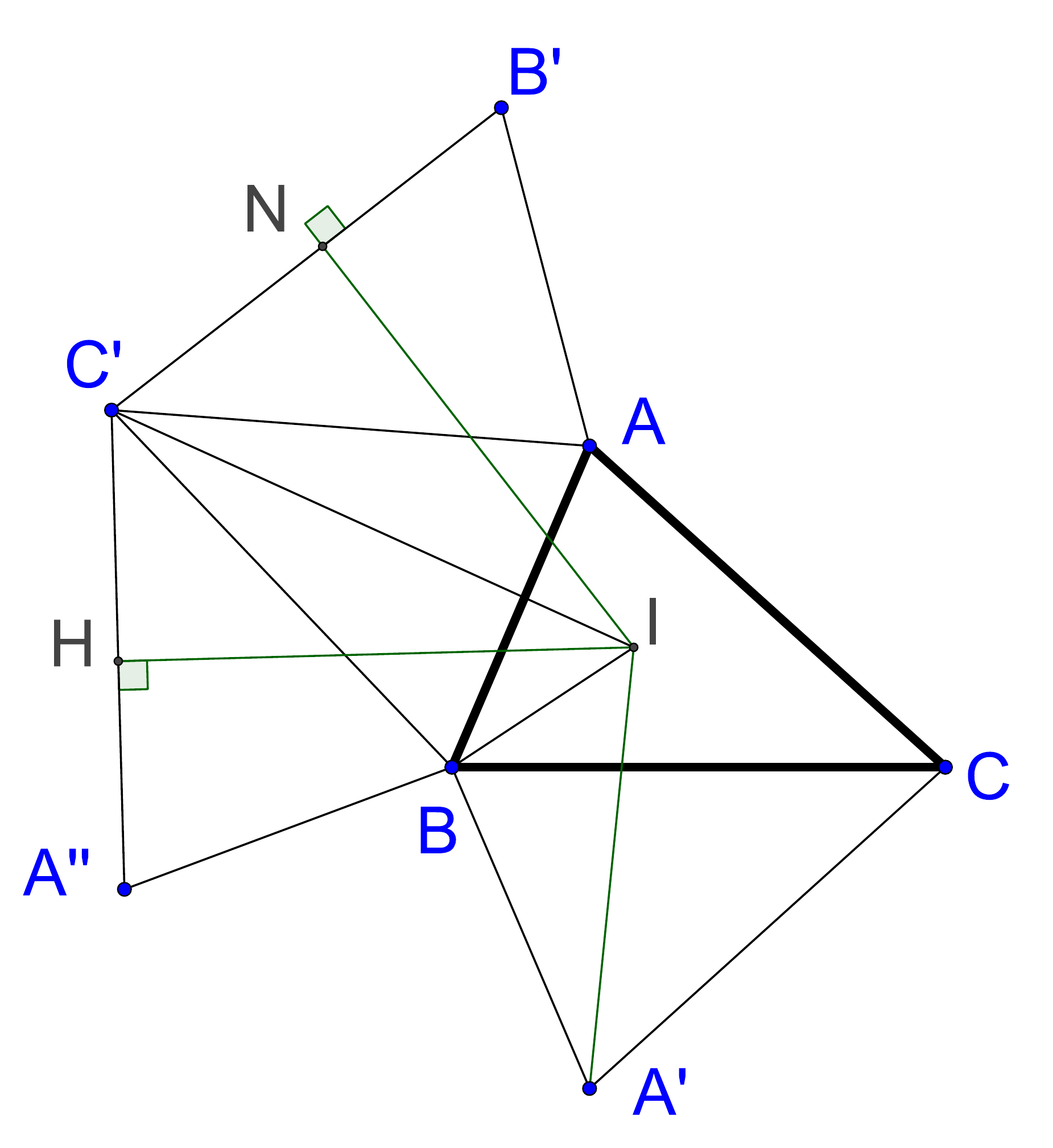}
\caption{Comparison between the optimal DRL strategy with the optimal LRD and optimal LDR strategies.}
\label{fig: R1IncetnerDU-LRDLDR}
\end{figure}
Point $C'$ is the reflection of $C$ across $AB$, and $B'$ the reflection of $B$ across $AC'$. 
First we observe that the optimal trajectory that visits first $BC$ has cost $\norm{IA'}$, that is, the optimal such strategy is both of DLR and DRL type.
Indeed, it is easy to see that $I$ is in the negative DLR bounce halfspace and in the positive subopt halfspace. Hence, by Section~\ref{sec: visiting 3 edges}, the optimal such strategy is of degenerate bouncing type, where the bouncing point $J$ on $BC$ (intersection point of $BC$ and $IA'$) lies within the optimal bouncing subcone of angle $A$ (and hence $d(J,\{AB,AC\})=JA$), and the claim follows.

In order to prove that the DLR (and DRL) type strategy with cost 
$$\norm{IA'}=d(I,\{AB,BC,AC\})$$ is optimal, we will compare it with the optimal LRD type and the optimal LDR strategy (and the symmetric argument would also imply the same comparison with the optimal RLD and RDL strategies).

\bigskip

\textbf{Comparison with optimal LRD strategy:}
Next we compare the optimal DLR strategy above with an optimal LRD strategy. There are three cases to consider.
\begin{description}
\item[Case (a)] $I$ lies in the positive LRD bounce halfspace and in the negative LRD subopt halfspaces ($I$ lies within the optimal bouncing subcone of $\angle C'$, when necessarily $\angle C \geq \pi/3$), in which case the optimal LRD strategy has cost $\norm{IC'}$, see Figure~\ref{fig: R1IncetnerDU-LRDLDR-extreme2}. 
\begin{figure}
\centering
\includegraphics[width=1.8in]{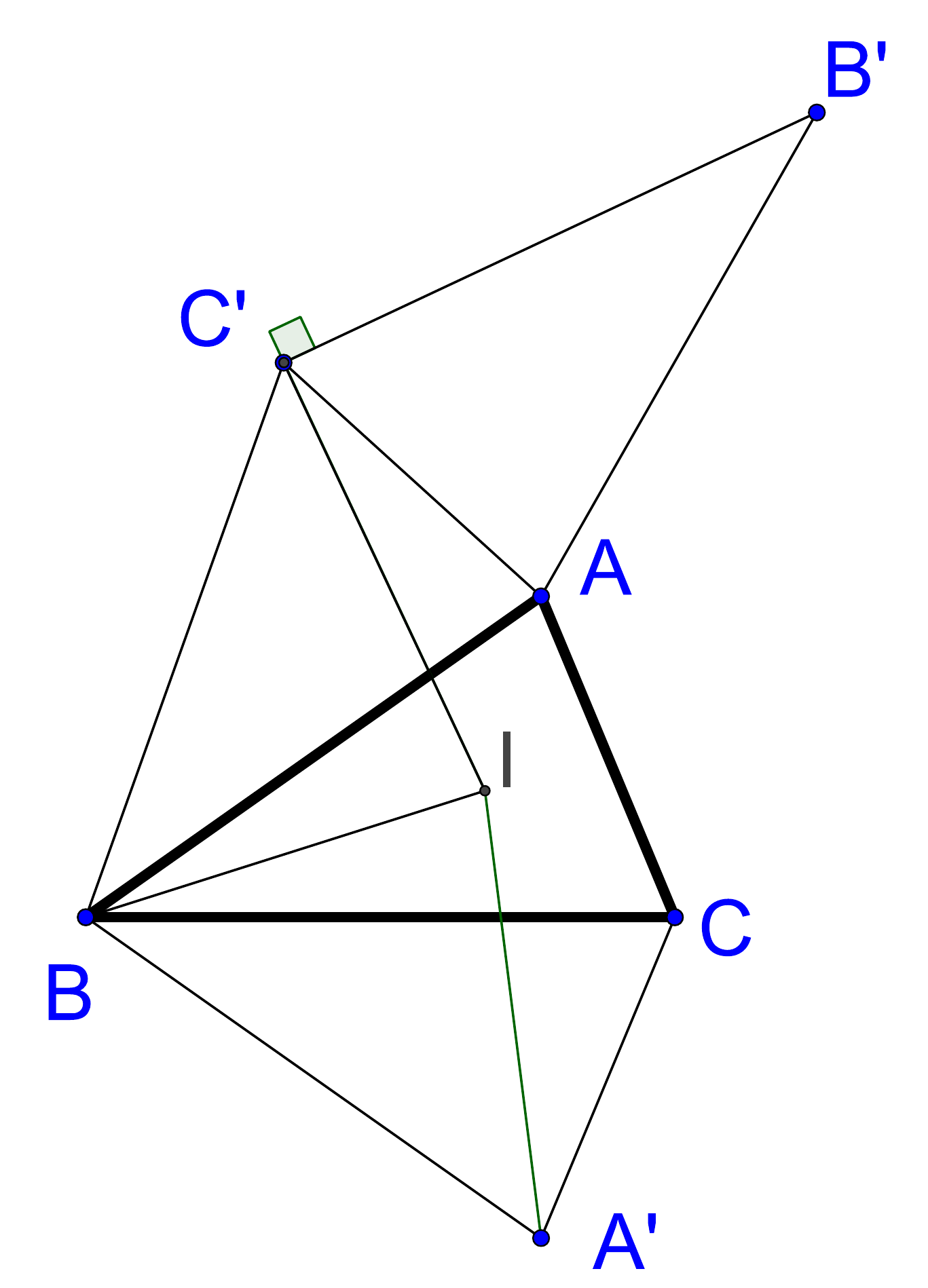}
\caption{The boundary case of Observation~\ref{obs: condition on I interior}, where the optimal LRD strategy has cost $\norm{IC'}$.}
\label{fig: R1IncetnerDU-LRDLDR-extreme2}
\end{figure}
\item[Case (b)] $I$ lies in the negative LRD subopt halfspace, hence, the optimal LRD strategy has cost $\norm{IA}+h_A$, where $h_A$ is the altitude corresponding to angle $A$, see Figure~\ref{fig: R1IncetnerDU-LRDLDR-extreme1}. 
\begin{figure}
\centering
\includegraphics[width=2.0in]{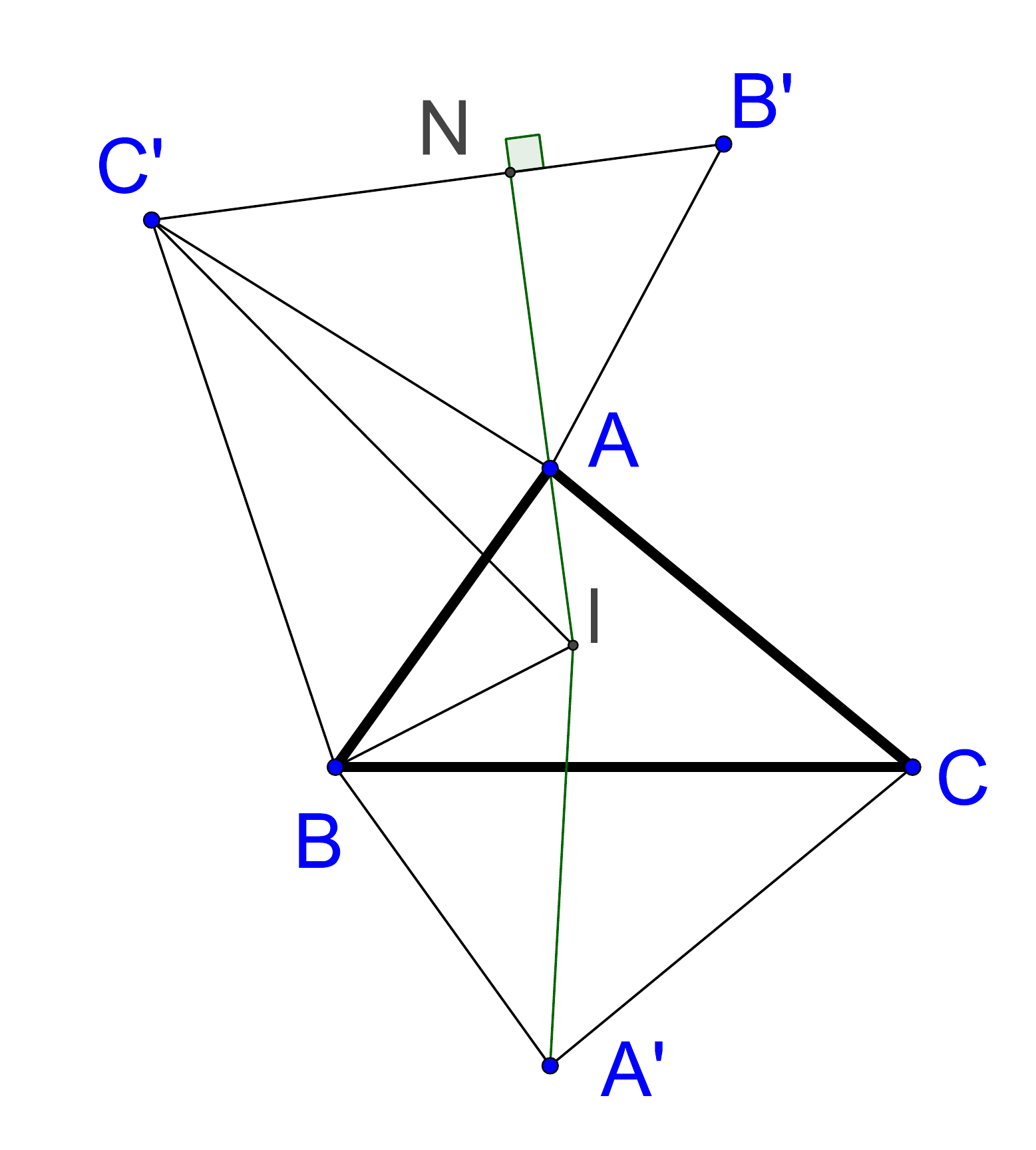}
\caption{The boundary case of Observation~\ref{obs: condition on I interior}, where the optimal LRD strategy has cost $\norm{IA}+h_A$.}
\label{fig: R1IncetnerDU-LRDLDR-extreme1}
\end{figure}
\item[Case (c)] $I$ lies in the positive LRD bounce and subopt halfspaces
and, in particular, the cost of the optimal LRD trajectory is $d(I,B'C')$ (depicted as $\norm{IN}$ in Figure~\ref{fig: R1IncetnerDU-LRDLDR}).
\end{description}

We have the following observation. 
\begin{observation}
\label{obs: condition on I interior}
$I$ lies in the positive LRD bounce and subopt halfspaces exactly when $3A-2C\leq \pi$ and 
$$
\cos ^2(C) \cos ^2\left(\frac{B+C}{2}\right)\geq \frac{\sin ^2\left(\frac{3 A}{2}\right) \sin ^2\left(\frac{C}{2}\right)}{-2 \cos (B+C)+2 \cos (B)-2 \cos (C)+3}
$$
\end{observation}

\begin{proof}
Let $N$ denote the projection of $I$ onto the line passing through $C',B'$. For $I$ to lie in the positive LRD bounce and subopt halfspaces, we need that $N$ falls within segment $B'C'$ and that $IN$ intersects segment $AB$. When $I,A,N$ become collinear (see Figure~\ref{fig: R1IncetnerDU-LRDLDR-extreme1}),  it is easy to see that $A/2+A+\pi/2-C = \pi$, or equivalently that $3A-2C=\pi$. It follows immediately that $IN$ intersects $AB$ in its interior only when $3A-2C\leq \pi$. 
Similarly, let $\rho$ denote $\angle AC'I$, see Figure~\ref{fig: R1IncetnerDU-LRDLDR-extreme2}. Point $N$ coincides with $C'$ exactly when $\rho+C=\pi/2$, and when $\rho+C\leq \pi/2$ point $N$ lies within $B'C'$. 
Now, using the Sine Law in $\triangle C'IA$, we see that 
$
\sin(\rho) = \frac{\norm{IA}}{\norm{IC'}}\sin(3A/2). 
$
Since $\rho+C\leq \pi/2$ is equivalent to that $\sin^2(\rho) \leq \sin^2(\pi/2-C)$, our condition becomes
$$
\cos^2(C) 
\geq 
\frac{\norm{IA}^2}{\norm{IC'}^2} \sin^2(3A/2).
$$
The latter expression can be simplified (after trigonometric manipulations), using also~\eqref{equa: A coordinates} and \eqref{equa: I coordinates} of Observation~\ref{obs: A,I coordinates angles}, together with that $C'=( \cos(2B), \sin(2B) )$, resulting in the promised condition. 
  \end{proof}

\begin{description}
\item{Case (a) proof: }
From Observation~\ref{obs: A,I coordinates angles}, we have that 
\begin{align}
\norm{IA'}^2 &=
(p_I-p)^2+(q_I+q)^2 \notag\\
& =
4 \sin ^2\left(\frac{B}{2}\right) \sin ^2\left(\frac{C}{2}\right) (2 \cos (B+C)+2 \cos (B)+2 \cos (C)+3) \csc ^2(B+C). \label{equa: IA' formula}
\end{align}
Observe that point $C'$ is also obtained by rotating point $C=(1,0)$ by $2\angle B$, and hence $C'=( \cos(2B), \sin(2B) )$. Therefore, 
\begin{align*}
\norm{IC'}^2 &= 
( \cos(2B) - p_I)^2 + ( \sin(2B) - q_I)^2 \\
&=
\sin ^2\left(\frac{B}{2}\right) (-2 \cos (B+C)+2 \cos (B)-2 \cos (C)+3) \csc ^2\left(\frac{B+C}{2}\right).
\end{align*}

Let $f(B,C):=\norm{IC'}^2/\norm{IA'}^2$, so that our goal is to show that $f(B,C)\geq 1$, subject to that $\angle A$ is the largest angle. We claim that $f(B,C)$ is decreasing in $C$. Indeed, elementary calculations show that 
$$
\frac{\partial}{\partial C} f(B,C)
=
-\frac{\cos \left(\frac{B}{2}\right) \csc ^3\left(\frac{C}{2}\right) \cos \left(\frac{B+C}{2}\right)}{(2 \cos (B+C)+2 \cos (B)+2 \cos (C)+3)^2}
\ g(B,C),
$$
where 
\begin{eqnarray*}
g(B,C)&=&
2 \cos (B-C)-4 \cos (B+C)-2 \cos (2 B+C)+2 \cos (B+2 C)\\
&& +4 \cos (B)+2 \cos (2 B)+4 \cos (C)+1.
\end{eqnarray*}
Note that $\frac{\partial}{\partial C} f(B,C)$ is a product of expressions (multiplied by $-1$), and clearly all of them, except possibly $g(B,C)$, are non-negative. Therefore, it remains to show that $g(B,C)\geq 0$. In order to do that we consider two sub-cases.

Sub-case i: If $B\leq \pi/4$, then the only summand of $g(B,C)$ which is negative is $2\cos(B+2C)$. Then, we have 
\begin{align*}
g(B,C) 
& \geq 4\cos(C) +4\cos(B) +2\cos(B+2C) \\
& \geq 4\cos(C) +2\cos(B) +2\cos(B+2C) \\
& = 4\cos(C) +4\cos(B+C)\cos(C) \\
& \geq  4\cos(C) -4\cos(C) \\
& \geq 0,
\end{align*}
where the second to last inequality holds because $B+C\leq 2\pi/3$ (since $A$ is the largest angle and hence $A\geq \pi/3$).

Sub-case ii: If $B> \pi/4$, then the only two summands of $g(B,C)$ which may be negative are $2 \cos (2 B)$ and $2\cos(B+2C)$. Recalling that $A\geq \pi/3$, and using the monotonicity of the cosine function, we get the following sequence of inequalities:
\begin{align*}
& 1+4\cos(B)+2\cos(2B)  \geq -1 \\
& 2\cos(B-C) \geq 2\cos(\pi/12) = \frac{\sqrt{3}+1}{\sqrt{2}} \\
& 4\cos(C) \geq 4\cos(5\pi/12) = \sqrt2( \sqrt{3}-1)  \\
& - 4 \cos(B+C) \geq - 4 \cos(\pi/2) \geq 0 \\
& -2\cos(2B+C) \geq -2 \cos(3\pi/4) =\sqrt2\\
& 2\cos(B+2C) \geq -2.
\end{align*}
So, overall we have that 
$$
g(B,C) \geq 3 \sqrt{\frac{3}{2}}+\frac{1}{\sqrt{2}}-3 >0,
$$
as wanted. Hence, $f(B,C)$ is decreasing in $C$, as promised. 

But then, since $A\geq C$, we have that $C\leq \pi/2-B/2$, and hence it follows that 
$$
f(B,C) \geq f(B,\pi/2-B/2)=1,
$$
where the last equality follows by direct substitution. 
That completes the proof of case (a).

\item{Case (b) proof:}
Let $N'$ be the projection of $A$ onto $BC$, so that $AN'$ is the altitude corresponding to angle $A$. We want to prove that $\norm{IA'} \leq \norm{IA}+\norm{AN'}$.
By triangle inequality, we have that 
$$\norm{IA'} \leq \norm{IN'}+\norm{N'A'}=\norm{IN'}+\norm{AN'}.$$
Hence, it suffices to prove that $\norm{IN'}\leq \norm{IA}$. 
Using the standard analytic form of the triangle, we have that 
$$
\norm{IA}^2 - \norm{IN'}^2 
= (q_I-q)^2-q_I^2 
 = q(q-2q_I). 
$$
Hence, it further suffices to prove that $q\geq 2q_I$. Indeed,
$$
\frac{q}{q_I}
=
2 \frac{\cos \left(\frac{B}{2}\right) \cos \left(\frac{C}{2}\right)}{\cos \left(\frac{B+C}{2}\right)} =: g(B,C)
$$
We have that
$$
\frac{\partial}{\partial C} g(B,C)= \frac{\sin(B)}{1+\cos(B+C)}\geq 0.
$$
Hence, $g(B,C)\geq g[B,0]=2$, where the last equality follows by direct substitution. That completes the proof of case (b).

\item{Case (c) proof:}
The analytic equation of the line $\ell$ passing through $B',C'$ has equation
\begin{equation}
\label{equa: B'C' line}
\ell: ~ \tan(2A) x + y - \sin(2B)-\tan(2A)\cos(2B)=0.
\end{equation}
To see why, recall that from the proof of case (a) above we have that $$C'=( \cos(2B), \sin(2B) ),$$ as well as $\ell$ form with the $x$-axis an angle of $\pi-2A$. 
But then, using the formula for the distance of point $I=(p_I,q_I)$ to $\ell$ we have that 
$$
d(I,B'C')
=
\frac{\left|
\tan(2A)p_I + q_I - \sin(2B)-\tan(2A)\cos(2B)
\right|}
{\sqrt{\tan^2(2A)+1}}.
$$
Using~\eqref{equa: I coordinates} of Observation~\ref{obs: A,I coordinates angles}, together with~\eqref{equa: IA' formula}, and after trigonometric manipulations, it follows that 
\begin{equation}
\label{equa: r1 incenter case b formula}
\left(
\frac{d(I,B'C')}{\norm{IA'}}
\right)^2
=
\frac{\cos ^2\left(\frac{B+C}{2}\right) (-2 \cos (B+2 C)+2 \cos (C)+1)^2}{2 \cos (B+C)+2 \cos (B)+2 \cos (C)+3},
\end{equation}
which we need to prove is at least 1. 
Call function~\eqref{equa: r1 incenter case b formula} $h_1(B,C)$. 
Function $h_1(B,C)$ attains values as low as $9/10$ without conditioning on that $I$ lies the positive LRD bounce and subopt halfspaces.

Consider the domain $\mathcal D_1 \subseteq \reals^2$ of \eqref{equa: r1 incenter case b formula} corresponding to non-obtuse $\triangle ABC$ with $\angle A\geq \angle B,\angle C$ and restricted to $3\angle A-2\angle C\leq \pi$ (as per Observation~\ref{obs: condition on I interior}).

\begin{lemma}
\label{lem: concavity 1}
Function $h_1(B,C)$ is concave over domain $\mathcal D$. 
\end{lemma}
\begin{proof}
We show that function~\eqref{equa: r1 incenter case b formula} is concave by verifying numerically that it's Hessian $H_{B,C}$ is negative-definite. Matrix $H_{B,C}$ is a $2\times 2$ symmetric real matrix, whose real roots can be computed analytically. The domain $\mathcal D_1$ can be re-parameterized as
$$
0 \leq B \leq  3\pi/7, ~~
 \max\{ \pi/2-B, (2\pi-3B)/5\} 
 \leq C \leq 
 \min\{
 2\pi/3-B, \pi/2-B/2, \pi-2B
 \}
$$
so that the eigenvalues can be plotted over $\mathcal D$ as it is shown in Figure~\ref{fig: eigenvalues}.
\begin{figure}[h!]
\centering
  \includegraphics[width=12cm]{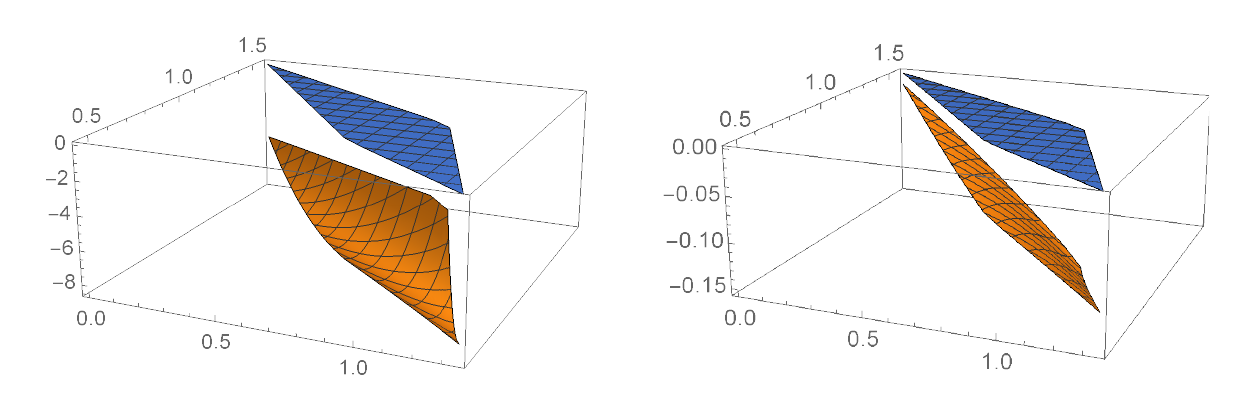}
\caption{The two eigenvalues of Hessian $H_{B,C}$ of function~\eqref{equa: r1 incenter case b formula} are compared to blue hyperplane $z=0$ over domain $\mathcal D_1$.
}
\label{fig: eigenvalues}
\end{figure}
  \end{proof}

Now, by Lemma~\ref{lem: concavity 1}, any (local) minimizers  of function~\eqref{equa: r1 incenter case b formula} are attained at the boundaries of its domain.
Subject to that any of the boundaries of Observation~\ref{obs: condition on I interior} are satisfied tightly, function $h_1(B,C)$ is at least 1, as already proven in cases (a), (b). 
The remaining constraints that might be tight are that $A\geq B,C$. 

Subject to that $B=A$, we have that 
$$
h_1(B,\pi-2B) = 
\frac{\sin ^2\left(\frac{B}{2}\right) (2 \cos (2 B)+2 \cos (3 B)-1)^2}{3-2 \cos (2 B)}.
$$
The function above can be shown to be concave when $B \in [\pi/3,\pi/2]$, hence its minima are attained at the boundaries of its domain. When $B=\pi/3$ its value is 1. When $B=3\pi/7$ its value is equal to 
$$
\frac{\sin ^2\left(\frac{3 \pi }{14}\right) \left(1+2 \sin \left(\frac{3 \pi }{14}\right)+2 \cos \left(\frac{\pi }{7}\right)\right)^2}{3+2 \cos \left(\frac{\pi }{7}\right)}\approx 1.32715.
$$ 
Finally, subject to that $C=A$, it is easy to see that the linear constraints imply that the smallest value that $C$ can attain is $\pi/3$. At the same time, the nonlinear constraint of Observation~\ref{obs: condition on I interior} becomes (for $C=A$)
$$
\frac{\cos ^2(C) (3-2 \cos (2 C)) \csc ^4\left(\frac{C}{2}\right)}{(2 \cos (C)+1)^2}\geq 1.
$$
It can be shown that the above constraint is satisfied only when $C\leq \pi/3$. 
It follows that when $A=C$, we must have $A=B=C=\pi/3$, in which case $h_1(\pi/3,\pi/3)=1$. 
\end{description}

\textbf{Comparison with optimal LDR strategy:} 
In order to describe the optimal LDR strategy, 
we also consider the reflection $A''$ of $A$ across $BC'$, so that the cost of such strategy equals the distance of $I$ to segment $C'A''$, see also Figure~\ref{fig: R1IncetnerDU-LRDLDR}. Let also $H$ denote the projection of $I$ onto $C'A''$. 
As before, we have the following cases to consider. 

\begin{description}
\item[Case (a')] $I$ lies the positive LDR bounce halfspace and in the negative LDR subopt halfspaces, in which case the optimal LDR strategy has cost $\norm{IC'}$. 
\item[Case (b')] $I$ lies in the negative LDR subopt halfspace, hence, the optimal LDR strategy has cost $\norm{IB}+h_B$, where $h_B$ is the altitude corresponding to angle $B$. 
\item[Case (c')] $I$ lies the positive LDR bounce and subopt halfspaces,
and in particular the cost of the optimal LDR trajectory is $d(I,C'A'')$ (depicted as $\norm{IH}$ in Figure~\ref{fig: R1IncetnerDU-LRDLDR}).
\end{description}

\begin{observation}
\label{obs: condition on I interior new}
$I$ lies in the positive LDR bounce and subopt halfspaces exactly when $3B-2C\leq \pi$ and 
$$
\cos ^2(2C) \cos ^2\left(\frac{B+C}{2}\right)\leq \frac{\sin ^2\left(\frac{3 A}{2}\right) \sin ^2\left(\frac{C}{2}\right)}{-2 \cos (B+C)+2 \cos (B)-2 \cos (C)+3}.
$$
\end{observation}

\begin{proof}
As in the proof of Observation~\ref{obs: condition on I interior}, let $\rho$ denote $\angle AC'I$. 
When $I,B,H$ become collinear, it is easy to see that $B/2+B+\pi/2-C = \pi$, or equivalently that $3B-2C=\pi$. It follows immediately that $IH$ intersects $BC'$ in its interior only when $3B-2C< \pi$. 
Point $H$ coincides with $C'$ exactly when $C-\rho+C=\pi/2$, and when $2C-\rho+C\leq \pi/2$ point $H$ lies within $B',C'$. 
Using 
$
\sin(\rho)
$
that was computed in the proof of Observation~\ref{obs: condition on I interior}, we obtain condition 
$$
\cos^2(2C) 
\leq 
\frac{\norm{IA}^2}{\norm{IC'}^2} \sin^2(3A/2).
$$
The latter expression can be simplified (after trigonometric manipulations), using also~\eqref{equa: A coordinates} and \eqref{equa: I coordinates} of Observation~\ref{obs: A,I coordinates angles}, together with that $C'=( \cos(2B), \sin(2B) )$, resulting in the promised condition. 
  \end{proof}

\begin{description}
\item{Case (a') proof:} This case is identical to case (a) above. 

\item{Case (b') proof:}
Let $N'$ be the projection of $A$ onto $BC$, so that $AN'$ is the altitude $h_A$ corresponding to angle $A$. By triangle inequality, we have that $\norm{IA'}\leq \norm{IN'}+h_A$. At the same time, the optimal LDR visitation in this case has cost $\norm{IB}+h_B\geq h_A$, where the inequality is due to that $\angle A$ is the dominant angle, hence $h_A$ is the shortest altitude. Hence, it suffices to argue  that $\norm{IB}\geq \norm{IN'}$, which is immediate $\norm{IB}^2= p_I^2+q_I^2$, while $\norm{IN'}^2= (p_I-p)^2+q_I^2$. 

\item{Case (c') proof:}
The analytic equation of the line $\ell$ passing through $C',A''$ has equation
$$
\ell: ~ -\tan(2B-C) x + y - \sin(2B)+\tan(2B-C)\cos(2B)=0.
$$
To see why, recall that from the proof of case (c) above we have that $$C'=( \cos(2B), \sin(2B) ),$$ as well as $\ell$ forms with the $x$-axis an angle of $2B-C$. 
But then, using the formula for the distance of point $I=(p_I,q_I)$ to $\ell$ we have that 
$$
d(I,\ell)
=
\frac{\left|
-\tan(2B-C)p_I + q_I - \sin(2B)+\tan(2B-C)\cos(2B)
\right|}
{\sqrt{\tan^2(2B-C)+1}}.
$$
Using~\eqref{equa: I coordinates} of Observation~\ref{obs: A,I coordinates angles}, together with~\eqref{equa: IA' formula}, and after trigonometric manipulations, it follows that 
\begin{equation}
\label{equa: Comparison with optimal LDR}
\left(
\frac{d(I,\ell)}{\norm{IA'}}
\right)^2
=
\frac{(2 \cos (B-C)+2 \cos (C)+1)^2 \cos ^2\left(\frac{B+C}{2}\right)}{2 \cos (B+C)+2 \cos (B)+2 \cos (C)+3},
\end{equation}
which we need to prove is at least 1. Call function~\eqref{equa: Comparison with optimal LDR} $h_2(B,C)$. As in the previous case, $h_2(B,C)$ can attain values below 1 without conditioning on that 
$I$ lies in the positive LDR bounce and subopt halfspaces (in which case $d(I,C'A'')=d(I,\ell)$, as per Observation~\ref{obs: condition on I interior new}). 

Recall that $\triangle ABC$ is non-obtuse with 
 $\angle A\geq \max\{\angle B,\angle C\}$, and that $3\angle B-2\angle C\leq \pi$ (as per Observation~\ref{obs: condition on I interior new}). The domain $\mathcal D_2$ defined by these linear constraints can be described as follows:
$$
0 \leq B \leq  3\pi/7, ~~
 \max\{ \pi/2-B, (3B-\pi)/2\} 
 \leq C \leq 
 \min\{
 2\pi/3-B, \pi/2-B/2, \pi-2B
 \}.
$$
Unfortunately, $h_2(B,C)$ is not concave over $\mathcal D_2$, however we can still show numerically that any minimizers are attained at the boundaries of the domain (and the boundary imposed by the non linear constraint of Observation~\ref{obs: condition on I interior new}).

\begin{lemma}
\label{lem: gradient no 0}
Function $h_2(B,C)$ has no minimizers in the interior of domain $\mathcal D_2$. 
\end{lemma}
\begin{proof}
We demonstrate, numerically, that the gradient of function $h_2(B,C)$ is never the zero vector over $\mathcal D_2$. In fact, as Figure~\ref{fig: gradient} shows, $\partial h_2(B,C)/ \partial B$ does not attain value 0 in domain $\mathcal D_2$.\footnote{By considering a sufficiently refined grid, we can numerically (and rigorously) verify that the function is bounded away from 0, as seen in the figure. The claim then follows by theoretical bounds on the partial derivatives of the expression.} 
\begin{figure}[h!]
\centering
  \includegraphics[width=6cm]{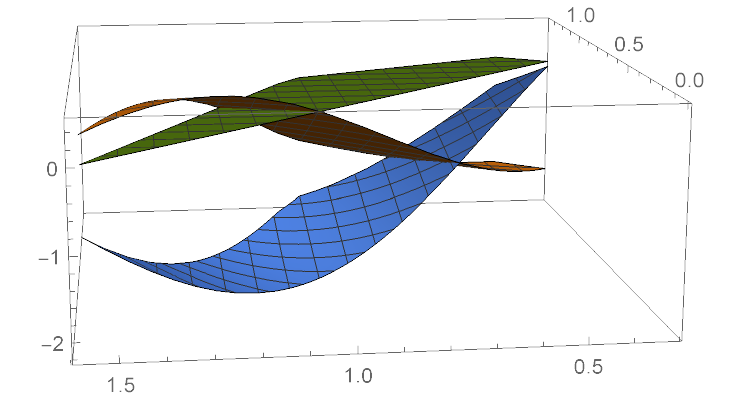}
\caption{The plot of $\partial h_2(B,C)/ \partial B$ (blue) and $\partial h_2(B,C)/ \partial C$ (orange) over domain $\mathcal D_2$, compared with the hyperplane $z=0$ (green plane). 
}
\label{fig: gradient}
\end{figure}
  \end{proof}

Now, by Lemma~\ref{lem: gradient no 0}, any (local) minimizers  of $h_2(B,C)$
 are attained at the boundaries of its domain.
Subject to that, any of the boundaries of Observation~\ref{obs: condition on I interior new} are satisfied tightly, function $h_2(B,C)$ is at least 1, as already proven in cases (a'), (b'). 
The remaining constraints that might be tight are that $\angle A\geq \max\{ \angle B,\angle C\}$. 

Subject to that $B=A$, we have that $h_2(B,\pi-2B) = h_1(B,\pi-2B)$ which was shown to be at least 1 previously, subject to that $\pi/3\leq (3\pi/7)B$ (the same bounds hold for $B$ using the current linear conditions). 
Finally, subject to that $C=A$, it is easy to see, exactly as before, that the linear constraints imply that the smallest value that $C$ can attain is $\pi/3$. At the same time, the nonlinear constraint of Observation~\ref{obs: condition on I interior new} becomes (for $C=A$)
$$
\frac{\sin ^2\left(\frac{3 C}{2}\right) \sec ^2(2 C)}{3-2 \cos (2 C)}\geq 1.
$$
It can be shown that the above constraint is satisfied only when $\pi/6\leq C\leq \pi/3$. 
It follows that when $A=C$, we must have $A=B=C=\pi/3$, in which case $h_2(\pi/3,\pi/3)=1$. 
\end{description}
  \end{proof}

\subsection{$R_1$ Cost of the Middle Point of the Shortest Altitude}

\begin{lemma}
\label{lem: half of altitude R1 cost}
For a non-obtuse $\triangle ABC$ with $\angle A \geq \angle B \geq \angle C$,  let $T$ be the middle point of the altitude corresponding to the largest edge $BC$. Then the optimal $R_1(T)$ strategy is of $LRD$ type, and has cost $\frac{1}{2} (2-\cos (2 A)) \sin (B) \sin (C) \csc (B+C)$.
\end{lemma}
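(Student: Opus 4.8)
The plan is to place $\triangle ABC$ in standard analytic form with $A=(p,q)$, $B=(0,0)$, $C=(1,0)$ as in Observation~\ref{obs: A,I coordinates angles}. Since $BC$ lies on the $x$-axis, the foot of the altitude from $A$ is $(p,0)$ and the altitude has length $q$, so the target point is $T=(p,q/2)$. I would first argue that $T$ lies in the positive LRD bounce and subopt halfspaces, so that by Lemma~\ref{lem: +bounce +subopt ordered visitiation} the optimal LRD cost equals $d(T,B'C')$, the distance from $T$ to the line through the reflected points $B',C'$. This requires checking that the projection of $T$ onto $B'C'$ falls inside the segment and that the bouncing trajectory crosses $AB$; since $\angle A$ is the largest angle (so $T$ sits high, near $A$), I expect this to hold throughout the domain $\angle A\ge \angle B\ge \angle C$, $\angle A\le \pi/2$, and I would verify it with the same kind of condition used in Observation~\ref{obs: condition on I interior}.

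Next I would compute $d(T,B'C')$. Using $C'=(\cos(2B),\sin(2B))$ and the fact that line $B'C'$ makes angle $\pi-2A$ with the $x$-axis, its equation is \eqref{equa: B'C' line}. Plugging $T=(p,q/2)$ into the point-to-line distance and clearing $\sqrt{\tan^2(2A)+1}=1/\lvert\cos(2A)\rvert$ reduces the numerator to the positive quantity $\sin(2A)\,p+\tfrac12\cos(2A)\,q+\sin(2C)$ (using $2A+2B=2\pi-2C$). Substituting \eqref{equa: A coordinates} and applying product-to-sum identities, the bracketed expression collapses to $q\bigl(\tfrac12+\sin^2 A\bigr)$; since $\sin A=\sin(B+C)$ and $\tfrac12+\sin^2A=\tfrac12(2-\cos 2A)$, this is exactly $\tfrac12(2-\cos(2A))\sin(B)\sin(C)\csc(B+C)$, matching the claim. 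These are routine manipulations.

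It remains to prove that LRD is optimal, i.e.\ that this value does not exceed the cost of any of the other five orderings. Because $T$ lies on the altitude from $A$, and that altitude is precisely the locus on which the LRD and RLD costs coincide (the ``bisector of $B'C'$ and $B''C''$'' property noted in the proof of Lemma~\ref{lem: R1regionsEquilateral}, valid for every non-obtuse triangle), the RLD cost equals the LRD cost for free. I would then bound the four remaining strategies. The strategies visiting $BC$ first (DLR, DRL) can only do worse: any such trajectory touches $BC$ at some $X$ and then visits the cone $\{AB,AC\}$, so its cost is $\min_{X\in BC}\bigl(\norm{TX}+d(X,\{AB,AC\})\bigr)\ge q/2+\min_{X\in BC}d(X,\{AB,AC\})$, and the comparison reduces to the clean inequality $\min_{X\in BC}d(X,\{AB,AC\})\ge q\sin^2 A$ (tight exactly for the right isosceles triangle). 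For LDR and RDL (one side edge, then $BC$, then the other side edge) I would unfold each via Lemma~\ref{lem: +bounce +subopt ordered visitiation}, express its cost as a distance to a reflected segment, and compare with $d(T,B'C')$; the constraint $\angle B\ge\angle C$ makes one of the two the binding case, so the other follows.

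The main obstacle is this last optimality step: unlike the cost computation it does not reduce to a single identity but to several inequalities that must hold uniformly over the two–parameter family of admissible $(\angle B,\angle C)$. Following the style of Lemma~\ref{lem: incenter R1 cost}, I expect each comparison to reduce to showing that a trigonometric function of $(B,C)$ is $\ge 1$ on the region cut out by $\angle A\ge\angle B\ge\angle C$ and $\angle A\le\pi/2$, which I would settle by monotonicity in one variable together with boundary evaluation, resorting to a concavity or gradient argument (as in Lemmas~\ref{lem: concavity 1} and~\ref{lem: gradient no 0}) if a closed-form monotonicity is unavailable; the right isosceles triangle, where DLR ties LRD, pins down where the bound is tight.
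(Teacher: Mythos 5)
Your outline coincides with the paper's up to the point where the real work begins: place $T=(p,q/2)$ in standard analytic form, check that $T$ is in the positive LRD bounce and subopt halfspaces so that by Lemma~\ref{lem: +bounce +subopt ordered visitiation} the LRD cost is $d(T,B'C')$, compute that distance from~\eqref{equa: B'C' line} to obtain $\tfrac12(2-\cos 2A)\sin B\sin C\csc(B+C)$, and dispose of RLD via the fact that the altitude through $A$ bisects the angle between the lines $B'C'$ and $B''C''$ (this is exactly Observation~\ref{obs: reflectors bisector}; note only that when $T$ falls in the negative RLD bounce halfspace the two costs are not equal but LRD is strictly cheaper, which still suffices). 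Your DLR/DRL comparison is also workable, though the paper is cleaner: since $T$ is in the negative DLR/DRL bounce halfspace and the positive subopt halfspace, the cost is exactly $\norm{TA'}=\tfrac32 q$ and the ratio to the LRD cost is $\tfrac13(2-\cos 2A)\le 1$; your route instead hinges on the auxiliary inequality $\min_{X\in BC}d(X,\{AB,AC\})\ge q\sin^2A$, which you assert but do not prove (it is true, but it needs the piecewise description of $d(X,\{AB,AC\})$ along $BC$ to verify).

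The genuine gap is the LDR/RDL step. The shortcut ``the constraint $\angle B\ge\angle C$ makes one of the two the binding case, so the other follows'' does not work: LDR and RDL are exchanged by the reflection swapping $B$ and $C$, which destroys the assumed ordering, so neither comparison implies the other, and in the paper they require genuinely different arguments. For LDR the cost is bounded below by the distance from $T$ to the reflected segment $C'A'$, and the comparison is a short synthetic argument: the bisector of $\angle C'$ is the locus of points equidistant from $C'A'$ and $C'B'$, and $T$ lies strictly on the $B'C'$ side because the bisector $CM$ of $\angle C$ meets the altitude below $T$ (using $\norm{BM}\le\norm{MA}$). For RDL, your plan to ``express its cost as a distance to a reflected segment'' fails outright: under $\angle A\ge\angle B\ge\angle C$ the point $T$ lies in the \emph{negative} RDL bounce halfspace, so the optimal RDL trajectory degenerates through vertex $B$ and its cost is the point distance $\norm{TB''}$ with $B''=(1-\cos 2C,\sin 2C)$; the distance to the doubly reflected line is only a lower bound and is not by itself $\ge d(T,B'C')$. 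One must then show $\norm{TB''}^2/\norm{TG}^2\ge 1$, which the paper does by proving this ratio is decreasing in $C$ over $0\le C\le\pi/4$, $C\le B\le\pi/2-C/2$ and evaluating on the boundary $B=\pi/2-C/2$, where it reduces to $\tfrac{8}{\cos C+1}-\tfrac{27}{(\cos C+2)^2}\ge 1$. This computation is the bulk of the optimality proof and is entirely absent from your sketch, so as written the claim that LRD is optimal is not established.
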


Consider $\triangle ABC$ in standard analytic form, with altitude $AF$. Let $C',B'$ be the reflection of $C,B$ across $AB,AC'$, respectively (see also Figure~\ref{fig: MiddleOfheightLRD-LDR}). 
\begin{figure}
\centering
\includegraphics[width=2.6in]{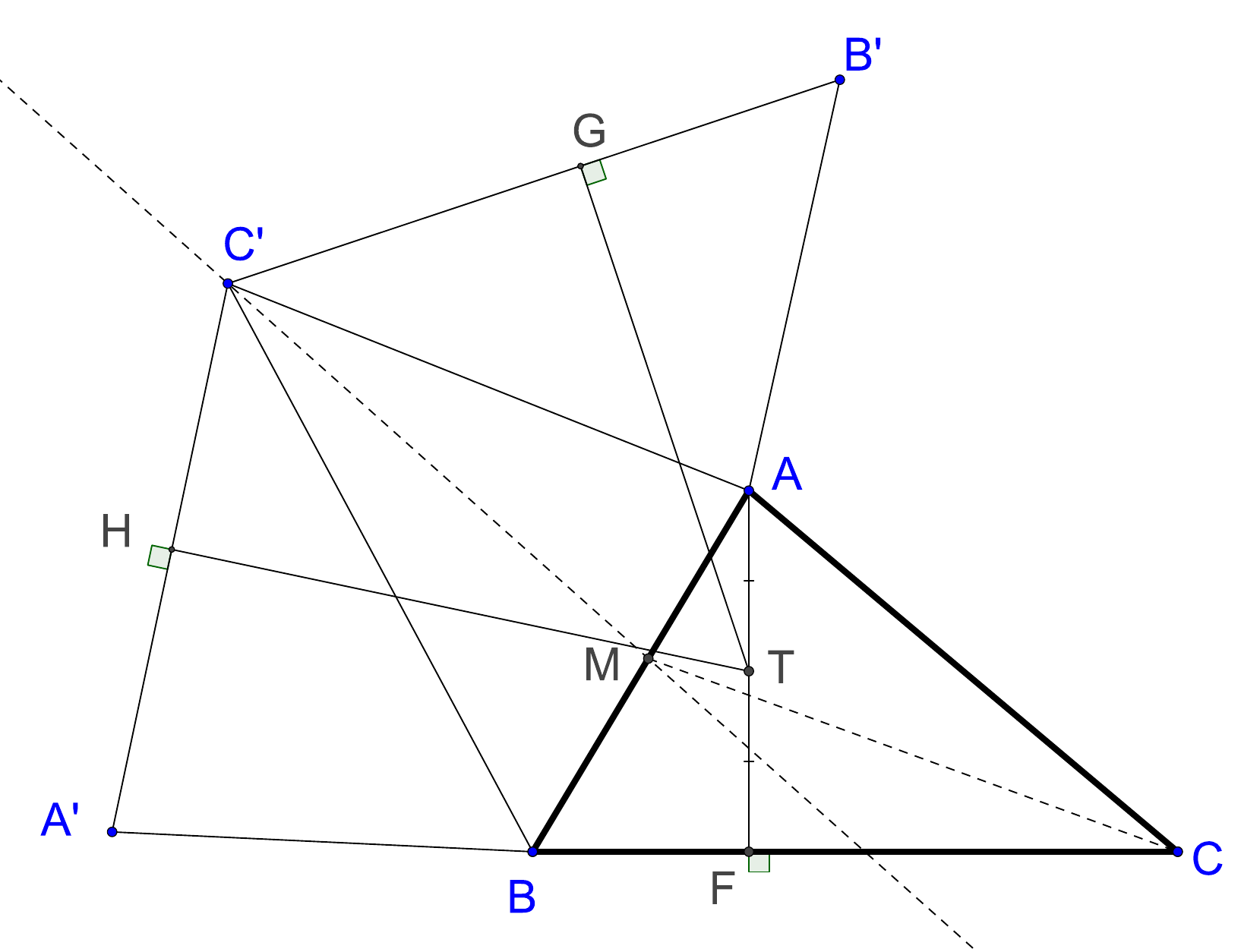}
\caption{
Comparison of optimal LRD with optimal LDR strategy.
}
\label{fig: MiddleOfheightLRD-LDR}
\end{figure}
Since $\angle A \geq \angle B \geq \angle C$, it is easy to verify that $T$ is always in the positive LRD bounce and subopt halfspaces. Hence, the projection $G$ of $T$ along $B'C'$ falls within the latter segment, and in particular by Lemma~\ref{lem: +bounce +subopt ordered visitiation}, the optimal LRD strategy has cost $d(T,B'C')=\norm{TG}$. What we show is that $R_1(T)=\norm{TG}$ by comparing $\norm{TG}$ to the cost of the remaining optimal ordered visitations the triangle edges.

\textbf{Comparison to optimal LDR strategy:}
Consider the reflection $A'$ of $A$ across $BC'$, and let $H$ be the projection of $T$ along $C'A'$, see Figure~\ref{fig: MiddleOfheightLRD-LDR}. 
Clearly the optimal LDR strategy has cost at least $d(T,C'A')=\norm{TH}$. 
The loci of points $P$ that are equidistant from $C'A',C'B'$ are exactly on the angle bisector of $\angle C'$. Moreover, there is no triangle configuration that $T$ falls on the latter angle bisector. To see why, consider the bisector $CM$ of $\angle C$ (hence $C'M$ is the bisector of $\angle C'$). Because $\angle A \geq \angle B \geq \angle C$, we have $\norm{BM}\leq \norm{MA}$, and hence $CM$ intersects $AF$ below $T$. That also shows that $T$ remains always closer to the $B'C'$ segment.

\textbf{Comparison to optimal DLR and DRL strategies:}
Consider the reflection $A'$ of $A$ across $BC$. Let also $C'', B''$ be the reflections of $C,B$ across $BA',CA'$ respectively, see also Figure~\ref{fig: MiddleOfheightLRD-DLR}.
\begin{figure}[h!]
\centering
\includegraphics[width=2.6in]{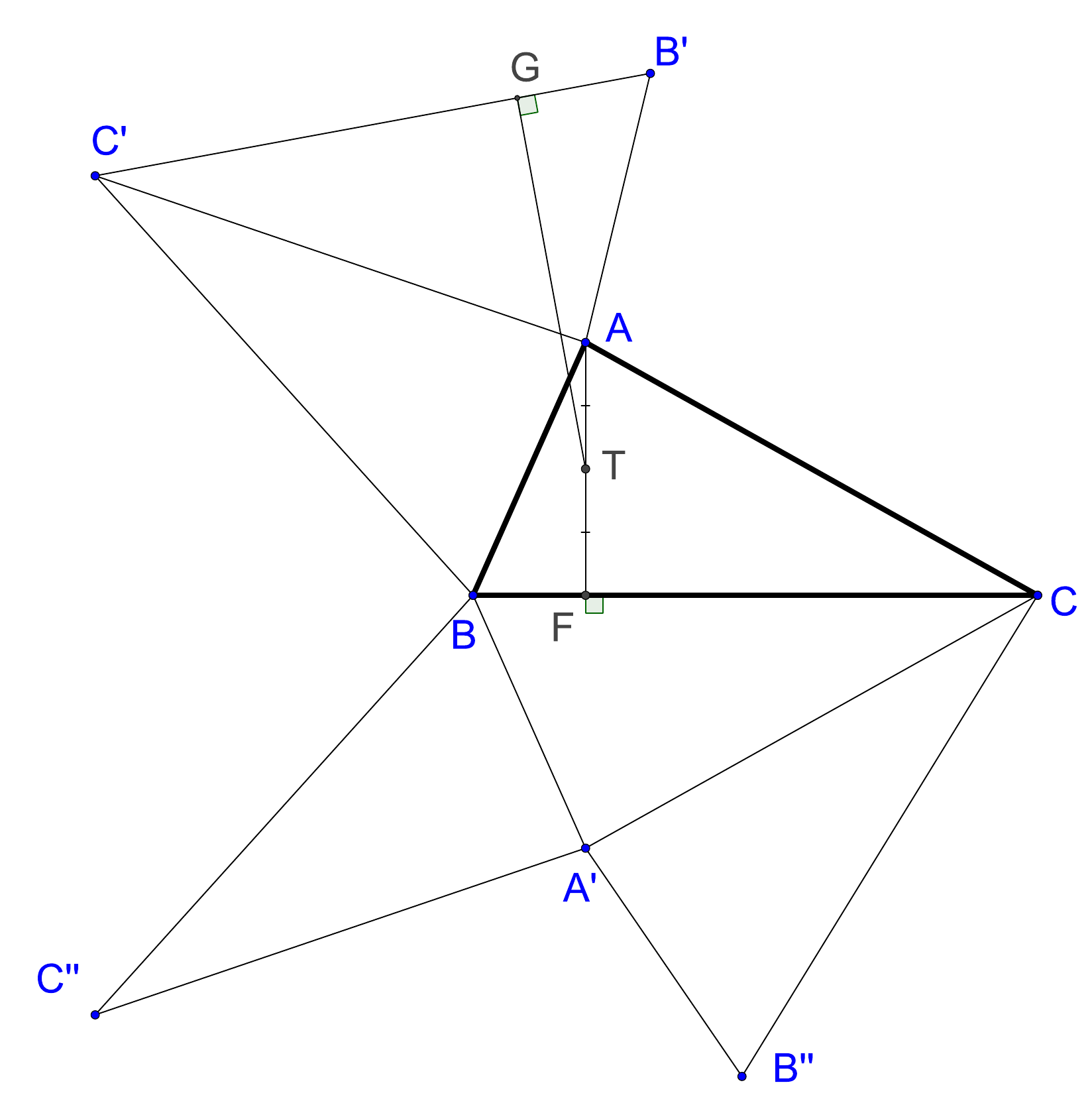}
\caption{
Comparison of optimal LRD with optimal DLR and DRL strategies.
}
\label{fig: MiddleOfheightLRD-DLR}
\end{figure}
It is easy to see that, since $\angle A \geq \angle B \geq \angle C$, point $T$ is always in the negative DLR (and DRL) bounce halfspace and in the positive subopt halfspace. 
It follows by Lemma~\ref{lem: +bounce +subopt ordered visitiation} that both optimal DLR and DRL strategies have cost $\norm{TA'}=\frac32q$. 

We compute $d(T,B'C')=\norm{TG}$ using~\eqref{equa: B'C' line}. Since $T=(p,q/2)$, it follows that 
\begin{align}
\norm{TG} 
&= 
\frac{\left|
\tan(2A)p + q/2 - \sin(2B)-\tan(2A)\cos(2B)
\right|}
{\sqrt{\tan^2(2A)+1}} \notag \\
&=
\frac{1}{2} (2-\cos (2 A)) \sin (B) \sin (C) \csc (B+C).
\label{equa: cost half altitude}
\end{align}
Now, using Observation~\ref{obs: A,I coordinates angles} we have that 
$$
\frac{\norm{TG}}
{\norm{TA'}}
=\frac{1}{3} (2-\cos (2 A))
\leq 1,$$
as wanted. 

\textbf{Comparison to optimal RDL strategy:}
Consider the reflection $B'',A'$ of $B,A$ across $AC$ and $B''C$, respectively, see Figure~\ref{fig: MiddleOfheightLRD-RDL}.
\begin{figure}[h!]
\centering
\includegraphics[width=2.7in]{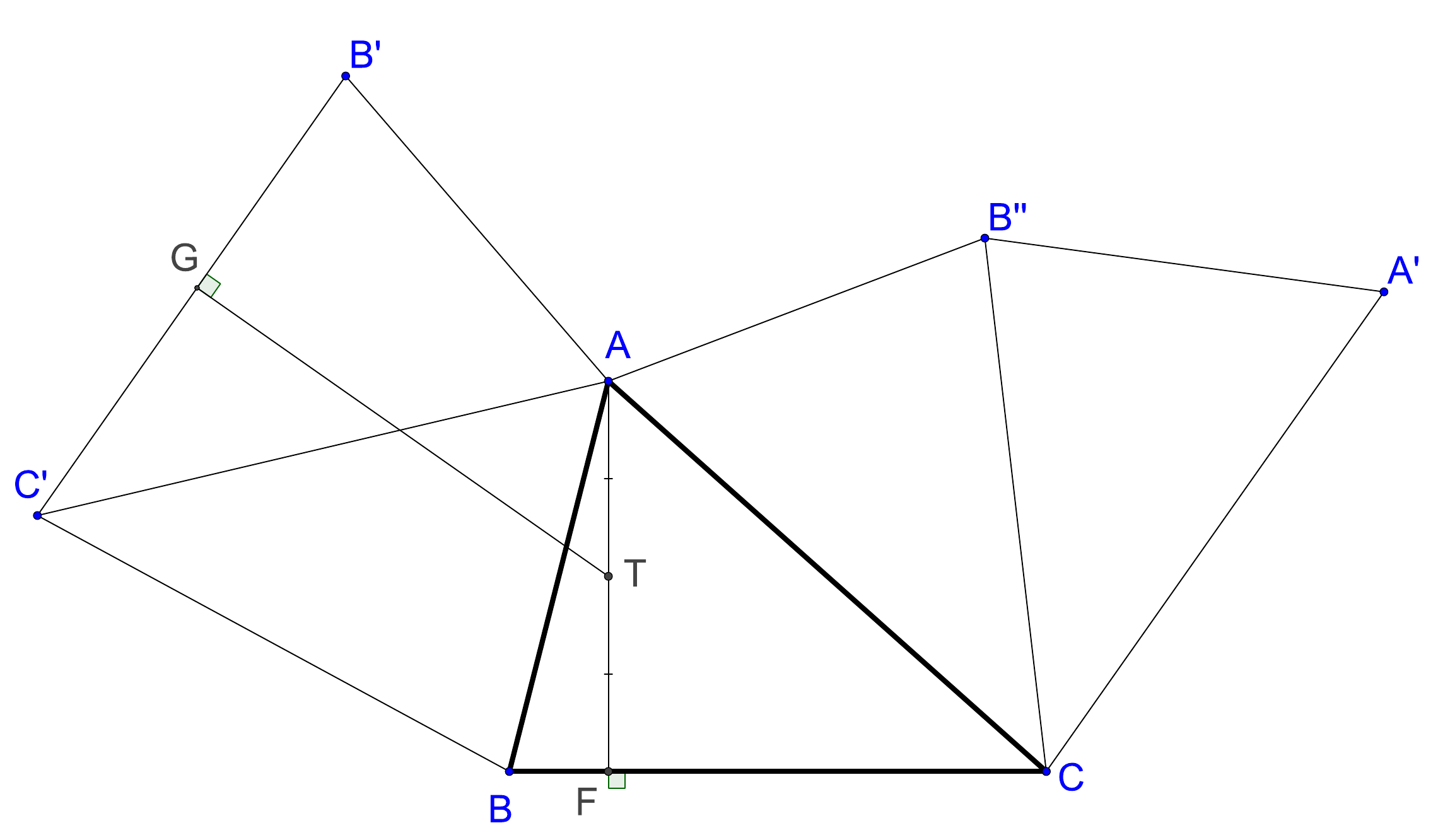}
\caption{
Comparison of optimal LRD with optimal RDL strategy.
}
\label{fig: MiddleOfheightLRD-RDL}
\end{figure}
Since $\angle A \geq \angle B \geq \angle C$, point $T$ is always in the negative RDL bounce halfspace and in the positive subopt halfspace. 
It follows by Lemma~\ref{lem: +bounce +subopt ordered visitiation} that the optimal RDL strategy has cost $\norm{TB''}$. 

Note that by shifting the origin by $(-1,0)$, point $B''$ is obtained as the rotation of $(-1,0)$ by angle $-2C$. Hence, 
$$
B''
=
\mathcal \mathcal R_{-2C}
\left(
\begin{array}{l}
-1\\
0
\end{array}
\right)
+
\left(
\begin{array}{l}
1\\
0
\end{array}
\right)
=
\left(
\begin{array}{l}
1-\cos(2C)\\
\sin(2C)
\end{array}
\right).
$$
Therefore, the optimal RDL strategy has cost
\begin{align*}
\norm{TB''} 
&= \sqrt{\left(1-\cos(2C)-p\right)^2+\left(\sin(2C)-q/2\right)^2} \\
& = \frac{1}{2\sqrt2} \frac{\sin(C)}{\sin(B+C)} \sqrt{-4 \cos (2 (B+C))-\cos (2 B)+4 \cos (2 C)+9}. 
\end{align*}
It follows, after trigonometric manipulations, that 
$$
\frac{
\norm{TB''}^2
}{
\norm{TG}^2
}
=
\frac{(-4 \cos (2 (B+C))-\cos (2 B)+4 \cos (2 C)+9)}{2 (\cos (2 (B+C))-2)^2\sin ^2(B)}. 
$$
Call the latter function $f(B,C)$. We have that 
\begin{align*}
& \frac{\partial}{\partial C}f(B,C) \\
& =
\frac{10 \sin (2 (B+C))-2 \sin (4 (B+C))-\sin (4 B+2 C)+2 \sin (2 B+4 C)+6 \sin (2 B)+7 \sin (2 C)}{(\cos (2 (B+C))-2)^3}.
\end{align*}
Recall that $\angle A \geq \angle B \geq \angle C$. Since also the triangle is non-obtuse, it follows that $0 \leq \angle C\leq \pi/4$, as well as 
$$
\angle C \leq \angle B \leq \pi/2-\angle C/2.
$$
Over this domain, it is easy to verify that 
$$
10 \sin (2 (B+C))-2 \sin (4 (B+C))-\sin (4 B+2 C)+2 \sin (2 B+4 C)+6 \sin (2 B)+7 \sin (2 C)\geq 0. 
$$
Since also $(\cos (2 (B+C))-2)^3\leq 0$, it follows that $f(B,C)$ is decreasing in $C$, and hence
\begin{align*}
f(B,C)  \geq & f(\pi/2-C/2,C) \\
= & \frac{\left(-4 \cos \left(2 \left(\frac{C}{2}+\frac{\pi }{2}\right)\right)-\cos \left(2 \left(\frac{\pi }{2}-\frac{C}{2}\right)\right)+4 \cos (2 C)+9\right) \sec ^2\left(\frac{C}{2}\right)}{2 \left(\cos \left(2 \left(\frac{C}{2}+\frac{\pi }{2}\right)\right)-2\right)^2} \\
= &
\frac{8}{\cos (C)+1}-\frac{27}{(\cos (C)+2)^2}.
\end{align*}
The last function of $C$ can be seen to be increasing in $C$, and when $C=0$ it equals 1. This shows that $\norm{TB''}/\norm{TG}\geq 1$ as wanted. 

\textbf{Comparison to optimal RLD strategy:}
The proof of this case is more holistic. We determine the loci of points $P$ that have the property that $R_1$ costs of the optimal RLD and optimal LRD visitations are equal. In its generality, the loci of points will be a mixed curve composed by a line segment, followed by a parabola segment, followed by a line segment. The mixed curve will split $\triangle ABC$ into regions in which one of the RLD or LRD strategy is strictly more efficient. As it will follow from the proof, the middle point of the shortest altitude of $\triangle ABC$, with $\angle A \geq \angle B \geq \angle C$, will either fall on that mixed curve, or on the side where the LRD strategy is strictly more efficient.

To compare the optimal LRD and RLD visitation costs, we consider reflection $C'$ of $C$ across $AB$, and reflection $B'$ of $B$ across $C'A$, 
see also Figure~\ref{fig: LastSameDown-general}.
\begin{figure}
\centering
\includegraphics[width=2in]{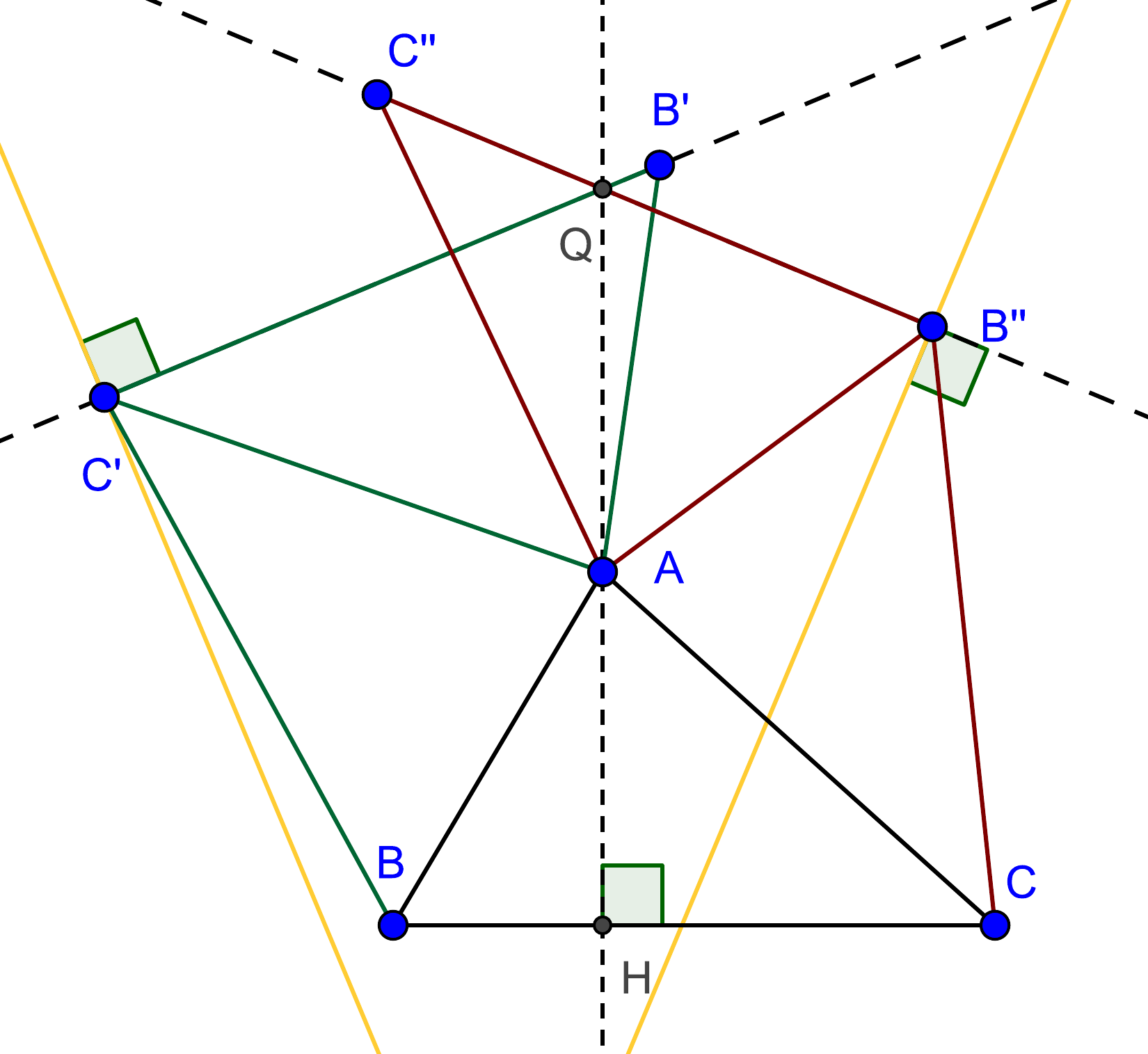}
\caption{The case of all points of altitude $AH$ falling within the positive RLD and LRD bounce halfspaces.}
\label{fig: LastSameDown-general}
\end{figure}
We also consider reflection $B''$ of $B$ across $AC$, and reflection $C''$ of $C$ across $B''A$.

Moreover, consider the LRD and RLD bounce indicator lines, that are perpendicular to $C'B'$ and $C''B''$, respectively. 
Note that $C'B'$ and $C''B''$ (or their extensions) always intersect, say at point $Q$, unless they coincide (exactly when $\angle A = \pi/2$). 
Let also $AH$ be the altitude of $\triangle ABC$ passing through $A$. 
The next observation will be useful in the following arguments. 

\begin{observation}
\label{obs: reflectors bisector}
The extension $\zeta$ of the altitude $AH$ passes also through the intersection $Q$ of (the extensions of) $C'B'$ and $C''B''$.
Moreover, unless $\angle A \not = \pi/2$, line $\zeta$ is also the bisector of one of the angles formed by (the extensions of) $C'B'$ and $C''B''$.
\end{observation}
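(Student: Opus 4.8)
The plan is to recognize each of the two unfoldings as a rotation about $A$, and then to observe that the altitude direction is precisely the axis relating the two rotations.

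First I would record that the LRD unfolding map $\Phi := \sigma_{AC'}\circ\sigma_{AB}$ (reflect across line $AB$, then across line $AC'$) fixes $A$, sends $B\mapsto B'$ and $C\mapsto C'$, and hence carries the whole segment $BC$ onto $B'C'$. Since both mirror lines $AB$ and $AC'$ pass through $A$, the composition $\Phi$ is a rotation about $A$, and its angle is twice the angle from $AB$ to $AC'$, which has magnitude $2\angle A$ (because $AC'$ is the reflection of $AC$ across $AB$). Symmetrically, the RLD unfolding $\Psi := \sigma_{AB''}\circ\sigma_{AC}$ sends $B\mapsto B''$ and $C\mapsto C''$, carries $BC$ onto $C''B''$, and is the rotation about $A$ through the opposite angle. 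Hence $\Psi=\Phi^{-1}$, and both $B'C'$ and $C''B''$ are rotated copies of the one segment $BC$, by equal and opposite angles about $A$.

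Next I would bring in the reflection $\sigma$ across the altitude line $\zeta$ (the perpendicular to $BC$ through $A$). Two features make $\zeta$ special: $A\in\zeta$, so $\sigma$ fixes $A$; and $\zeta\perp BC$, so $\sigma$ maps the line $BC$ onto itself. Because $\sigma$ is an orientation-reversing isometry fixing the center $A$ of the rotation $\Phi$, conjugation yields $\sigma\Phi\sigma^{-1}=\Phi^{-1}=\Psi$. Therefore
\[
\sigma(B'C')=\sigma\bigl(\Phi(BC)\bigr)=(\sigma\Phi\sigma^{-1})\bigl(\sigma(BC)\bigr)=\Psi(BC)=C''B'',
\]
so $\sigma$ interchanges the lines $B'C'$ and $C''B''$.

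From this swap both assertions follow. Since $\sigma$ is an involution exchanging the two lines, it fixes their intersection $Q$ (indeed $\sigma(Q)=\sigma(B'C')\cap\sigma(C''B'')=C''B''\cap B'C'=Q$), so $Q$ lies on the mirror $\zeta$; this is the first claim. And a line through the common point of two lines that exchanges them is, by definition, a bisector of the pair of supplementary angles they form at $Q$; this is the second claim. The argument degenerates exactly when $B'C'$ and $C''B''$ coincide, i.e.\ when $\Phi=\Phi^{-1}$ is the half-turn about $A$, which happens precisely for $2\angle A=\pi$; this is the excluded case $\angle A=\pi/2$. I expect the only delicate point to be the orientation bookkeeping that pins the two rotation angles as exact opposites (so that $\Psi=\Phi^{-1}$ rather than merely a rotation by $\pm2\angle A$); once that sign is secured, the conjugation identity $\sigma\Phi\sigma^{-1}=\Phi^{-1}$ and the whole conclusion follow formally, with no further computation.
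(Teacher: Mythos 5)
Your proof is correct, but it takes a genuinely different route from the paper's. The paper argues synthetically: it asserts that the triangle cut out by the three lines $BC$, $B'C'$, $B''C''$ is isosceles with apex $Q$ (so the bisector at $Q$ is also the altitude to the base, i.e.\ perpendicular to $BC$), and then places $A$ on that bisector by noting that the congruent triangles $AB'C'$ and $AB''C''$ give $A$ equal distances to the two lines; the two facts together identify the bisector with the altitude line through $A$. You instead work at the level of isometries: the two unfoldings are rotations $\Phi$ and $\Psi=\Phi^{-1}$ about $A$ through $\pm 2\angle A$, and conjugating by the reflection $\sigma$ in the altitude line $\zeta$ --- which fixes $A$ and preserves the line $BC$ --- inverts the rotation, so $\sigma$ interchanges the lines $B'C'$ and $C''B''$; both claims (that $Q\in\zeta$ and that $\zeta$ bisects) then drop out of $\zeta$ being the mirror of a reflection swapping the two lines. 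Your version buys a little extra rigor: the paper's opening assertion that the three-line triangle is isosceles is justified only by ``due to the already introduced reflections,'' whereas your conjugation identity $\sigma\Phi\sigma^{-1}=\Phi^{-1}$ is precisely the symmetry that makes it isosceles, and it also exhibits the degenerate case $\angle A=\pi/2$ (where $\Phi=\Phi^{-1}$ is the half-turn and the two lines coincide) for free. The paper's version is more elementary in that it avoids the classification of products of reflections. The one point you rightly flag --- that the two rotation angles are exact opposites rather than merely equal in magnitude --- does hold, since $AC'$ and $AB''$ are obtained by reflecting across $AB$ and $AC$ respectively and therefore lie on opposite sides of the triangle; with that sign secured the argument is complete.
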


\begin{proof}
Due to the already introduced reflections, the triangle formed by extensions of $BC, B'C'$ and $B''C''$ is isosceles. This means that $QA$ is both the altitude and the bisector of angle $Q$ of that isosceles triangle. Hence, it suffices to prove that $QA$ is a bisector of the same angle $Q$. To that end, it is enough to show that $A$ is equidistant from $B'C'$ and $B''C''$. The latter property is true since triangles $AB'C'$ and $AB''C''$ are equal, and hence have equal altitudes corresponding to $A$. 
  \end{proof}

By Observation~\ref{obs: reflectors bisector}, all points $P$ on the altitude $AH$ are equidistant from the (extensions of) segments $B'C'$ and $B''C''$. 
Combined with the findings of Section~\ref{sec: visiting 3 edges}, 
we conclude the following corollary. 

\begin{corollary}
\label{obs: R1 separator height}
If all points of altitude $AH$ are in the positive bounce and subopt halfspaces of both LRD and RLD visitations, then the loci of points whose $R_1$ visitation cost is the same for the optimal LRD and RLD visitations is the altitude $AH$. 
\end{corollary}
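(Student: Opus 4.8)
The plan is to reduce both optimal ordered costs to distances to fixed lines and then exploit the bisector structure already isolated in Observation~\ref{obs: reflectors bisector}. Concretely, the hypothesis places every point of $AH$ in the positive bounce and subopt halfspaces of both the LRD and the RLD visitations, so by Lemma~\ref{lem: +bounce +subopt ordered visitiation} the optimal LRD cost of such a point $P$ equals $d(P,B'C')$ and its optimal RLD cost equals $d(P,B''C'')$, where $B'C'$ and $B''C''$ are the two reflected segments. Since the starting points lie on one fixed side of each of these two lines, each of these two distances agrees there with a signed distance to the corresponding line, hence is an affine function of $P$ that vanishes at the common intersection point $Q$ of $B'C'$ and $B''C''$.

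From this, the forward inclusion is immediate: by Observation~\ref{obs: reflectors bisector} the extension $\zeta$ of $AH$ passes through $Q$ and bisects the angle formed by $B'C'$ and $B''C''$, so every point of $AH$ is equidistant from the two lines and the two costs coincide along all of $AH$. For the reverse inclusion I would observe that the difference of the two cost functions is, on the valid region, affine and vanishes at $Q$, so its zero set is a single straight line through $Q$; as the costs already agree on $\zeta$, this zero set must be exactly $\zeta$, whose intersection with $\tr$ is the altitude $AH$. Working with signed distances here is essential, since it is precisely what discards the perpendicular angle bisector of $B'C'$ and $B''C''$ that would otherwise surface as a spurious second branch of the equal-distance locus.

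The step I expect to require the most care is controlling points that lie off the altitude and thereby possibly leave the positive halfspaces, where the clean distance-to-line formula of Lemma~\ref{lem: +bounce +subopt ordered visitiation} no longer applies and one of the degenerate trajectories governs the cost instead. There I would verify that in each such regime one of the LRD, RLD strategies is strictly cheaper than the other, so that no equal-cost point arises outside $AH$; combined with the affine argument inside the valid region, this pins the entire locus down to the altitude.
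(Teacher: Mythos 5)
Your proposal is correct and follows essentially the same route as the paper: reduce both ordered costs to distances to the reflected segments $B'C'$ and $B''C''$ via Lemma~\ref{lem: +bounce +subopt ordered visitiation}, then invoke Observation~\ref{obs: reflectors bisector} to identify the altitude's extension with the angle bisector of those two lines. Your signed-distance argument for the reverse inclusion and your flagging of points that may fall outside the positive halfspaces simply make explicit what the paper's two-sentence derivation leaves implicit.
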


If the indicator lines intersect in the interior of the altitude $AH$, then for any starting point in the negative halfspace of an indicator line, the optimal bouncing trajectory of the corresponding ordered visitation ends at a vertex. 
If $\angle A \geq \angle B \angle C$, the LRD indicator line intersects altitude $AH$ at point $U$ (closer to $A$ than where the RLD indicator line might intersect $AH$), see also Figure~\ref{fig: LastSameDownHeightParabola}. 
\begin{figure}[h!]
\centering
\includegraphics[width=2.0in]{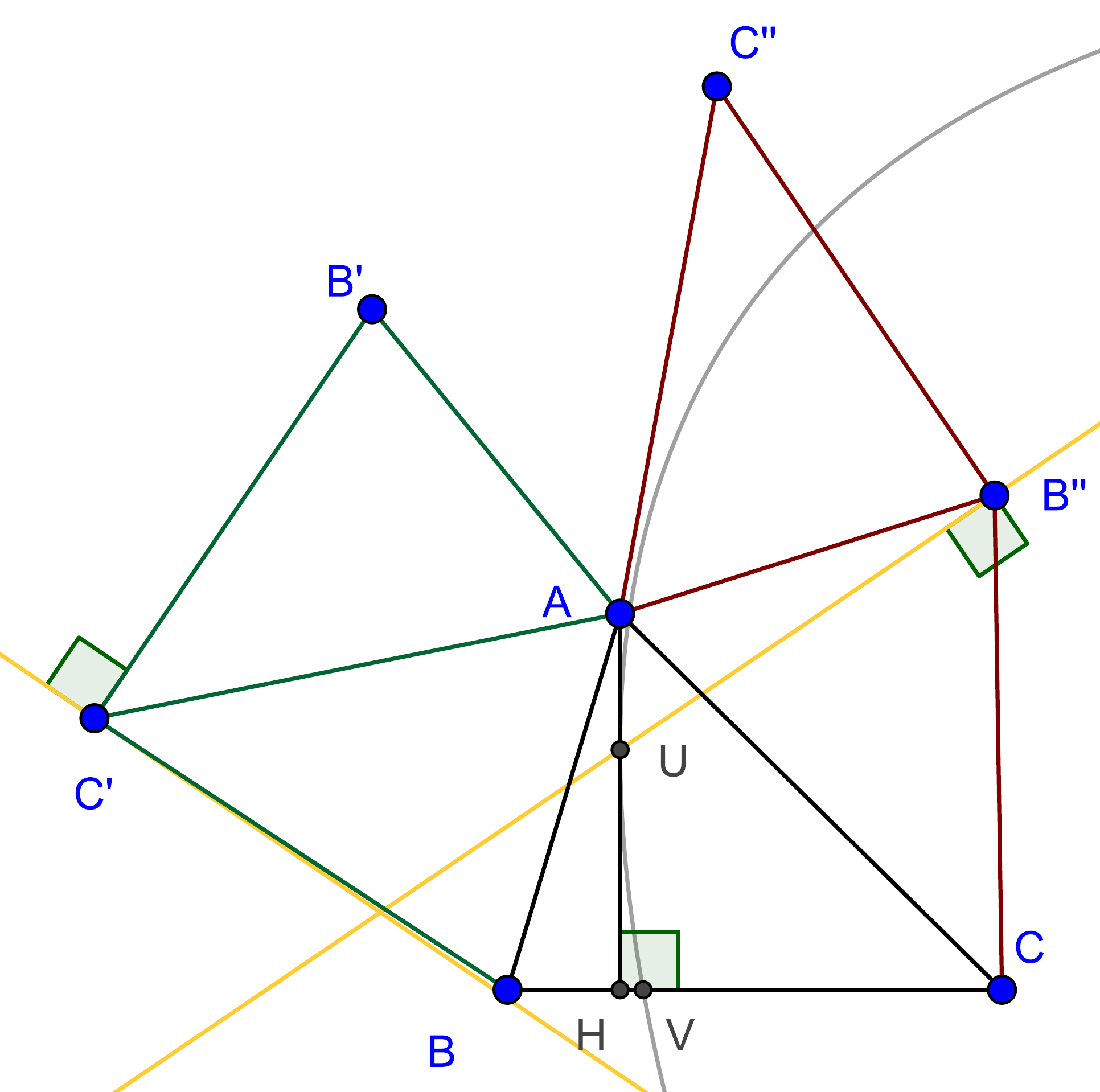}
\caption{The case of the positive RLD and negative LRD bounce halfspaces intersecting altitude $AH$.}
\label{fig: LastSameDownHeightParabola}
\end{figure}

For every starting point $P$ in the intersection of the positive LRD bounce halfspace and the negative RLD bounce halfspace, the optimal LRD visitation is computed by $d(P,B'C')$ (and the projection of $P$ onto $B'C'$ falls within $B'C'$). Also the optimal RLD visitation has cost $\norm{PB''}$. Hence, the loci of points with same LRD and RLD costs is formed by all points that are equidistant from line $B'C'$ and point $B''$. That would be the parabola with focus $B''$ and directrix $B'C'$.
\begin{corollary}
\label{obs: R1 separator height parabola}
The portion of loci of points with same LRD and RLD costs that lies within the intersection of the positive LRD and negative RLD bounce halfspaces is a (part of a) parabola with focus $B''$ and directrix $B'C'$.
\end{corollary}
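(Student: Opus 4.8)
The plan is to read the corollary off directly from the cost formulas established just above for the LRD and RLD strategies on this region, and then to recognize the resulting equal-cost equation as the focus--directrix equation of a parabola. No new construction is needed: the reflected line $B'C'$ and the reflected point $B''$ are already in place, so the whole argument is a matter of combining the two branches of Lemma~\ref{lem: +bounce +subopt ordered visitiation} with the definition of a parabola.

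First I would fix an arbitrary $P$ in the prescribed region, i.e.\ in the positive LRD bounce halfspace and the negative RLD bounce halfspace (while staying, as throughout this part of the argument, in the positive subopt halfspaces). Applying Lemma~\ref{lem: +bounce +subopt ordered visitiation} to the LRD order, the optimal LRD trajectory is a genuine three-point bounce $\langle P,E,H,G\rangle$ whose length equals the perpendicular distance $d(P,B'C')$ from $P$ to the line through $B'$ and $C'$. Crucially, membership in the positive LRD bounce halfspace guarantees that the foot of this perpendicular lands inside segment $B'C'$, so $d(P,B'C')$ is literally a point-to-line distance and not a distance to an endpoint.

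Second, since $P$ lies in the negative RLD bounce halfspace, the RLD bounce degenerates (the bounce point slides onto vertex $B$), so the degenerate branch of Lemma~\ref{lem: +bounce +subopt ordered visitiation}, applied to the RLD order, gives optimal RLD cost equal to $\norm{PB''}$, the straight-line distance from $P$ to the reflected point $B''$. Equating the two costs, the locus of points in this region at which the optimal LRD and RLD visitations coincide is exactly
$$
\{P : d(P,B'C') = \norm{PB''}\},
$$
i.e.\ the set of points equidistant from the line $B'C'$ and the point $B''$. By the focus--directrix characterization of a conic, this is precisely the parabola with focus $B''$ and directrix $B'C'$; intersecting it with the region under consideration yields the claimed arc.

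The only point requiring care -- and the step I would single out as the genuine content rather than bookkeeping -- is verifying that, throughout this region, the LRD cost is exactly a distance to the \emph{line} $B'C'$ while the RLD cost is exactly a distance to the \emph{point} $B''$, with no competing endpoint or altitude term creeping in. This is precisely what the two halfspace conditions secure (positive LRD bounce forces the LRD projection inside $B'C'$; negative RLD bounce forces the RLD bounce to collapse to the vertex), which is exactly why the equal-cost locus is a clean parabola on this patch rather than a piecewise curve.
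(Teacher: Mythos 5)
Your argument is correct and coincides with the paper's own justification, which likewise observes that on this region the optimal LRD cost is the point-to-line distance $d(P,B'C')$ while the optimal RLD cost degenerates to the point-to-point distance $\norm{PB''}$, so the equal-cost locus is the focus--directrix parabola. The only quibble is a phrasing one: in the degenerate RLD branch the bounce point stays on the first-visited edge and it is the \emph{subsequent} visitation of the remaining two edges that collapses to the common vertex $B$, but this does not affect the conclusion that the cost equals $\norm{PB''}$.
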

If the LRD indicator line intersects the aforementioned parabola outside $\triangle ABC$, then the loci of points with same LRD and RLD costs is formed by the portion of the altitude $AU$, followed by the portion of the parabola that lies within the given triangle (curve $UV$ in Figure~\ref{fig: LastSameDownHeightParabola}). 

It remains to examine the case that the LRD bound indicator line intersects the parabola in the interior of triangle $ABC$, say at point $W$, see Figure~\ref{fig: LastSameDownHeightParabolaLineSimplified}.
\begin{figure}[h!]
\centering
\includegraphics[width=1.8in]{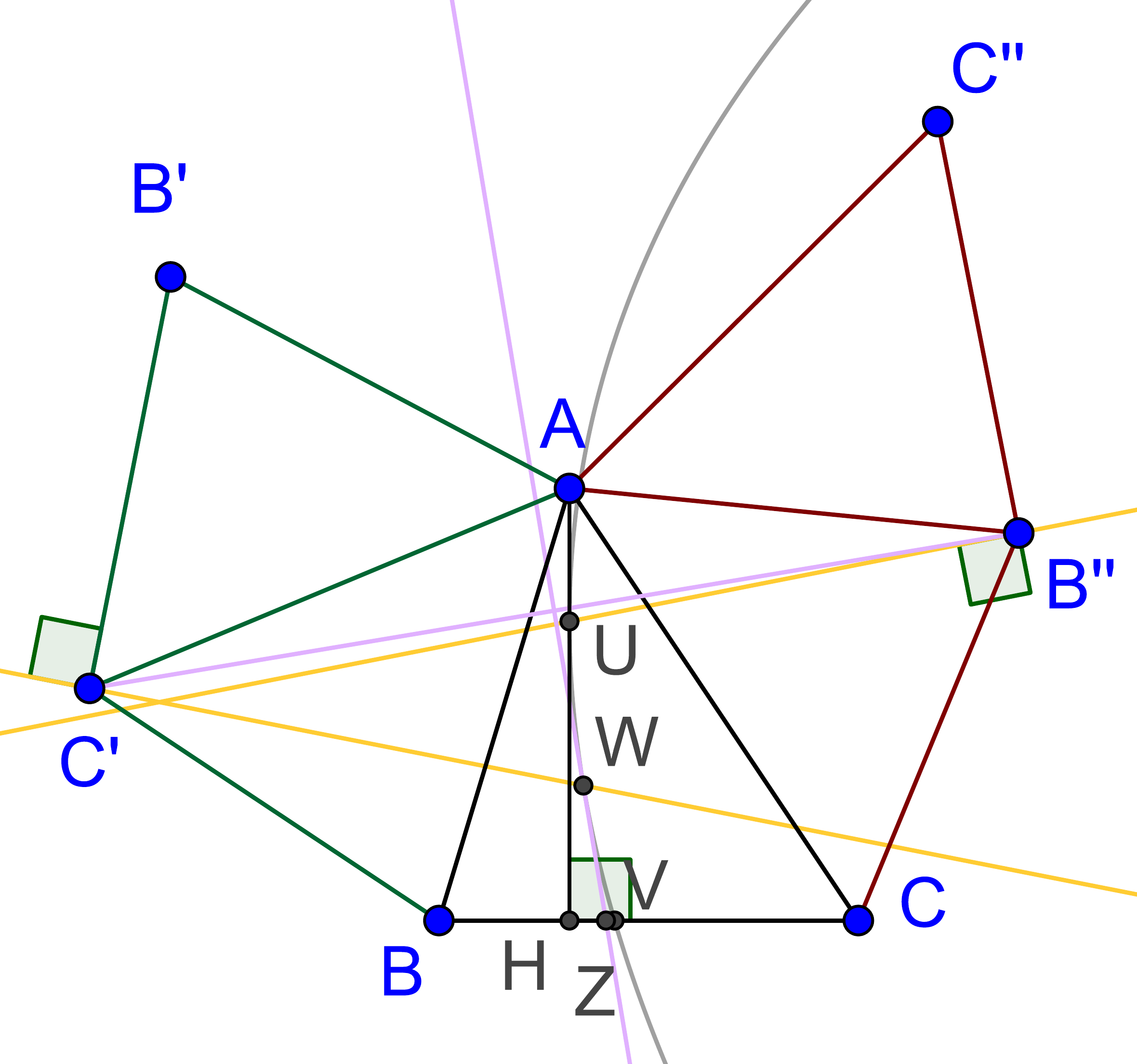}
\caption{The case of the negative RLD and LRD bounce halfspaces intersecting altitude $AH$.}
\label{fig: LastSameDownHeightParabolaLineSimplified}
\end{figure}

 Then, for all starting points $P$ in the intersection of the negative LRD and RLD bounce halfspaces, the optimal LRD and RLD visitations have cost $\norm{PC'}$ and $\norm{PB''}$, respectively. Hence, the loci of points with same LRD and RLD costs within these halfspaces are equidistant from points $B''$ and $C'$, hence they lie on the perpendicular bisector of segment $B''C'$, and let its intersection with line $BC$ be $Z$. 
 
\begin{corollary}
\label{obs: R1 separator height parabola line}
The portion of loci of points with same LRD and RLD costs that lie within the intersection of the negative LRD and RLD bounce halfspaces is part of the perpendicular bisector of segment $B''C'$ that lies within triangle $ABC$. 
\end{corollary}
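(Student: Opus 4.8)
The plan is to read off the two relevant visitation costs directly from Lemma~\ref{lem: +bounce +subopt ordered visitiation} and then observe that equating them yields exactly the equation of a perpendicular bisector. First I would fix a starting point $P$ in the intersection of the negative LRD and negative RLD bounce halfspaces (implicitly also in the positive subopt halfspaces, so that no further degeneration toward vertex $A$ occurs). By Lemma~\ref{lem: +bounce +subopt ordered visitiation}, the optimal LRD trajectory from such a $P$ is the degenerate bouncing trajectory $\langle P,J,C\rangle$, where $J\in AB$ is the degenerate LRD bouncing point determined by $\angle BJP=\angle CJA$. Since $C$ is the common endpoint of edges $AC$ and $BC$, reaching $C$ simultaneously visits both of these edges, so the trajectory is feasible for the LRD order, and its length is $\norm{PJ}+\norm{JC}$.

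Next I would invoke the reflection principle. Because $C'$ is the reflection of $C$ across $AB$ and $J\in AB$, we have $\norm{JC}=\norm{JC'}$, while the bouncing condition $\angle BJP=\angle CJA$ is precisely the condition that the broken path $P\to J\to C'$ be straight; hence $\norm{PJ}+\norm{JC}=\norm{PC'}$, so the optimal LRD cost equals $\norm{PC'}$. Running the symmetric argument for the RLD order (which visits $AC$ first, bounces off a point of $AC$, and then reaches the common endpoint $B$ of $AB$ and $BC$) gives that the optimal RLD cost equals $\norm{PB''}$, where $B''$ is the reflection of $B$ across $AC$.

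Finally I would equate the two costs. For every $P$ in the stated intersection of halfspaces, the optimal LRD and RLD visitations have equal cost precisely when $\norm{PC'}=\norm{PB''}$, i.e.\ precisely when $P$ is equidistant from the two fixed points $C'$ and $B''$. The locus of such $P$ is by definition the perpendicular bisector of segment $B''C'$, and restricting it to $\triangle ABC$ (and to the intersection of the two negative bounce halfspaces) yields exactly the claimed portion of the separator.

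I expect the only point requiring care — rather than a genuine obstacle — to be the bookkeeping of which halfspaces force which degenerate form: one must confirm that throughout the region under consideration $P$ stays in the positive LRD and RLD subopt halfspaces, so that the degenerate trajectories terminate at the vertices $C$ and $B$ (yielding the clean costs $\norm{PC'}$ and $\norm{PB''}$) rather than first passing through vertex $A$. Once this is secured, both cost formulas are immediate from Lemma~\ref{lem: +bounce +subopt ordered visitiation} and the conclusion is a one-line consequence of the definition of a perpendicular bisector.
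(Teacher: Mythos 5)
Your proposal is correct and follows essentially the same route as the paper: in the intersection of the negative LRD and RLD bounce halfspaces (with the subopt conditions holding), Lemma~\ref{lem: +bounce +subopt ordered visitiation} gives the degenerate bouncing costs $\norm{PC'}$ and $\norm{PB''}$, and equating them yields the perpendicular bisector of $B''C'$. Your added care about the reflection principle and the positive subopt halfspaces only makes explicit what the paper leaves implicit.
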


To conclude, in its generality (see Figure~\ref{fig: LastSameDownHeightParabolaLineSimplified}), the loci of points with same LRD and RLD costs is defined piecewise as, first a portion of altitude $AH$ (segment $AU$), followed by a portion of parabola with focus $B''$ and directrix $B'C'$ (curve $UW$), followed by part of the perpendicular bisector of segment $B''C'$ (segment $WZ$).
If in particular $\angle A\geq \angle B \geq \angle C$, and the middle of the shortest altitude is in the negative RLD bounce halfspace, it is immediate that the optimal LRD visitation is strictly more efficient than the optimal RLD visitation (and otherwise they are equal).


\section{Visitation Trade-offs} 
\label{sec: trade-offs}

\subsection{Searching with 1 vs 3 Robots}
\label{sec: 13 sup}

\subsubsection{Supremum Proof; 1 vs 3 Robots}

In this section we prove the following theorem. 

\begin{theorem}
\label{thm: R13 sup}
$\sup_{\tr \in \trs} \mathcal R_{1,3} (\tr)=4.$
\end{theorem}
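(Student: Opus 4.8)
The plan is to use the equivalent formulation of the supremum from Section~\ref{sec: definition and contributions}: it suffices to prove (i) the uniform upper bound $R_1(\tr,P)\le 4\,R_3(\tr,P)$ for every non-obtuse $\tr$ and every $P\in\tr$, and (ii) the existence of a triangle and a starting point whose ratio equals $4$ (which even beats the required $4-\epsilon$). I would dispatch (ii) first, as it is a short computation, and then spend the effort on (i).

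For (ii) I would take $\tr$ equilateral and $P=I$ its incenter. Since all three angles equal $\pi/3$, Lemma~\ref{lem: incenter R1 cost} gives $R_1(I)=\norm{IA'}$, where $A'$ is the reflection of the apex $A$ across $BC$. Writing $h$ for the common altitude, $I$ lies on the $A$-altitude at height $\rho=h/3$ above $BC$, while $A'$ lies at height $-h$ on the same line, so $R_1(I)=\norm{IA'}=\rho+h=\tfrac43 h$. On the other hand $R_3(I)=\rho=h/3$ because the incenter is equidistant from all edges. Hence $R_1(I)/R_3(I)=4$, establishing $\sup_{\tr\in\trs}\mathcal R_{1,3}(\tr)\ge 4$.

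For the upper bound (i) I would first reduce to the incenter, i.e.\ argue that for each fixed $\tr$ the pointwise ratio $R_1(\tr,P)/R_3(\tr,P)$ is maximized at $P=I$. The mechanism is the region decomposition: by Lemma~\ref{lem: r3 regions} the three $R_3$-regions $CLIK, AMIL, BKIM$ all meet at $I$, and inside each $R_3$-region intersected with each $R_1$-region both $R_1(\tr,P)$ and $R_3(\tr,P)$ are one of the explicit forms ``distance to a line'', ``distance to a vertex'', or ``distance to a line plus an altitude'' (Observations~\ref{obs: Visit2Edges-ii},~\ref{obs: Visit2Edges-i} and Lemma~\ref{lem: +bounce +subopt ordered visitiation}). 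On each such cell the ratio is either linear-fractional or of focus--directrix type, hence has no interior local maximum, so its maximum lies on the cell boundary; a comparison across the resulting one-dimensional separators then shows the maximum is attained at the common corner $I$ of the decompositions. Granting this reduction, what remains is the single inequality $R_1(I)\le 4R_3(I)$.

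The latter I would verify by a trigonometric identity. Taking $A$ to be the largest angle and combining \eqref{equa: IA' formula} with $R_3(I)=\rho=q_I$ from \eqref{equa: I coordinates}, elementary manipulation (using $\sin(B+C)=2\sin\tfrac{B+C}{2}\cos\tfrac{B+C}{2}$) collapses the quotient of squares to
\[
\frac{\norm{IA'}^2}{\rho^2}
=\frac{2\cos(B+C)+2\cos B+2\cos C+3}{\cos^2\!\big(\tfrac{B+C}{2}\big)}
=4+\frac{4uv+1}{u^2},
\]
where $u:=\sin\tfrac{A}{2}=\cos\tfrac{B+C}{2}$ and $v:=\cos\tfrac{B-C}{2}\le 1$. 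Thus $\norm{IA'}^2/\rho^2\le 16$ is equivalent to $4uv+1\le 12u^2$, and since $v\le 1$ it suffices to check $12u^2-4u-1=12\big(u-\tfrac12\big)\big(u+\tfrac16\big)\ge 0$. As $A$ is the largest angle we have $A\ge\pi/3$, hence $u=\sin\tfrac{A}{2}\ge\tfrac12$ and the inequality holds, with equality exactly when $u=\tfrac12$ and $v=1$, i.e.\ for the equilateral triangle. This simultaneously yields $\mathcal R_{1,3}(\tr)\le 4$ and re-confirms that the extremal case in (ii) is the only one attaining it.

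I expect the main obstacle to be the reduction to the incenter, not the final trigonometric estimate. The naive Lipschitz approach---combining $R_1(\tr,P)\le R_1(\tr,I)+\norm{PI}$ with the convexity bound $R_3(\tr,P)\ge\rho$---is too lossy, since $R_3$ may grow arbitrarily slowly away from $I$ for thin triangles while $R_1$ has Lipschitz constant $1$; so one cannot avoid the region-by-region argument, whose careful execution (enumerating the finitely many cells, and checking that the ratio on every separator type is dominated by its value at $I$) is the genuinely technical part.
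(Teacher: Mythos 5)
Your lower bound is correct and matches the paper's (Lemma~\ref{lem: R13 sup lower bound}): the equilateral incenter gives ratio exactly $4$, and your computation $\norm{IA'}=\tfrac43 h$, $R_3(I)=\tfrac13 h$ agrees with the paper's $\norm{IK}+\norm{KA}$ formulation. Your closing trigonometric identity is also correct (it even fixes a sign/angle typo in the paper's footnote to Lemma~\ref{lem: R13 inf upper bound}): one does get $\norm{IA'}^2/q_I^2=4+(4uv+1)/u^2$ with $u=\sin(A/2)$, $v=\cos((B-C)/2)$, and $12u^2-4u-1\ge 0$ for $u\ge\tfrac12$ settles $R_1(I)\le 4R_3(I)$. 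The problem is that this only bounds the ratio \emph{at the incenter}, and the reduction you rely on --- that $R_1(\tr,P)/R_3(\tr,P)$ is maximized at $P=I$ for every non-obtuse $\tr$ --- is precisely the step you have not proved. Carrying it out as you sketch would require the full cell decomposition of an arbitrary non-obtuse triangle into $R_1$ regions, which the paper explicitly states is still open in general (it is worked out only for three special triangles). Your sketch also contains a concrete error: $I$ is \emph{not} in general a vertex of the $R_1$ decomposition (for the right isosceles triangle of Lemma~\ref{lem: R1regionsRightIsosceles} the separators are the curves $AN$, $EF$, $FJ$, $BF$, $CJ$, $HJ$, and the incenter is not among their intersection points), so "the maximum is attained at the common corner $I$" has no meaning there. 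Finally, even granting that the ratio has no interior critical points on each 2-cell, its restriction to the one-dimensional separators (lines and parabola arcs) still has to be maximized, and nothing forces those maxima to endpoints; this is exactly the technical content you defer, so the upper bound is not established.

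The paper avoids this entirely by never computing $R_1(P)$ exactly in the upper-bound direction: it replaces $R_1(P)$ by the cost of an explicit feasible trajectory $T_V(P)=\langle P,P',P,V\rangle$ (visit the projection onto one edge, return, then walk to the vertex $V$ incident to the other two edges), so that $R_1(P)\le 2\,d(P,e)+\norm{PV}$, and compares this to $R_3(P)=d(P,e^*)$ where $e^*$ is the farthest edge. Splitting on whether the apex angle of the $R_3$-region containing $P$ is at most or at least $\pi/3$ (Lemmas~\ref{lem: overline13, A small} and~\ref{lem: overline13, A large}), the two summands are each bounded by $2R_3(P)$ using only the angle-bisector structure of the $R_3$ regions and a one-parameter monotonicity argument (Lemma~\ref{lem: worst is incenter 13-sup}). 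If you want to salvage your write-up, the cleanest fix is to adopt this surrogate-trajectory upper bound for the numerator rather than attempting to locate the true maximizer of $R_1/R_3$.
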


The lower bound for $\sup_{\tr \in \trs} \mathcal R_{1,3} (\tr)$ is given by the following simple lemma. 

\begin{lemma}
\label{lem: R13 sup lower bound}
Let $\tr$ be an equilateral triangle, and $I$ be its incenter. Then, $R_1(I)/R_3(I)=4$.
\end{lemma}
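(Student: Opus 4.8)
The plan is to compute $R_1(I)$ and $R_3(I)$ separately for the equilateral triangle and take their ratio, exploiting the fact that here the incenter coincides with the centroid and therefore sits at exactly one third of each altitude.

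I would first dispose of $R_3(I)$. By Observation~\ref{obs: cost 1,2,3}(i) we have $R_3(I)=\max\{d(I,AB),d(I,BC),d(I,CA)\}$, and since $I$ is equidistant from all three edges this maximum is simply the inradius $r$ of $\tr$. Hence $R_3(I)=r$.

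For $R_1(I)$ I would invoke Lemma~\ref{lem: incenter R1 cost}. In an equilateral triangle every angle equals $\pi/3$, so the hypothesis that $\angle A$ be the largest angle is satisfied (vacuously) by any choice of vertex $A$; the lemma then yields $R_1(I)=\norm{IA'}$, where $A'$ is the reflection of $A$ across $BC$. The geometric picture is clean: by symmetry $I$ lies on the altitude dropped from $A$ onto $BC$, and $A'$ lies on that same altitude line, on the opposite side of $BC$. Writing $h$ for the altitude length, $A'$ is at distance $h$ from $BC$ while $I$ is at distance $r$ from $BC$ on the other side, so $\norm{IA'}=h+r$. Since $I$ is the centroid it sits one third of the way up, i.e.\ $r=h/3$, equivalently $h=3r$; hence $R_1(I)=\norm{IA'}=3r+r=4r$.

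Combining the two computations gives $R_1(I)/R_3(I)=4r/r=4$, as claimed. There is no real obstacle here: the single substantive step is the identity $R_1(I)=\norm{IA'}$, which I take directly from Lemma~\ref{lem: incenter R1 cost}, and everything else reduces to the elementary centroid relation $h=3r$. As a sanity check (or as an alternative avoiding the lemma) one may substitute $\angle B=\angle C=\pi/3$ into formula~\eqref{equa: IA' formula} with $\norm{BC}=1$, obtaining $\norm{IA'}^2=4/3$ and $r=1/(2\sqrt3)$, which reproduces the ratio $4$.
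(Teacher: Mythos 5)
Your proof is correct and follows essentially the same route as the paper: both arguments obtain $R_1(I)=\norm{IA'}$ from Lemma~\ref{lem: incenter R1 cost} (the paper writes this cost as $\norm{IK}+\norm{KA}$ for the projection $K$ of $I$ onto $BC$, which is the same quantity), take $R_3(I)$ to be the inradius, and conclude via the relation $\norm{IK}=\frac13\norm{KA}$, i.e.\ your $r=h/3$. Nothing further is needed.
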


\begin{proof}
The reader may consult Figure~\ref{fig: R1regionsEquilateral} for a depiction of the points. 
Let $K$ be the projection of the incenter onto $BC$. 
By Lemma~\ref{lem: incenter R1 cost}, an optimal trajectory for 1 robot would be to go to $K$, and then to $A$ (following the bisector of $A$), inducing cost $R_1(I)=\norm{IK}+\norm{KA}$. 
At the same time, all angle bisectors in the equilateral triangle are also altitudes, so by Lemma~\ref{lem: r3 regions}, the cost for 3 robots equals $R_3(I)=\norm{IK}$. The claim follows by noticing that, in equilateral triangles, $\norm{IK}=\frac13\norm{KA}$.
  \end{proof}

The remaining of the section is devoted in proving a tight upper bound for $\sup_{\tr \in \trs} \mathcal R_{1,3} (\tr).$ In that direction, and for the remaining of the section, we consider a triangle $\tr=ABC$ in standard analytic form. Without loss of generality, we also assume that the starting point $P$ lies within the tetragon (4-gon) $AMIL$, see also Figure~\ref{fig: R3 regions}.

In order to provide the promised upper bound, we propose a heuristic upper bound for $R_1(P)$, as follows. Consider the projections $P_1,P_2,P_3$ of $P$ onto $AB,BC$ and $CA$ respectively. Then, three (possibly) suboptimal visitation trajectories for one robot are $T_C(P):= \langle P,P_1,P,C,\rangle$, $T_A(P):=\langle P,P_2,P,A\rangle$, $T_B(P):= \langle P,P_3,P,B\rangle$, that is
$$
R_1(P) \leq \min \{ 
T_A(P),T_B(P),T_C(P)
\}.\footnote{Note we abuse notation and we denote by $T_A(P)$ both the trajectory and its cost.}
$$ 

The proof of the upper bound follows directly by Lemma~\ref{lem: overline13, A small} and Lemma~\ref{lem: overline13, A large} below.
At a high level, we further distinguish some of the heuristics $T_A(P),T_B(P),T_C(P)$, depending on $\angle A$. 

\begin{lemma}
\label{lem: overline13, A small}
If $\angle A \leq \pi/3$, then 
$\min \{ 
T_B(P),T_C(P)
\}/R_3(P) \leq 4$. 
\end{lemma}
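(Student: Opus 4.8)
The plan is to avoid coordinates entirely and reduce the whole statement to a single inequality about one distance. Since $P \in AMIL$, Lemma~\ref{lem: r3 regions} gives $R_3(P)=d(P,BC)$, and because in this region $d(P,BC)$ is the \emph{largest} of the three edge-distances, we also have $d(P,AB)\le d(P,BC)$ and $d(P,CA)\le d(P,BC)$. Substituting these into the two heuristic costs $T_C(P)=2\,d(P,AB)+\norm{PC}$ and $T_B(P)=2\,d(P,CA)+\norm{PB}$ gives
$$
\min\{T_B(P),T_C(P)\}\le 2\,d(P,BC)+\min\{\norm{PB},\norm{PC}\}.
$$
Hence it suffices to establish the clean bound $\min\{\norm{PB},\norm{PC}\}\le 2\,d(P,BC)$, since then the right-hand side is at most $4\,d(P,BC)=4R_3(P)$.

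Writing $q_P=d(P,BC)$ for the height of $P$ above $BC$ (the projection lands inside $BC$ because $P$ is interior to the triangle), the basic identities $q_P=\norm{PB}\sin(\angle PBC)$ and $q_P=\norm{PC}\sin(\angle PCB)$ translate the target into the purely angular statement: $\angle PBC\ge \pi/6$ or $\angle PCB\ge \pi/6$. The key observation is that the region $AMIL$ lies on the $A$-side of \emph{both} angle bisectors $BL$ and $CM$ (indeed, segments $IL$ and $MI$ of these bisectors are two of its edges). Consequently every $P\in AMIL$ satisfies $\angle PBC\ge \angle B/2$ and $\angle PCB\ge \angle C/2$. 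Because $\angle A\le \pi/3$ forces $\angle B+\angle C\ge 2\pi/3$, at least one of $\angle B,\angle C$ is at least $\pi/3$; for that vertex the corresponding half-angle is at least $\pi/6$, which yields $\norm{PB}\le 2q_P$ or $\norm{PC}\le 2q_P$, and in either case $\min\{\norm{PB},\norm{PC}\}\le 2q_P$. Chaining back through the display gives $\min\{T_B(P),T_C(P)\}\le 4R_3(P)$.

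The one point I would justify carefully is the claim $\angle PBC\ge \angle B/2$ throughout $AMIL$ (and symmetrically at $C$). This holds because the line through $B$ and $L$ is exactly the locus $\angle PBC=\angle B/2$; vertex $A$ lies in the open half-plane where $\angle PBC>\angle B/2$, while $M$ lies there too and $I,L$ lie on the line, so all four corners lie in the closed half-plane on the $A$-side. The polygon $AMIL$ is contained in the convex hull of its corners, hence in that half-plane, and intersecting with the triangle's cone at $B$ gives $\angle PBC\ge \angle B/2$ for every $P\in AMIL$. I expect no further obstacle; the estimate is tight, since for the equilateral triangle one has $\norm{IB}=\norm{IC}=2q_I$ at the incenter, matching the ratio $4$ of Lemma~\ref{lem: R13 sup lower bound}.
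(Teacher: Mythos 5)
Your proof is correct and follows essentially the same route as the paper's: bound both cross-distances by $d(P,BC)=R_3(P)$ inside $AMIL$, reduce to $\min\{\norm{PB},\norm{PC}\}\le 2\,d(P,BC)$, and conclude from $\angle PBC\ge\angle B/2$, $\angle PCB\ge\angle C/2$ together with $\max\{\angle B,\angle C\}\ge\pi/3$. The only difference is that you spell out the half-plane/convexity justification of $\angle PBC\ge\angle B/2$ on $AMIL$, which the paper merely asserts via the incenter.
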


\begin{proof}
Consider some starting point $P$ in the tetragon $AMIL$ (see Figure~\ref{fig: R3 regions}). 
Let $\angle B', \angle C'$ denote angles $\angle CBP$ and $\angle PCB$, respectively, and note that since $I$ is the incenter of the given triangle, then $\angle B'  \geq \angle B/2$ and $\angle C'  \geq \angle C/2$.
At the same time, since $P$ lies in tetragon $AMIL$ (and as a result of the partition of the given triangle using its bisectors) we have that $\max\{ d(P,AB), d(P,AC) \} \leq d(P,BC)$, and in particular $R_3(P) = d(P,BC)$. But then, 
\begin{align*}
\min 
\frac{
\left\{ 
T_B(P),T_C(P)
\right\}
}
{
R_3(P)
}
&= 
\frac{
\min \left\{ 
2d(P,AB) + \norm{PB},
2d(P,AC) + \norm{PC}
\right\}
}
{
d(P,BC)
}
\\
&\leq 
2+
\frac{
\min \left\{ 
\norm{PB},
\norm{PC}
\right\}
}
{
d(P,BC)
}
=
2+
\min 
\left\{ 
\frac1{\sin(B')}, 
\frac1{\sin(C')}
\right\}.
\end{align*}
Since $\angle A \leq \pi/3$, it follows that $\max\{\angle B, \angle C\} \geq \pi/3$. Therefore, $\max\{\angle B', \angle C'\} \geq \pi/6$, and moreover 
$\max\{\sin  B', \sin C'\} \geq 1/2$, which implies the desired upper bound of 4. 
  \end{proof}

Note that the proof of Lemma~\ref{lem: overline13, A small} provides evidence that 
$\min \{ 
T_B(P),T_C(P)
\}/R_3(P)$
is maximized when $P$  is the incenter of the given triangle. The next lemma makes a similar  observation for heuristic trajectory $T_A(P)$.

\begin{lemma}
\label{lem: worst is incenter 13-sup}
The ratio $T_A(P)/R_3(P)$ is maximized when $P$ is either the incenter of $\tr$, or the intersections of the bisectors of $B,C$ with $AC,AB$, respectively.  
\end{lemma}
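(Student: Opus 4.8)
The plan is to reduce the claim to a one–variable convexity statement. First, recall that for $P\in AMIL$ Lemma~\ref{lem: r3 regions} gives $R_3(P)=d(P,BC)$, and in standard analytic form (with $B=(0,0)$, $C=(1,0)$) the line $BC$ is the $x$-axis, so $d(P,BC)=y$, the ordinate of $P=(x,y)$. Since the heuristic trajectory has cost $T_A(P)=2\,d(P,BC)+\norm{PA}$, we obtain
$$\frac{T_A(P)}{R_3(P)}=2+\frac{\norm{PA}}{y},$$
so it suffices to prove that $h(P):=\norm{PA}/y$ attains its maximum over $AMIL$ at $I$, at $M$ (the intersection of the bisector of $C$ with $AB$), or at $L$ (the intersection of the bisector of $B$ with $AC$) --- that is, at one of the three non-$A$ vertices of the tetragon $AMIL$. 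Geometrically $h$ is the eccentricity of the conic through $P$ with focus $A$ and directrix $BC$, which guides the algebra below.

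Next I would show that the maximum lies on the boundary $\partial(AMIL)=AM\cup MI\cup IL\cup LA$. For fixed $y$ the map $x\mapsto \norm{PA}=\sqrt{(x-p)^2+(y-q)^2}$ is convex, and dividing by the positive constant $y$ preserves convexity, so $h$ is convex along every horizontal chord of the convex quadrilateral $AMIL$; hence its maximum on each such chord occurs at a boundary endpoint, and therefore the global maximum over $AMIL$ is attained on $\partial(AMIL)$. Convexity of $AMIL$ (so that horizontal chords are single segments) is easily checked; e.g.\ the interior angle at $I$ equals $\angle BIC=\pi/2+A/2<\pi$.

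The core step is to show that on each boundary segment the maximum is at an endpoint. Writing a non-horizontal line as $x=\alpha y+\beta$ and setting $w=1/y$, a direct expansion gives
$$h(P)^2=\frac{\norm{PA}^2}{y^2}=(\alpha^2+1)+\bigl(2\alpha(\beta-p)-2q\bigr)\,w+\bigl((\beta-p)^2+q^2\bigr)\,w^2,$$
which is a parabola in $w$ with positive leading coefficient $(\beta-p)^2+q^2\ge q^2>0$, hence convex in $w$. Each of the four edges $AM,MI,IL,LA$ is non-horizontal and has $y>0$ throughout (all four vertices lie strictly above $BC$), so $w=1/y$ ranges over an interval, and a convex function on an interval attains its maximum at an endpoint. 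Consequently the maximum of $h$ over the boundary --- and thus over $AMIL$ --- is attained at a vertex $A$, $M$, $I$, or $L$. Since $h(A)=\norm{AA}/y_A=0$ is the global minimum, the maximizer is $M$, $I$, or $L$, which is exactly the asserted set.

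The only non-routine ingredient is the substitution $w=1/y$, which linearizes the directrix distance and exposes $h^2$ as a convex parabola; once this is in place the endpoint/vertex conclusions are immediate. The remaining work is bookkeeping: confirming $d(P,BC)=y$ via non-obtuseness (so the foot of the perpendicular from $P$ lands inside $BC$), verifying that the horizontal chords of $AMIL$ are single segments, and matching $M$ and $L$ to the bisector intersections named in the statement.
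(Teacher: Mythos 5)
Your proof is correct and establishes exactly the reduction the paper uses --- maximize $h(P)=\norm{PA}/d(P,BC)$ over the tetragon $AMIL$ and show the maximizer is a vertex other than $A$ --- but the mechanism for locating the maximum at a vertex is genuinely different. The paper first pushes the maximizer onto the lower boundary $MI\cup IL$ by noting that moving $P$ toward $BC$ increases $\norm{PA}$ and decreases $d(P,BC)$, and then, along each of $MI$ and $IL$, combines the monotonicity of $d(P_\lambda,BC)$ in $\lambda$ with the unimodality of $\norm{P_\lambda A}$ (via $\angle LIA<\pi/2$) to conclude an endpoint maximum. You instead reach the boundary by convexity of $h$ along horizontal chords and then, on each non-horizontal edge, substitute $w=1/y$ so that $h^2$ becomes a convex quadratic in $w$ with leading coefficient $(\beta-p)^2+q^2>0$, forcing the maximum to an edge endpoint; since $h(A)=0$, the maximizer is $M$, $I$ or $L$. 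Your route buys two things: it treats all four edges uniformly with no case analysis on which part of the boundary is ``lower,'' and it is fully rigorous on the sub-interval of the paper's argument where numerator and denominator both increase --- there the paper's ``it follows'' implicitly relies on the quasi-convexity of $\norm{PA}/y$ that your $1/y$ parabola makes explicit. The only details to write out are the ones you already flagged as bookkeeping: $MI$ and $IL$ are non-horizontal because they lie on the bisectors through $C$ and $B$ (points of $BC$) while $M,I,L$ have positive ordinate (and a horizontal edge would be handled by the horizontal-chord convexity anyway), and convexity of $AMIL$ should be checked at $M$ and $L$ as well as at $I$, which is immediate from $\angle AMI=\pi-\angle A-\angle C/2<\pi$ and its analogue at $L$.
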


\begin{proof}
Consider an arbitrary point $P$ in the tetragon $AMIL$ (see Figure~\ref{fig: R3 regions}).
We have that 
$$
\frac{T_A(P)}{R_3(P)}
=
\frac{ 2 d(P,BC) + \norm{PA}}{d(P,BC)}
=
2+ \frac{\norm{PA}}{d(P,BC)}. 
$$
Since $\tr$ is non-obtuse, the closer $P$ is to $BC$, the larger $\norm{PA}$ is and the smaller 
$d(P,BC)$ is. In other words, $T_A(P)/R_3(P)$ attains is maximum for some point $P$ in the line segments $MI, IL$. 
So, let us consider an arbitrary point $P$ in the line segment $IL$. Clearly, it suffices to prove that 
$T_A(P)/R_3(P)$ is maximized either at $I$ or at $L$.
Equivalently, it suffices to prove that $\frac{\norm{PA}}{d(P,BC)}$ is maximized either at $I$ or at $L$.

First we show that $\angle LIA$ is strictly acute. Indeed, since $\tr$ is non-obtuse, we have 
$$
\angle LIA
= \pi- \angle A/2 - \angle ALI = 
\pi- \angle A/2 - (\pi- \angle B/2 - A) =
(\angle A + \angle B )/2 < \pi/2. 
$$
This implies that the projection $A_0$ of $A$ onto line passing through $B,L$, falls within the line segment $IL$. Now consider an arbitrary point 
$$
P_\lambda = (1-\lambda)I + \lambda L
$$
on the line segment $IL$, where $\lambda \in [0,1]$.
In particular, there exists $\lambda_0\in(0,1)$, such that $P_{\lambda_0}=A'$. 
Now, note that as $\lambda$ increases from $0$ to $1$, it is immediate that $d(P_\lambda,BC)$ increases. 
Since $\angle LIA$ is non-obtuse, $\norm{P_{\lambda} A}$ is decreasing when $\lambda \in [0,\lambda_0]$ and increasing when $\lambda \in [\lambda_0,1]$. It follows that 
$\frac{\norm{P_\lambda A}}{d(P_\lambda,BC)}$ attains its maximum either when $\lambda=0$ or when $\lambda =1$, that is, either at the incenter $I$ or point $L$. 
  \end{proof}

We are now ready to prove the lemma that complements Lemma~\ref{lem: overline13, A small}.

\begin{lemma}
\label{lem: overline13, A large}
If $\angle A \geq \pi/3$, then 
$T_A(P)/R_3(P) \leq 4$. 
\end{lemma}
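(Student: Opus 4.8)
The plan is to reduce the inequality to the three candidate points supplied by Lemma~\ref{lem: worst is incenter 13-sup} and check each directly. Since $P$ lies in the tetragon $AMIL$, Lemma~\ref{lem: r3 regions} gives $R_3(P)=d(P,BC)$, and by definition $T_A(P)=2d(P,BC)+\norm{PA}$, so
$$
\frac{T_A(P)}{R_3(P)} = 2+\frac{\norm{PA}}{d(P,BC)}.
$$
Hence the claim $T_A(P)/R_3(P)\le 4$ is equivalent to $\norm{PA}\le 2\,d(P,BC)$, and it suffices to verify this at the point(s) maximizing the ratio. By Lemma~\ref{lem: worst is incenter 13-sup}, that maximum is attained at the incenter $I$, at $L$ (the intersection of the bisector of $\angle B$ with $AC$), or at $M$ (the intersection of the bisector of $\angle C$ with $AB$), so I would evaluate $\norm{PA}/d(P,BC)$ at each of these three points.

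For the incenter I would use the standard facts $d(I,BC)=r$ (the inradius) and $\norm{IA}=r/\sin(\angle A/2)$, which follow since $I$ is equidistant from the edges and lies on the bisector of $\angle A$. These give $\norm{IA}/d(I,BC)=1/\sin(\angle A/2)$, which is at most $2$ exactly when $\sin(\angle A/2)\ge 1/2$, i.e.\ when $\angle A\ge\pi/3$. This matches the hypothesis precisely, is the binding case, and recovers the extremal value $4$ (with equality at $\angle A=\pi/3$, consistent with Lemma~\ref{lem: R13 sup lower bound}).

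For $P=L$ I would locate $L$ on $AC$ by the angle bisector theorem and use that the distance from a point on segment $AC$ to line $BC$ varies affinely from the altitude $h_A$ at $A$ down to $0$ at $C$. Writing $a=\norm{BC}$, $b=\norm{CA}$, $c=\norm{AB}$ and $h_A=bc\sin(\angle A)/a$, a short computation collapses the ratio to $\norm{LA}/d(L,BC)=1/\sin(\angle A)$; the symmetric computation at $M$ yields the same value. Since $\tr$ is non-obtuse and $\angle A\ge\pi/3$, we have $\angle A\in[\pi/3,\pi/2]$, whence $\sin(\angle A)\ge\sqrt3/2$ and the ratio is at most $2/\sqrt3<2$. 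Combining the three points gives $\norm{PA}\le 2\,d(P,BC)$ throughout, as required.

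The only real subtlety is the reduction itself: the argument is clean precisely because Lemma~\ref{lem: worst is incenter 13-sup} collapses the optimization over the whole tetragon to three explicit points, and because the incenter turns out to be the tight case in which the hypothesis $\angle A\ge\pi/3$ is exactly the needed condition. The remaining trigonometric identities ($\norm{IA}=r/\sin(\angle A/2)$ and the affine interpolation of distances along $AC$) are routine, so I expect no genuine obstacle beyond bookkeeping.
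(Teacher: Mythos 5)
Your proposal is correct and follows essentially the same route as the paper: both reduce via Lemma~\ref{lem: worst is incenter 13-sup} to checking the candidate extremal points and then verify $\norm{PA}/d(P,BC)\leq 2$ there, obtaining $1/\sin(\angle A)$ at $L$ (and $M$) and $1/\sin(\angle A/2)=1/\cos((\angle B+\angle C)/2)$ at the incenter. The only differences are cosmetic --- you use the angle bisector theorem and the inradius relation where the paper uses the equidistance of $L$ from $AB,BC$ and explicit coordinates --- and the extremal values agree.
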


\begin{proof}
By Lemma~\ref{lem: worst is incenter 13-sup}, it is enough to prove that 
$T_A(P)/R_3(P) \leq 4$, when $P$ is either $L$ or $I$, see also Figure~\ref{fig: R3 regions}. As before, we have 
$$
\frac{T_A(P)}{R_3(P)}
=
\frac{ 2 d(P,BC) + \norm{PA}}{d(P,BC)}
=
2+ \frac{\norm{PA}}{d(P,BC)},
$$
therefore our goal is to show that $\frac{\norm{PA}}{d(P,BC)}\leq 2$.

First, consider starting point $P=L$, and let $L',L''$ be the projections of $L$ onto $BC$ and $AB$, respectively. 
Note that $d(L,BC)=\norm{LL'}=\norm{L'L''}=R_3(L)$.
In right $\triangle ALL''$ we have that $\sin(A)=\norm{LL''}/\norm{AL}=d(L,BC)/\norm{AL}$. Since $\angle A\geq \pi/3$ we have $\sin(A)\geq 1/2$ and the claim follows. 

Second, we focus on the starting point $P=I$, the incenter. 
For this we consider $\tr$ in standard analytic form.
Then, 
$$
\frac{T_A(I)}{R_3(I)}
=
\frac{ 2 d(I,BC) + \norm{IA}}{d(I,BC)}
=
2+ \frac{\norm{IA}}{d(I,BC)}. 
$$
Using Corollary~\ref{cor: incenter} that gives the coordinates of incenter $I=(p_I,q_I)$ we can compute 
$$ \norm{IA}=\sqrt{(p_I-p)^2+(q_I-q)^2}, ~~d(I,BC)=q_I,$$
so that by Observation~\ref{obs: A,I coordinates angles}, we get , after some trigonometric manipulations, that 
$$
\frac{\norm{IA}}{d(I,BC)}
= 
\frac1{\cos\left( \frac{B+C}2\right)} \leq 2,
$$
where the last inequality is due to that $\angle A \geq \pi/3$, and hence $\angle B + \angle C \leq 2\pi/3$.
  \end{proof}

\subsubsection{Infimum Proof; 1 vs 3 Robots}

In this section we prove the following theorem. 

\begin{theorem}
\label{thm: R13 inf}
$\inf_{\tr \in \trs} \mathcal R_{1,3} (\tr)=\sqrt{10}.$
\end{theorem}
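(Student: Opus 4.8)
The plan is to verify the two conditions that are equivalent to $\inf_{\tr\in\trs}\mathcal R_{1,3}(\tr)=\sqrt{10}$: first, that every non-obtuse triangle admits a starting point whose ratio is at least $\sqrt{10}$; and second, that a suitable family of triangles forces $\max_P R_1(P)/R_3(P)$ down to $\sqrt{10}$.

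\textbf{The ``$\ge\sqrt{10}$ at some point'' direction.} For an arbitrary $\tr$ I would simply take $P=I$, the incenter, and show $R_1(I)/R_3(I)\ge\sqrt{10}$; this already gives $\mathcal R_{1,3}(\tr)\ge\sqrt{10}$ without needing $I$ to be the maximizer. Put $\tr$ in standard analytic form with $\angle A$ the largest angle. Since $I$ is equidistant from all three edges, Observation~\ref{obs: cost 1,2,3}(i) gives $R_3(I)=q_I$ (the inradius), while Lemma~\ref{lem: incenter R1 cost} gives $R_1(I)=\norm{IA'}$. Combining \eqref{equa: IA' formula} with \eqref{equa: I coordinates} and simplifying trigonometrically yields
$$
\frac{R_1(I)^2}{R_3(I)^2}=\frac{\norm{IA'}^2}{q_I^2}=\frac{2\bigl(2\cos(B+C)+2\cos B+2\cos C+3\bigr)}{1+\cos(B+C)},
$$
which I would then minimize over all non-obtuse triangles with $\angle A\ge\angle B,\angle C$. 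For fixed $B+C=\pi-A$ the only free quantity is $\cos B+\cos C=2\sin(A/2)\cos\frac{B-C}{2}$, minimized by making $|B-C|$ as large as the constraint $B,C\le A$ permits, i.e.\ at the isosceles configuration $B=A$, $C=\pi-2A$. Substituting and writing $x=\cos A\in[0,1/2]$ collapses the bound to $\tfrac{2(5-4x^2)}{1-x}$, whose derivative has positive numerator $4x^2-8x+5$ (negative discriminant), hence is increasing; its minimum on $[0,1/2]$ is at $x=0$ and equals $10$. Thus $R_1(I)/R_3(I)\ge\sqrt{10}$ for every $\tr$.

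\textbf{The ``$\le\sqrt{10}+\epsilon$ for some triangle'' direction.} Here I would take the family of thin isosceles triangles $\tr_\alpha$ with apex angle $\angle A=\alpha\to0$ and base angles $\to\pi/2$; up to scaling, place $A=(0,\cot(\alpha/2))$, $B=(-1,0)$, $C=(1,0)$, and prove $\max_P R_1(P)/R_3(P)\to\sqrt{10}$. The lower estimate is the previous computation specialized to $B=C\to\pi/2$: concretely $I\to(0,1)$ and the vertex $B$ reflected across $AC$ tends to $(3,0)$, so $R_1(I)=\norm{IB'}\to\sqrt{10}$ while $R_3(I)\to1$. For the matching upper estimate I would partition $\tr_\alpha$ into the $R_1$ regions of Lemma~\ref{lem: R1regionsThinIsosceles} and overlay the $R_3$ regions of Lemma~\ref{lem: r3 regions}; on each cell $R_1(P)$ is a distance to a fixed line, a distance to a fixed point, or a distance-plus-altitude, and $R_3(P)$ is the distance to one prescribed edge, so the ratio is an explicit elementary function of $P$. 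Letting $\alpha\to0$, the edges $AB,AC,BC$ converge to $x=-1$, $x=1$, $y=0$, and one checks that the limiting ratio $R_1(x_0,y_0)/\max\{1+|x_0|,\,y_0\}$ is bounded by $\sqrt{10}$ with equality only at the limiting incenter $(0,1)$; points in the sliver near the apex are treated separately, where the ratio tends to $1$.

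\textbf{Main obstacle.} The hard part is the second direction. One must show that across \emph{all} $R_1/R_3$ cells of the thin isosceles triangle — not merely along its axis of symmetry, where the one-variable check is easy — the ratio never exceeds $\sqrt{10}$ in the limit, and that the convergence is uniform enough (together with the separate, easy bound near the apex) to conclude $\sup_P R_1(P)/R_3(P)\le\sqrt{10}+\epsilon$ for all sufficiently small $\alpha$. The first direction, by contrast, reduces cleanly to the single-variable minimization above.
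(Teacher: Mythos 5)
Your overall architecture is the same as the paper's: the lower bound on the infimum comes from showing $R_1(I)/R_3(I)\ge\sqrt{10}$ at the incenter of every non-obtuse triangle, and the upper bound comes from the family of thin isosceles triangles with apex angle tending to $0$. Within that architecture, your first direction is both correct and genuinely different in execution from the paper's Lemma~\ref{lem: R13 inf upper bound}: the paper works in Cartesian coordinates and proves two monotonicity claims (the ratio decreases as the apex angle grows to $\pi/2$, and among right triangles it decreases as they get thinner), whereas you derive the closed trigonometric form $\norm{IA'}^2/q_I^2=\frac{2\left(2\cos(B+C)+2\cos B+2\cos C+3\right)}{1+\cos(B+C)}$ from \eqref{equa: IA' formula} and \eqref{equa: I coordinates} and minimize it by hand. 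I checked your formula on the equilateral and right-isosceles cases (giving $16$ and $6+4\sqrt2$ respectively) and it is correct — in fact it is the route the paper itself suggests in a footnote, and your one-variable reduction to $\frac{2(5-4x^2)}{1-x}$ on $x=\cos A\in[0,1/2]$, increasing because $4x^2-8x+5$ has negative discriminant, is clean and complete. This buys a shorter argument than the paper's two-claim Cartesian proof and also exhibits the extremal configuration ($B=A\to\pi/2$, $C\to0$, a thin right triangle) transparently.

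The gap is in the second direction, and it is exactly where the paper's Lemma~\ref{lem: R13 inf lower bound} spends almost all of its effort. You correctly name the family, the limiting incenter, and the fact that $R_1(I)\to\norm{IB'}\to\sqrt{10}$ with $R_3(I)\to1$, but the statement you need is $\max_{P\in\tr_\alpha}R_1(P)/R_3(P)\le\sqrt{10}+\epsilon$, and your proposal reduces this to ``one checks that the limiting ratio is bounded by $\sqrt{10}$'' over all cells of the $R_1$/$R_3$ decomposition. That check is not routine bookkeeping: in the paper it requires the region structure of Lemma~\ref{lem: R1regionsThinIsosceles} (regions $BKH$, $DBH$, $AEJ$, $EDHJ$), a separate argument that the ratio tends to $1$ near the apex because $R_3(P)\to\infty$ there, and — in the critical region $EDHJ$, where $R_1(P)\to\sqrt{(2-x)^2+y^2}$ in unit-base coordinates — three sub-optimizations according to which edge realizes $R_3(P)$ (i.e.\ whether $P$ lies above the bisector of $\angle C$, between the two base-angle bisectors, or below), each of which must be shown to have supremum exactly $\sqrt{10}$ at $(x,y)\to(1/2,1/2)$. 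Your outline identifies the right cells and the right limiting formulas, so the approach would succeed, but as written the decisive half of the theorem is announced rather than proved; you should carry out the three constrained suprema explicitly and justify that the convergence of the piecewise formulas as $\alpha\to0$ is uniform enough to transfer the limiting bound back to small positive $\alpha$.
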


The next lemma shows that $\inf_{\tr \in \trs} \mathcal R_{1,3} (\tr)\leq \sqrt{10}$.

\begin{lemma}
\label{lem: R13 inf lower bound}
Let $ABC$ be an isosceles with base $BC$. Then, we have  
$$\lim_{\angle A \rightarrow 0} \max_{P \in ABC} R_1(P)/R_3(P) = \sqrt10.$$
\end{lemma}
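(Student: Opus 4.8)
The plan is to single out the thin isosceles triangles as the extremal family behind $\inf_{\tr}\mathcal R_{1,3}$, and to show that in the limit $\angle A\to 0$ the worst ratio over all starting points is governed entirely by the incenter and equals exactly $\sqrt{10}$. First I would place the triangle symmetrically as $B=(-1,0)$, $C=(1,0)$, $A=(0,h)$, so that the base $BC$ stays fixed while $\angle A\to 0$ is the same as $h\to\infty$; since $R_1(P)/R_3(P)$ is scale invariant this normalization is harmless. The incenter is $I=(0,r)$ with $r=h/\big(1+\sqrt{h^2+1}\big)\to 1$, and because $I$ is equidistant from all three edges, Observation~\ref{obs: cost 1,2,3}(i) gives $R_3(I)=r\to 1$.

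For the lower bound on $\max_P R_1(P)/R_3(P)$ I would evaluate the ratio at $I$. A thin isosceles triangle is acute with its two largest angles at the base vertices $B,C$, so Lemma~\ref{lem: incenter R1 cost} (applied with a base vertex in the role of the largest-angle vertex) yields $R_1(I)=\norm{IB^{\ast}}$, where $B^{\ast}$ is the reflection of $B$ across $AC$. A direct reflection computation gives $B^{\ast}=\big(\frac{3h^2-1}{h^2+1},\frac{4h}{h^2+1}\big)\to(3,0)$, whence $R_1(I)=\norm{IB^{\ast}}\to\sqrt{3^2+1^2}=\sqrt{10}$ and $R_1(I)/R_3(I)\to\sqrt{10}$. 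This already shows $\max_P R_1(P)/R_3(P)\ge\sqrt{10}-o(1)$.

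For the matching upper bound I would argue that no starting point beats the incenter in the limit. As $h\to\infty$, on any bounded height window the edges $AB,AC$ converge to the vertical lines $x=\mp 1$, so for $P=(x_0,y_0)$ the distance-to-edge formula gives $R_3(P)\to\max\{y_0,\,1+x_0,\,1-x_0\}=\max\{y_0,1+|x_0|\}$. For $R_1$ I would use the degenerate ordered visitation that touches one leg and then travels to the opposite base vertex (the degenerate bouncing trajectory $\langle P,J,C\rangle$ of Lemma~\ref{lem: +bounce +subopt ordered visitiation}): touching $AC$ and then going to $B$ costs $\norm{PB^{\ast}}$, which tends to $\norm{P-(3,0)}$, and symmetrically the path touching $AB$ and going to $C$ costs $\norm{P-(-3,0)}$. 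Hence $R_1(P)\le\min\{\norm{P-(3,0)},\norm{P-(-3,0)}\}$. By the symmetry $x\mapsto -x$ I may assume $x_0\ge 0$, so the binding bound is $R_1(P)\le\sqrt{(3-x_0)^2+y_0^2}$, and it remains to check the elementary inequality $(3-x_0)^2+y_0^2\le 10\,\max\{y_0,1+x_0\}^2$ for $0\le x_0\le 1$, $y_0\ge 0$. Splitting into the two cases according to which term realizes the maximum, the worst case in each lies on $y_0=1+x_0$, where the quotient equals $\frac{(3-x_0)^2+(1+x_0)^2}{(1+x_0)^2}$; this is decreasing in $x_0$ on $[0,1]$ with value exactly $10$ at $x_0=0$. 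Thus the ratio is at most $\sqrt{10}$, with equality only at $(0,1)$, the limit of the incenter.

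The main obstacle is making the limit uniform and simultaneously justifying the two heuristic inputs for \emph{every} $P$. The bound $R_1(P)\le\norm{PB^{\ast}}$ is valid only when the reflected segment meets the actual edge $AC$ inside the segment (for $P$ close to the apex it can exit through the extension beyond $A$), and the limiting formula for $R_3(P)$ requires each perpendicular foot to land on its edge; both conditions hold on a neighbourhood of the incenter but must be verified, while the remaining points are disposed of by crude estimates showing their ratio stays bounded away from $\sqrt{10}$ (near the apex $R_3$ grows like the height while $R_1$ stays comparable, and near a base vertex $R_1\approx R_3$). I expect the cleanest way to control this is to keep $h$ finite, carry the $O(1/h)$ error terms through the same case analysis, and use the explicit $R_1$-region description of a thin isosceles triangle (Lemma~\ref{lem: R1regionsThinIsosceles}) to certify which ordered visitation is optimal in each subregion, thereby localising the maximiser to a shrinking neighbourhood of $I$ and pinning the limit of the maximum at $\sqrt{10}$.
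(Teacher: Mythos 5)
Your proposal is correct and follows essentially the same route as the paper: pass to the limiting strip geometry of the thin isosceles, bound $R_1(P)$ by the distance to a reflected base vertex (the degenerate bouncing trajectory of Lemma~\ref{lem: +bounce +subopt ordered visitiation}), compare against $R_3(P)\to\max\{y_0,1+|x_0|\}$, and identify the limiting incenter as the maximizer. The only organizational differences are that you read the lower bound directly off Lemma~\ref{lem: incenter R1 cost} at the incenter and compress the paper's three bisector subcases in region $EDHJ$ into the single inequality $(3-x_0)^2+y_0^2\le 10\,\max\{y_0,1+x_0\}^2$, while your deferred $O(1/h)$ bookkeeping and the apex/base-vertex estimates correspond exactly to the paper's treatment of regions $AEJ$ and $BKH\cup DBH$ via Lemma~\ref{lem: R1regionsThinIsosceles}.
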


\begin{proof}
We consider a triangle in standard analytic form, with $B=(0,0), C=(1,0)$ and $A=(1/2,q)$, with $q\rightarrow \infty$. 
For such a triangle, all the $R_1$ regions are summarized in Figure~\ref{fig: R1regionsThinIsocleles}, and the optimal visitation strategies in Lemma~\ref{lem: R1regionsThinIsosceles}.
We need to show that for all starting points $P$, we have $R_1(P)/R_3(P) \leq \sqrt{10}$. By symmetry, we may restrict starting points in $\triangle ABK$, where $K$ is the middle point of $BC$. We examine the following four regions of starting points: $BKH, DBH, DHJE$ and $EJA$. 

First we analyze regions $BKH, DBH$ together. 
Recall that $E$ is identified by the optimal bouncing subcone of $B$, and that $D$ is the projection of $C$ onto $AB$.
Note that as $q$ grows, point $D$ tends to $B$, and hence regions $BKH, DBH$ tend to line segment $BK$. 
An arbitrary point in any of these regions (as $q\rightarrow \infty$), will tend to a point $P=(x,0)$, where $0\leq x \leq 1/2$. Clearly, $R_3(P) = 1-x$. The optimal strategy in $DBH$ is of LDR-type, and hence has cost (in the limit) $x+1$. The optimal strategy in $BKH$ is of DLR-type, and hence has cost (in the limit) $x+1$. Overall, starting from $P$ in any of the regions $BKH, DBH$, we have $R_1(P)/R_3(P)$ tends to $(x+1)/(1-x) \leq 3<\sqrt{10}$. 

Next we analyze region $AEJ$. The key observation is that a robot starting from that region can move parallel to $BC$ (back and forth) visiting both $AB,AC$ before moving to $BC$ along the projection. Hence, $R_1(P) \leq 3/2+ R_3(P)$, or in other words, $R_1(P)/R_3(P) \leq 3/(2 R_3(P)) +1$. 
Next, we show that $R_3(P) \rightarrow \infty$ as $q\rightarrow \infty$ (i.e. as $\angle B \rightarrow \pi/2$), hence concluding this case as well. 
Indeed, any point in $AEJ$ lies above the intersection of $BF$ with $AK$, call it $W$. Since $\angle FBG = 3B-\pi$, it follows that 
$$\angle FBC = \angle WBC = 2\angle B-\pi/2\rightarrow \pi/2,$$
as $\angle B \rightarrow \pi/2$. Hence, $W$ lies arbitrarily away from $BC$ as $\angle B \rightarrow \pi/2$.

Finally, we analyze the region $EDHJ$. Starting from a point $P=(x,y)$, with $0\leq x\leq 1/2$ and $0\leq y \leq q$ (the latter bound is weak but does not affect our analysis), the optimal strategy for $R_1(P)$ is to bounce optimally to
$AB$ and then move to $C$. Equivalently, consider the reflection $P'$ of $P$ across $AC$. Then, as $q\rightarrow \infty$, we have $P'\rightarrow (x+2(1-x),y) = (2-x,y)$ , and therefore $R_1(P) \rightarrow \sqrt{(2-x)^2+y^2}$. 

We consider three further subcases, and our underlying assumption remains that $0\leq x\leq 1/2$ and $0\leq y \leq q$.
If $P$ is (on or) above bisector of $\angle C$, then (in the limit) we have $y \geq -x+1$ and $R_3(P)=y$. But then, we have 
$$
\frac{R_1(P)}{R_3(P)} \leq \sup_{y\geq -x+1} \frac{\sqrt{(2-x)^2+y^2}}{y} = \sqrt{10},
$$
achieved for $x,y\rightarrow 1/2$. 
If $P$ is below the bisector of $\angle C$ and (on or) above the bisector of $\angle B$, then (in the limit) we have $x\leq y \leq -x+1$, $y\geq x$, and $R_3(P)=1-x$. But then, we have 
$$
\frac{R_1(P)}{R_3(P)} \leq \sup_{x\leq y < -x+1, } \frac{\sqrt{(2-x)^2+y^2}}{1-x} = \sqrt{10},
$$
achieved for $x,y\rightarrow 1/2$. 
Finally, if $P$ is below bisector of $\angle B$, then (in the limit) we have $y \leq x$ and $R_3(P)=1-x$. But then, we have 
$$
\frac{R_1(P)}{R_3(P)} \leq \sup_{y< x} \frac{\sqrt{(2-x)^2+y^2}}{1-x} = \sqrt{10},
$$
achieved again for $x,y\rightarrow 1/2$. 
Note that worst starting point $P=(1/2,1/2)$ is the (limit of the) incenter of thin isosceles $ABC$ (as $\angle A \rightarrow 0$). 
  \end{proof}

The next lemma shows that $\inf_{\tr \in \trs} \mathcal R_{1,3} (\tr)\geq \sqrt{10}$.

\begin{lemma}
\label{lem: R13 inf upper bound}
For any triangle $\tr \in \trs$, let $I$ denote its incenter. 
Then, we have $R_1(I)/R_3(I) \geq \sqrt{10}$. 
\end{lemma}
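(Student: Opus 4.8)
The plan is to reduce the claim to a one-variable inequality using the exact trigonometric formulas already available. First I would note that, since the incenter $I$ is equidistant from all three edges, Observation~\ref{obs: cost 1,2,3}(i) gives $R_3(I)=q_I$, the inradius, whose closed form is $q_I=\sin(B/2)\sin(C/2)/\sin((B+C)/2)$ by \eqref{equa: I coordinates}. Assuming without loss of generality that $\angle A$ is the largest angle, Lemma~\ref{lem: incenter R1 cost} gives $R_1(I)=\norm{IA'}$, whose square is recorded in \eqref{equa: IA' formula}. Dividing, cancelling the common factor $\sin^2(B/2)\sin^2(C/2)$, and using $\sin^2(B+C)=4\sin^2((B+C)/2)\cos^2((B+C)/2)$, I expect the ratio to collapse to
$$
\left(\frac{R_1(I)}{R_3(I)}\right)^2=\frac{2\cos(B+C)+2\cos B+2\cos C+3}{\cos^2\!\left(\frac{B+C}{2}\right)}.
$$

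Next I would pass to the symmetric coordinates $s=(B+C)/2$ and $d=(B-C)/2$ (taking $B\geq C$). Using $\cos(B+C)=2\cos^2 s-1$ and $\cos B+\cos C=2\cos s\cos d$, the numerator becomes $4\cos^2 s+1+4\cos s\cos d$, so that with $u:=1/\cos s$ the target inequality $(R_1(I)/R_3(I))^2\geq 10$ is equivalent to $u^2+4u\cos d\geq 6$. The constraints are that $\triangle ABC$ is non-obtuse with $\angle A=\pi-2s$ largest, which translate to $s\in[\pi/4,\pi/3]$ and $0\leq d\leq \min\{\pi-3s,\,s,\,\pi/2-s\}$.

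For fixed $s$ the left-hand side is non-increasing in $d$ (its $d$-derivative is $-4u\sin d\leq 0$ since $u>0$), so its minimum is attained at the largest admissible $d$. A short comparison of the three upper bounds shows that for every $s\in[\pi/4,\pi/3]$ the binding one is $d=\pi-3s$, which is exactly the configuration $\angle A=\angle B$. Substituting this and using $\cos(\pi-3s)=-\cos 3s$ together with $\cos 3s=4\cos^3 s-3\cos s$, the expression simplifies to $\frac{1}{\cos^2 s}-16\cos^2 s+12$. With $t:=\cos^2 s\in[1/4,1/2]$, the inequality $u^2+4u\cos d\geq 6$ reduces to
$$
16t^2-6t-1\leq 0,
$$
whose roots are $t=1/2$ and $t=-1/8$; hence it holds throughout $[1/4,1/2]$, with equality only at $t=1/2$ (that is, $s=\pi/4$), the degenerate thin-isosceles limit. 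This matches the tightness coming from Lemma~\ref{lem: R13 inf lower bound}.

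The main obstacle I anticipate is the bookkeeping in the constraint analysis, namely correctly determining which of the three upper bounds on $d$ is active and confirming that minimizing over $d$ before $s$ is legitimate, rather than the closing algebra, which is a single quadratic. The trigonometric simplification of the ratio is routine but must be carried out carefully to reach the clean form above.
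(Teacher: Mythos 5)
Your proof is correct, and it takes a genuinely different route from the one the paper actually writes out: the paper works in Cartesian coordinates, first showing that $R_1(I)/R_3(I)$ is increasing in $q$ (via a somewhat laborious sign analysis of an auxiliary function $h(p,q)$) to reduce to right triangles, and then showing the resulting one-parameter expression $\sqrt{4\sqrt{1-p}+4\sqrt{p}+6}$ is monotone, so the infimum $\sqrt{10}$ is attained as $p\to 0$. What you do is precisely the ``alternative approach'' the paper only gestures at in a footnote (note that the footnote's displayed formula has a sign typo --- the denominator should be $1-\cos A$, as your equilateral check $(R_1(I)/R_3(I))^2=16$ confirms). Your trigonometric reduction is correct: the ratio squared does equal $4+u^2+4u\cos d$ with $u=1/\cos s$, $s=(B+C)/2$, $d=(B-C)/2$; the feasible region is as you describe; monotonicity in $d$ is immediate; and the comparison $\pi-3s\le\min\{s,\pi/2-s\}$ holds on all of $[\pi/4,\pi/3]$ precisely because $s\ge\pi/4$, so the binding face is $A=B$. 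The final quadratic $16t^2-6t-1\le 0$ on $t\in[1/4,1/2]$ closes the argument, with equality only at the degenerate limit $A=B=\pi/2$, $C\to 0$, consistent with Lemma~\ref{lem: R13 inf lower bound}. The trade-off between the two proofs: the paper's version stays entirely within the coordinate framework it set up for the rest of the section but needs genuinely tedious root-finding to certify sign preservation, whereas yours reduces everything to one clean quadratic at the cost of a careful (but fully checkable) constraint analysis in the $(s,d)$ parametrization. Your version is, if anything, the more transparent of the two.
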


\begin{proof}
\footnote{The provided proof uses algebraic tools of analytic geometry. An alternative approach, using Observation~\ref{obs: A,I coordinates angles}, is to show, using trigonometric manipulations, that 
$\left( R_1(I)/R_3(I)\right)^2 = \frac{4 \cos (B)+4 \cos (C)+2}{\cos (A)+1}+4$. Then, one would need to minimize the latter expression over the domain of non-obtuse triangles $\triangle ABC$ with dominant angle $A$.} 
Consider an arbitrary non-obtuse $\triangle ABC$ in standard analytic form. We assume that $\angle A$ is the largest angle. 
Let $I=(x,y)$ be its incenter, with coordinates given by Corollary~\ref{cor: incenter}. 
Since the incenter is equidistant from all triangle edges, and by Lemma~\ref{lem: r3 regions}, we have  $R_3(I)=y$.
Moreover, by Lemma~\ref{lem: incenter R1 cost}, we have that 
$$R_1(I)= \sqrt{(x-p)^2+(y+q)^2}.$$ In what follows we prove the following claims:  \\
Claim (a): $R_1(I)/R_3(I)$ is increasing in $q$, therefore it is minimized when $\angle A = \pi/2$. \\
Claim (b): The ``thinner'' a right triangle is, the smaller $R_1(I)/R_3(I)$ becomes.

\textit{Proof of Claim (a):} 
Assume that $\angle A\leq \pi/2$ is the largest angle, and without loss of generality assume also that $p\in [0,1/2]$. Note that 
$$
\frac{R_1(I)}{R_3(I)} 
= \frac{\sqrt{(x-p)^2+(y+q)^2}}{y}
=
\sqrt{
(x/y-p/y)^2+(1+q/y)^2
},
$$
where in particular, $x=x(p,q)$ and $y=y(p,q)$. Since $\sqrt{z}$ is an increasing function of $z$, it suffices to show that $(R_1(I)/R_3(I))^2$ is an increasing function of $q$. 
Note that the function we want to prove increasing in $q$ is of the form $f^2(q)+g^2(q)$, where $f(q) = (x-p)/y\geq 0$ and $g(q)=1+q/y\geq 0$. Note that, 
$$
\left( f^2(q)+g^2(q) \right)'
=
2f(q) f'(q) + 2g(q) g'(q) \geq (f(q)+g(q))' \min\{ f(q), g(q)\}.
$$
Therefore, for $(R_1(I)/R_3(I))^2$ to be increasing in $q$, it suffices to prove that $f(q)+g(q) -1 = (x-p)/y+q/y$ is increasing in $q$. To that end, we compute
$$
\frac{\partial}{\partial q} \left( (x-p)/y+q/y \right)
=
\frac{h(p,q) }{\beta  \gamma  q^2}, 
$$
where $h(p,q) := p^3 (-(\beta +\gamma ))+p^2 (\beta +2 \gamma )-\gamma  p+q^3 (\beta +\gamma )$ and $\beta = \norm{AC}=\sqrt{(1-p)^2+q^2}, \gamma=\norm{AB}=\sqrt{p^2+q^2}$. 
Therefore, it further suffices to prove that 
$h(p,q) \geq 0$. 

What we show next is that $h(p,q)$ preserves sign, condition on that $p,q> 0$, and on that $\angle A \leq \pi/2$. 
Indeed, we compute all roots of $h(p,q)$ with respect to $q$. Some tedious calculations show that $h(p,q)$ has two complex roots, and the real roots 0 and 
$$
q_{1,2}=\frac{-p^2\pm\sqrt{-3 p^4+6 p^3-4 p^2+p}+p}{2 p-1},
$$
among which only $q_1 = \frac{-p^2-\sqrt{-3 p^4+6 p^3-4 p^2+p}+p}{2 p-1}$ is non-negative for $p\in (0,1/2]$. At the same time, $\angle A \leq \pi/2$, and hence $(p-1)^2+q^2 \geq 1/4$. However, it is easy to show that 
$$
(p-1)^2+q_1^2 \leq 1/4,
$$
for all $p \in [0,1/2]$, and equality holds only when $p=0$. 
Therefore, continuous function $h(p,q)$ has no real roots in the domain $p,q>0$ and $(p-1)^2+q^2 \geq 1/4$, and hence must preserve sign. The sign is the same as the sign of $h(1/2,1)=\sqrt{5} > 0$, as wanted.

\textit{Proof of Claim (b):} 
Consider arbitrary right triangle $A=(p,q), B=(0,0), C=(1,0)$, with $\angle A = \pi/2$. Point $A$ must lie on a cycle with radius $1/2$ and center $(1/2,0)$, and hence $(p-1/2)^2+q^2=1/2^2$, from which we conclude that $q=\sqrt{p-p^2}$. Using Corollary~\ref{cor: incenter} we obtain that 
$$I=\left(\frac{1}{2} \left(-\sqrt{1-p}+\sqrt{p}+1\right),\frac{\sqrt{(1-p) p}}{\sqrt{1-p}+\sqrt{p}+1}\right).$$

Then, using the discussion above, and after elementary calculations, we see that 
$$
\frac{R_1(I)}{R_3(I)}
=\sqrt{4 \sqrt{1-p}+4 \sqrt{p}+6}.
$$
It is easy to see that $4 \sqrt{1-p}+4 \sqrt{p}+6$ preserves positive sign, and it is increasing $p$. Therefore, its square is increasing, that is $R_1(I)/R_3(I)$ is increasing in $p$. 

To conclude, using claims (a),(b) above, we have
$$
\inf \frac{R_1(I)}{R_3(I)}
\geq \lim_{p\rightarrow 0}
\sqrt{4 \sqrt{1-p}+4 \sqrt{p}+6}.
=
\sqrt{10},
$$
and the proof is finished.
  \end{proof}

\subsection{Searching with 2 vs 3 Robots}
\label{sec: 23 sup}

\subsubsection{Supremum Proof; 2 vs 3 Robots}

In this section we prove the following theorem. 

\begin{theorem}
\label{thm: R23 sup}
$\sup_{\tr \in \trs} \mathcal R_{2,3} (\tr)=2.$
\end{theorem}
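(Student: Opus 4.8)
The plan is to prove matching bounds: a lower bound $\mathcal R_{2,3}(\tr)\ge 2$ realized by the equilateral triangle, and a universal upper bound $R_2(P)\le 2R_3(P)$ holding for every non-obtuse $\tr$ and every $P\in\tr$, so that $\mathcal R_{2,3}(\tr)\le 2$ for all $\tr$.

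For the lower bound I would take $\tr$ equilateral and evaluate at the incenter $I$. Any vertex, say $C$, is a largest-angle vertex with $\angle C=\pi/3$, so Lemma~\ref{lem: R2costIncenter} gives $R_2(I)=\norm{IC}$, while $R_3(I)$ equals the inradius $r$ by Lemma~\ref{lem: r3 regions} (the incenter is equidistant from all edges). For an equilateral triangle $\norm{IC}$ is the circumradius, which equals $2r$; hence $R_2(I)/R_3(I)=2$ and $\mathcal R_{2,3}(\tr)\ge 2$.

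For the upper bound, fix an arbitrary non-obtuse $\triangle ABC$ and let $C$ be a largest-angle vertex, so $\pi/3\le\angle C\le\pi/2$; fix $P\in\tr$. I use the two-robot strategy in which one robot visits the two edges $\{CA,CB\}$ meeting at $C$ and the other visits $AB$, giving $R_2(P)\le\max\{d(P,\{CA,CB\}),\,d(P,AB)\}$. Since $d(P,AB)\le R_3(P)$ by Observation~\ref{obs: cost 1,2,3}, everything reduces to the key claim $d(P,\{CA,CB\})\le 2R_3(P)$. Place $C$ at the origin with the edges $CB,CA$ along the directions $0$ and $\phi:=\angle C$, and write $P$ in polar coordinates $(\rho,\psi)$ with $\psi\in(0,\phi)$; then $d(P,CB)=\rho\sin\psi$ and $d(P,CA)=\rho\sin(\phi-\psi)$, so $R_3(P)\ge\rho\max\{\sin\psi,\sin(\phi-\psi)\}$. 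I split according to whether $P$ lies in the optimal bouncing subcone of $\angle C$, which here is $\psi\in[\pi/2-\phi,\,2\phi-\pi/2]$ (nonempty since $\phi\ge\pi/3$). If $P$ is in the subcone, then by Observation~\ref{obs: Visit2Edges-ii} we have $d(P,\{CA,CB\})=\norm{PC}=\rho$, and since $\max\{\sin\psi,\sin(\phi-\psi)\}\ge\sin(\phi/2)\ge\sin(\pi/6)=1/2$ on the subcone, I get $\rho\le R_3(P)/\sin(\phi/2)\le 2R_3(P)$. If $P$ is outside the subcone, by symmetry assume $\psi<\pi/2-\phi$ (the $CB$-side); by Observation~\ref{obs: Visit2Edges-i} the optimal visitation is the bounce off $CB$ followed by $CA$, whose cost equals the distance from $P$ to the reflection of line $CA$ across $CB$, namely $\rho\sin(\phi+\psi)$. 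Using $d(P,CA)=\rho\sin(\phi-\psi)\le R_3(P)$, the claim follows once I show $\sin(\phi+\psi)\le 2\sin(\phi-\psi)$ on $0\le\psi\le\pi/2-\phi$ with $\phi\ge\pi/3$. The function $f(\psi)=2\sin(\phi-\psi)-\sin(\phi+\psi)$ has $f'(\psi)=-2\cos(\phi-\psi)-\cos(\phi+\psi)\le 0$ there (both cosines are nonnegative, the arguments lying in $[\pi/6,\pi/2]$), so $f$ is minimized at $\psi=\pi/2-\phi$, where $f=-2\cos(2\phi)-1\ge 0$ exactly because $\phi\ge\pi/3$. Thus $d(P,\{CA,CB\})\le 2R_3(P)$ in both cases, which gives $R_2(P)\le 2R_3(P)$ and hence $\mathcal R_{2,3}(\tr)\le 2$.

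The main obstacle is the key claim $d(P,\{CA,CB\})\le 2R_3(P)$; the delicate point is the non-subcone case, where the inequality $\sin(\phi+\psi)\le 2\sin(\phi-\psi)$ becomes tight precisely at $\phi=\pi/3$ and $\psi=\pi/2-\phi$, matching the equilateral extremal configuration. The hypothesis $\angle C\ge\pi/3$ (largest angle) is exactly what makes the bouncing subcone nonempty and the trigonometric inequality hold, while $\angle C\le\pi/2$ (non-obtuseness) guarantees that the perpendicular feet used for the edge distances fall on the edges. Combining the two directions yields $\sup_{\tr\in\trs}\mathcal R_{2,3}(\tr)=2$, attained by the equilateral triangle at its incenter.
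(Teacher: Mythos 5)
Your proposal is correct, and while the lower bound coincides with the paper's (equilateral triangle evaluated at the incenter, where $\norm{IC}$ is twice the inradius), your upper bound takes a genuinely different route. The paper restricts $P$ without loss of generality to the bisector region $\triangle AIL$, sends one robot to the largest-angle vertex, and then splits on whether that vertex is $A$ or one of $B,C$: the first case is discharged by re-using the monotonicity machinery from the $1$-vs-$3$ analysis (Lemmas~\ref{lem: worst is incenter 13-sup} and~\ref{lem: overline13, A large}), the second by the bound $\angle PCB\geq \angle C/2$. You instead work in polar coordinates centred at the largest-angle vertex $C$ and prove the pointwise claim $d(P,\{CA,CB\})\leq 2R_3(P)$ for \emph{every} $P\in\tr$, splitting on the optimal bouncing subcone of $\angle C$; the resulting inequality $\sin(\phi+\psi)\leq 2\sin(\phi-\psi)$ is elementary, and the whole argument is self-contained, independent of Section~\ref{sec: 13 sup}. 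Two remarks. First, your Case~2 is actually superfluous: since moving straight to $C$ visits both incident edges, $d(P,\{CA,CB\})\leq\norm{PC}=\rho$, and your own estimate $\max\{\sin\psi,\sin(\phi-\psi)\}\geq\sin(\phi/2)\geq 1/2$ already yields $\rho\leq 2R_3(P)$ for all $P$, inside or outside the subcone, so the bounce computation can be deleted. Second, if you do keep Case~2, the identity $d(P,\{CA,CB\})=\rho\sin(\phi+\psi)$ tacitly assumes the bounce points land on the segments $CB$ and $CA$ rather than on their extensions; this does hold for non-obtuse triangles (via the projection formula $a=b\cos C+c\cos B$ and its analogue at $A$), and in any case only the inequality $\leq$ is needed there, so correctness is unaffected.
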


The next lemma shows that $\sup_{\tr \in \trs} \mathcal R_{2,3} (\tr)\geq 2$.

\begin{lemma}
\label{lem: R23 sup lower bound}
Let $ABC$ be an equilateral triangle with incenter $I$. Then, we have  
$R_2(I)/R_3(I) = 2$.
\end{lemma}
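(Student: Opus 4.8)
The plan is to compute $R_2(I)$ and $R_3(I)$ separately and then take their ratio. Since the incenter $I$ is equidistant from all three edges, the quantity $R_3(I)=\max\{d(I,AB),d(I,BC),d(I,CA)\}$ of Observation~\ref{obs: cost 1,2,3}(i) is simply the common distance from $I$ to any edge, i.e.\ the inradius $r$ of the equilateral triangle. (This also follows directly from Lemma~\ref{lem: r3 regions}, which identifies $R_3(I)$ with the distance from $I$ to the opposite edge of whichever $R_3$ region contains it.)

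For $R_2(I)$, I would invoke Lemma~\ref{lem: R2costIncenter}. In an equilateral triangle all three angles equal $\pi/3$, so every vertex qualifies as a ``largest angle vertex''; fixing $C$ as such, the lemma gives $R_2(I)=\norm{IC}$. By the three-fold symmetry of the equilateral triangle we have $\norm{IA}=\norm{IB}=\norm{IC}$, so this is unambiguous, and $R_2(I)$ equals the circumradius $R$, the common distance from the center to a vertex.

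The last step is to evaluate the ratio $R/r$ and confirm it is $2$. The cleanest route is to place the triangle in standard analytic form with $B=(0,0)$, $C=(1,0)$, $A=(1/2,\sqrt3/2)$, so that by Observation~\ref{obs: A,I coordinates angles} (or simply because $I$ coincides with the centroid) we obtain $I=(1/2,\sqrt3/6)$. Then $R_3(I)=q_I=\sqrt3/6$, while $\norm{IC}=\sqrt{(1/2)^2+(\sqrt3/6)^2}=\sqrt{1/4+1/12}=1/\sqrt3$, whence $R_2(I)/R_3(I)=(1/\sqrt3)/(\sqrt3/6)=2$. Equivalently, one may just quote the classical identity $R=2r$ for the equilateral triangle, which follows from the centroid dividing each median in ratio $2{:}1$ while the median coincides with the altitude through a vertex.

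There is essentially no obstacle here: the argument is a short chain of reductions to results already established. The only point deserving a word of justification is the appeal to Lemma~\ref{lem: R2costIncenter} in the degenerate situation where the ``largest angle'' is not unique, and the symmetry observation $\norm{IA}=\norm{IB}=\norm{IC}$ disposes of that cleanly. The remainder is the one-line coordinate computation above.
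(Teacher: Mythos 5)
Your proposal is correct and follows essentially the same route as the paper: both obtain $R_3(I)$ as the inradius via Lemma~\ref{lem: r3 regions}, obtain $R_2(I)$ as the distance from $I$ to a vertex via Lemma~\ref{lem: R2costIncenter}, and conclude with the classical $2{:}1$ ratio (the paper splits the altitude into apothem and remainder, while you verify it by a coordinate computation — a purely cosmetic difference). Your explicit remark on the non-uniqueness of the largest angle in the equilateral case is a small but welcome addition the paper leaves implicit.
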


\begin{proof}
The incenter $I$ is equidistant from all edges $AB,BC,CA$, and hence by Lemma~\ref{lem: r3 regions}, we have $R_3(I)=d(I,BC)$.
Also, by Lemma~\ref{lem: R2costIncenter}, we have $R_2(I)=\norm{IA}$,
and therefore
$$
\frac{R_2(I)}{R_3(I)} = 
\frac{\norm{IA}}{d(I,BC)}.
$$
Now recall that $\triangle ABC$ is equilateral, therefore each of the bisectors coincide with the altitudes. Moreover, $I$ is the center of the regular triangle, and hence its apothem (with length $d(I,BC)$) is 1/3 of the altitude. The main claim follows by noting that $\norm{IA}$ makes up the remaining $2/3$ of the altitude. 
  \end{proof}

The remaining of the section is devoted in proving that $\sup_{\tr \in \trs} \mathcal R_{2,3} (\tr)\leq 2$.
In that direction, we consider a triangle $\tr=ABC$ along with its incenter $I$, see also Figure~\ref{fig: R3 regions}. Without loss of generality, we also assume that the starting point $P$ lies within the $\triangle AIL$, .

In order to provide the promised upper bound, we propose a heuristic upper bound for $R_2(P)$. The two robots visit all edges as follows; one robot goes to the vertex corresponding to the largest angle (visiting the two incident edges), and the second robot visits the remaining edge moving along the projection of $P$ along that edge. 
Note that the largest angle is at least $\pi/3$.

\begin{lemma}
\label{lem: overline23, A large}
If the largest angle is $\angle A$, then 
$R_2(P)/R_3(P) \leq 2$. 
\end{lemma}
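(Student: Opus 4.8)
The plan is to reduce the statement to the single quantity $\norm{PA}/d(P,BC)$, which has already been bounded in the $1$ versus $3$ analysis. First I would record the two-robot heuristic suggested immediately before the lemma: one robot travels straight to the dominant vertex $A$, thereby touching both incident edges $AB$ and $AC$ at cost $\norm{PA}$, while the second robot reaches $BC$ along the projection of $P$ at cost $d(P,BC)$. The makespan of this feasible strategy is the larger of the two travel times, so
$$
R_2(P) \leq \max\{\norm{PA},\, d(P,BC)\}.
$$

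Next I would pin down $R_3(P)$. Since $P$ lies in $\triangle AIL \subseteq AMIL$ and $\angle A$ is the largest angle, Lemma~\ref{lem: r3 regions} gives $R_3(P)=d(P,BC)$, i.e.\ the edge opposite the dominant vertex is the farthest from $P$. Dividing the two relations yields
$$
\frac{R_2(P)}{R_3(P)} \leq \frac{\max\{\norm{PA},\, d(P,BC)\}}{d(P,BC)} = \max\left\{\frac{\norm{PA}}{d(P,BC)},\,1\right\},
$$
so the entire claim collapses to proving $\norm{PA}/d(P,BC) \leq 2$ on $\triangle AIL$ (the trivial bound $1\leq 2$ handling the other branch of the maximum).

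For the final step I would invoke the work already carried out for the heuristic $T_A$. Observe that $T_A(P)/R_3(P) = 2 + \norm{PA}/d(P,BC)$, so $\norm{PA}/d(P,BC) = T_A(P)/R_3(P) - 2$. Because $\angle A$ is the largest angle we have $\angle A \geq \pi/3$, and Lemma~\ref{lem: overline13, A large} (together with the extremal reduction of Lemma~\ref{lem: worst is incenter 13-sup}) establishes $T_A(P)/R_3(P)\leq 4$ — equivalently $\norm{PA}/d(P,BC)\leq 2$ — for every $P\in AMIL$ with $\angle A \geq \pi/3$. Since $\triangle AIL \subseteq AMIL$, this bound holds on our region, and combining the three displays gives $R_2(P)/R_3(P)\leq 2$, as required.

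There is no serious obstacle here; the content lies in recognizing that the two-robot heuristic's makespan, once divided by $R_3(P)$, equals exactly the $1$-versus-$3$ quantity $\norm{PA}/d(P,BC)$ (capped at $1$), so the earlier extremal analysis transfers verbatim. The only genuine analytic difficulty — the trigonometric computation of $\norm{IA}/d(I,BC)=1/\cos((B+C)/2)\leq 2$ at the incenter and the $\sin(A)\geq 1/2$ estimate at $L$ — was already discharged inside Lemma~\ref{lem: overline13, A large}, and need not be repeated.
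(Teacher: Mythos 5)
Your proposal is correct and follows essentially the same route as the paper: the same two-robot heuristic giving $R_2(P)\leq\max\{\norm{PA},d(P,BC)\}$, the identification $R_3(P)=d(P,BC)$, and the reduction to $\norm{PA}/d(P,BC)\leq 2$, which the paper likewise discharges by appealing to the extremal analysis of Lemmas~\ref{lem: worst is incenter 13-sup} and~\ref{lem: overline13, A large}. Your observation that this bound is literally the statement $T_A(P)/R_3(P)\leq 4$ rather than something buried in its proof is a slightly cleaner citation, but the argument is the same.
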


\begin{proof}
Consider an arbitrary point $P$ in $\triangle AIL$. 
Due to the heuristic strategy of $R_2(P)$, one robot goes to $A$ in time $\norm{PA}$, and the other robot goes to edge $BC$ in time $d(P,BC)$. So overall, we have
$$
R_2(P) \leq \max\{\norm{PA},d(P,BC)\}.
$$
Since $R_3(P)=d(P,BC)$, it follows that if $\norm{PA}<d(P,BC)$, then $R_2(P)/R_3(P) =1$. 
On the other hand, if $\norm{PA}\geq d(P,BC)$, then we have
\begin{equation}
\label{equa: bound R23 large A}
\frac{R_2(P)}{R_3(P)}
\leq 
\frac{\norm{PA}}{d(P,BC)}.
\end{equation}
Recall that $P$ lies in $\triangle AIL$. By the proof of Lemma~\ref{lem: worst is incenter 13-sup}, ratio~\eqref{equa: bound R23 large A} is maximized either at the incenter, or at point $L$. Then, by the proof of Lemma~\ref{lem: overline13, A large}, and since $\angle A\geq \pi/3$ the same ratio is at most $2$. 
  \end{proof}

\begin{lemma}
\label{lem: overline23, A small}
If the largest angle is either $\angle B$ or $\angle C$, then 
$R_2(P)/R_3(P) \leq 2$.
\end{lemma}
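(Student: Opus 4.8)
The plan is to mirror the structure of the proof of Lemma~\ref{lem: overline23, A large}, except that the robot serving two edges is now sent to $B$ or $C$ rather than to $A$. First I would record that, since $P\in \triangle AIL\subseteq AMIL$, Lemma~\ref{lem: r3 regions} gives $R_3(P)=d(P,BC)$. Writing $V\in\{B,C\}$ for the vertex of the largest angle $\theta:=\angle V\in[\pi/3,\pi/2]$, the prescribed heuristic sends one robot straight to $V$ (touching both edges incident to $V$) and the other along the projection onto the remaining edge $e$, which is $AB$ when $V=C$ and $CA$ when $V=B$. Because $e$ is not the farthest edge, $d(P,e)\leq d(P,BC)=R_3(P)$, so $R_2(P)\leq \max\{\norm{PV},d(P,e)\}$ yields $R_2(P)/R_3(P)\leq \max\{1,\ \norm{PV}/d(P,BC)\}$. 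Thus it suffices to prove $\norm{PV}/d(P,BC)\leq 2$ for every $P\in\triangle AIL$.

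The crux is to show that this ratio is maximized at a vertex of $\triangle AIL$, exactly as $\norm{PA}/d(P,BC)$ was localized to $\{I,L\}$ in Lemma~\ref{lem: worst is incenter 13-sup}. Here the localization is cleaner: since $V$ lies on line $BC$ (the $x$-axis in standard analytic form) and $d(P,BC)$ equals the ordinate $y$ of $P$ for points in the triangle, the sublevel set $\{\norm{PV}\leq \rho\, y\}$ is, for $\rho\geq 1$ and $y>0$, the convex wedge with apex $V$ described by $|x-v|\leq \sqrt{\rho^2-1}\,y$ (where $v$ is the abscissa of $V$). Hence $P\mapsto \norm{PV}/d(P,BC)$ is quasi-convex on the open half-plane containing $\triangle AIL$, so its maximum over the convex region $\triangle AIL$ is attained at one of $A,I,L$.

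It then remains to evaluate the three vertices, using $\theta\in[\pi/3,\pi/2]$. At the incenter $I$, the segment $VI$ runs along the bisector of $\angle V$, one side of which lies along $BC$, so $d(I,BC)=\norm{IV}\sin(\theta/2)$ and the ratio is $1/\sin(\theta/2)\leq 1/\sin(\pi/6)=2$. At $A$, dropping the altitude to $BC$ gives $d(A,BC)=\norm{AV}\sin\theta$ (the foot lies inside $BC$ by non-obtuseness, and the angle of $\tr$ at $V$ is $\theta$), so the ratio is $1/\sin\theta\leq 2/\sqrt{3}<2$. Finally at $L$: if $V=C$ then ray $CL$ coincides with ray $CA$ and $d(L,BC)=\norm{LC}\sin\theta$, giving $1/\sin\theta$; if $V=B$ then $BL$ is the bisector of $\angle B$ and $d(L,BC)=\norm{LB}\sin(\theta/2)$, giving $1/\sin(\theta/2)\leq 2$. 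In every case the ratio is at most $2$, which closes the argument.

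The main obstacle is the quasi-convexity / vertex-reduction step: one must guarantee that $\norm{PV}/d(P,BC)$ cannot exceed its vertex values in the interior of $\triangle AIL$. I expect this to be the only delicate point, and the convex-wedge description of the sublevel sets above (valid precisely because the target vertex $V$ sits on the supporting line of the farthest edge) is what makes it go through; everything else reduces to the two elementary identities $d(\cdot,BC)=\norm{\cdot\,V}\sin(\theta/2)$ along a bisector and $d(\cdot,BC)=\norm{\cdot\,V}\sin\theta$ along a side, together with $\theta\geq\pi/3$.
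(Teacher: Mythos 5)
Your proposal is correct, and its skeleton (the two-robot heuristic sending one robot to the vertex $V$ of the largest angle, the observation that the remaining edge costs at most $d(P,BC)=R_3(P)$, and the reduction to $\norm{PV}/d(P,BC)\leq 2$) coincides with the paper's. Where you diverge is in how that last inequality is established: you prove quasi-convexity of $P\mapsto \norm{PV}/d(P,BC)$ via the convex-wedge sublevel sets, reduce to the vertices $A,I,L$, and evaluate each one. The paper instead gets the bound pointwise in one line: for every $P$ in the region, $\norm{PV}/d(P,BC)=1/\sin(\angle PVB')$ where $\angle PVB'$ is the angle between $PV$ and $BC$, and this angle is at least $\angle V/2\geq \pi/6$ (and at most $\pi/2$) because the whole region $AMIL$ lies on the $A$-side of the bisector of $\angle V$ --- exactly the identity you use at the single vertex $I$, but applied to every $P$ at once. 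So your vertex-reduction step, while valid (and a nice reusable localization device in the spirit of Lemma~\ref{lem: worst is incenter 13-sup}), is not needed here; conversely, the paper's argument has the minor advantage of covering all of $AMIL$ without invoking the further symmetry reduction to $\triangle AIL$, whereas your vertex list would have to be augmented by $M$ if one worked on the full tetragon. Since you treat both $V=B$ and $V=C$, the restriction to $\triangle AIL$ is legitimate and your case analysis is complete.
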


\begin{proof}
We provide the proof of the case that $\angle C$ is the largest angle, and hence at least $\pi/3$ (the other case is identical). 
For every point $P$ in $\triangle AIL$, we have that $R_3(P)=d(P,BC)$ (see Section~\ref{sec: regions 3}). Due to the heuristic we are using, the claim follows once we show that 
$
\frac{ \max\{\norm{PC},d(P,AB)\} }{d(P,BC)} \leq 2.
$
Note that $d(P,AB) \leq d(P,BC)$, and so, if $$\max\{\norm{PC},d(P,AB)\} = d(P,AB),$$ it follows that $R_2(P)/R_3(P) \leq 1$. 

It remains to examine the case $R_2(P) = \norm{PC}$. To that end, note that $\angle PCB \geq \angle C/2 \geq \pi/6$. Since moreover the $\sin$ function is increasing in $[0,\pi/2]$, we have
$$
\frac{\norm{PC}}{d(P,BC)}
= \frac1{
\sin\left( 
\angle PCB
\right)
}
\leq \frac1{\sin(\pi/6)}
= 2
$$
as wanted. 
  \end{proof}

\subsubsection{Infimum Proof; 2 vs 3 Robots}

In this section we prove the following theorem. 

\begin{theorem}
\label{thm: R23 inf}
$\inf_{\tr \in \trs} \mathcal R_{2,3} (\tr)=\sqrt2.$
\end{theorem}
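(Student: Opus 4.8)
The plan is to prove the two matching bounds $\inf_{\tr\in\trs}\mathcal R_{2,3}(\tr)\ge\sqrt2$ and $\inf_{\tr\in\trs}\mathcal R_{2,3}(\tr)\le\sqrt2$ separately, mirroring the structure of the $\mathcal R_{1,3}$ infimum proof. For the first bound I would show that \emph{every} non-obtuse triangle already has a single starting point whose ratio is at least $\sqrt2$; the incenter is the natural candidate, exactly as it was the worst point in the supremum proof. For the second bound I would exhibit one concrete triangle — the right isosceles one — whose worst-case ratio equals $\sqrt2$, so that the infimum is in fact attained.

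For $\inf\ge\sqrt2$, fix $\tr=\triangle ABC$ with incenter $I$ and let $\angle C$ be its largest angle, so that $\pi/3\le\angle C\le\pi/2$. Since $I$ is equidistant from the three edges, Lemma~\ref{lem: r3 regions} gives $R_3(I)=r$, the inradius. Because $\angle C\ge\pi/3$, Lemma~\ref{lem: R2costIncenter} gives $R_2(I)=\norm{IC}$, and since $I$ lies on the bisector of $\angle C$ at perpendicular distance $r$ from each incident edge, $\norm{IC}=r/\sin(\angle C/2)$. Hence
$$
\frac{R_2(I)}{R_3(I)}=\frac1{\sin(\angle C/2)}\ge\frac1{\sin(\pi/4)}=\sqrt2,
$$
so $\mathcal R_{2,3}(\tr)\ge\sqrt2$ for every $\tr\in\trs$, and therefore $\inf_{\tr\in\trs}\mathcal R_{2,3}(\tr)\ge\sqrt2$.

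For $\inf\le\sqrt2$, take $\tr$ to be right isosceles with the right angle at $A$, so $AB\perp AC$. Fix an arbitrary $P\in\tr$ and write $\ell_1=d(P,AB)$, $\ell_2=d(P,AC)$ and $h=d(P,BC)$. A feasible two-robot solution sends one robot straight to $A$ — thereby touching both legs $AB$ and $AC$, since $A$ is their common endpoint — and the other robot to its foot on $BC$; this gives $R_2(P)\le\max\{\norm{PA},h\}$. (In fact $\angle A=\pi/2$ has optimal bouncing subcone equal to the whole angle $A$, since $3\angle A-\pi=\angle A$, so by Observation~\ref{obs: Visit2Edges-ii} going to $A$ is the optimal way to visit the two legs.) Because $AB\perp AC$ and $A=AB\cap AC$, we have $\norm{PA}=\sqrt{\ell_1^2+\ell_2^2}\le\sqrt2\,\max\{\ell_1,\ell_2\}$, while $R_3(P)=\max\{\ell_1,\ell_2,h\}\ge\max\{\ell_1,\ell_2\}$ and also $R_3(P)\ge h$. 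Consequently $R_2(P)\le\max\{\norm{PA},h\}\le\sqrt2\,R_3(P)$ for every $P$, i.e.\ $\mathcal R_{2,3}(\tr)\le\sqrt2$. Combined with the incenter computation above (which yields exactly $\sqrt2$ when $\angle C=\pi/2$), this right isosceles triangle satisfies $\mathcal R_{2,3}(\tr)=\sqrt2$, giving $\inf_{\tr\in\trs}\mathcal R_{2,3}(\tr)\le\sqrt2$.

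\textbf{Main obstacle.} The delicate part is the uniform upper bound over \emph{all} starting points of the right isosceles triangle: a direct approach would seem to require its full $R_2$-region decomposition (Lemma~\ref{lem: r2 rightisosceles within}), with its several parabolic sub-regions and separate visitation formulas. The key simplification that avoids this is that the single heuristic ``one robot to the right-angle vertex, one to the hypotenuse'' bounds $R_2(P)$ everywhere, and that the elementary inequality $\sqrt{\ell_1^2+\ell_2^2}\le\sqrt2\max\{\ell_1,\ell_2\}$ converts it into $\sqrt2\,R_3(P)$ with no case analysis on which edge is farthest. The only remaining point to verify is tightness — that the bound $\sqrt2$ is actually achieved (at the incenter, and indeed along the portion of the $\angle A$-bisector running from the incenter outward) — which confirms that $\sqrt2$ is the true value of the infimum rather than merely an upper estimate.
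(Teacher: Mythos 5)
Your proof of the direction $\inf_{\tr\in\trs}\mathcal R_{2,3}(\tr)\ge\sqrt2$ is essentially identical to the paper's (Lemma~\ref{lem: R23 inf upper bound}): evaluate at the incenter, use Lemma~\ref{lem: r3 regions} and Lemma~\ref{lem: R2costIncenter}, and compute $\norm{IC}/d(I,BC)=1/\sin(C/2)\ge\sqrt2$. For the other direction, however, you take a genuinely different and considerably more elementary route. The paper (Lemma~\ref{lem: R23 inf lower bound}) establishes $\max_P R_2(P)/R_3(P)\le\sqrt2$ on the right isosceles triangle by invoking the full $R_2$ region decomposition of Corollary~\ref{cor: r2 hexagon right isosceles} and Lemma~\ref{lem: r2 rightisosceles within}, then running a four-case analysis involving separating parabolas, gradient computations, and small nonlinear programs. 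You replace all of that with the single feasible two-robot strategy ``one robot to the right-angle vertex $A$ (touching both legs at their common endpoint), one robot to its foot on $BC$,'' giving $R_2(P)\le\max\{\norm{PA},d(P,BC)\}$, and then the identity $\norm{PA}=\sqrt{d(P,AB)^2+d(P,AC)^2}$ (valid because the legs are perpendicular at $A$ and the projections of an interior point onto the legs land inside them) together with $\sqrt{\ell_1^2+\ell_2^2}\le\sqrt2\max\{\ell_1,\ell_2\}$ yields $R_2(P)\le\sqrt2\,R_3(P)$ uniformly in $P$ with no case analysis. This is correct and complete for the infimum statement: an upper bound on $\mathcal R_{2,3}$ of the right isosceles triangle is all that is needed, and tightness follows from your incenter computation applied with the largest angle $\pi/2$ (note your labeling puts the largest angle at $C$ in the first half and at $A$ in the second; this is harmless but worth harmonizing). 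What the paper's heavier argument buys in exchange is the exact description of the optimal $R_2$ strategies and worst-case starting points on the right isosceles triangle, which is reused elsewhere; your argument buys brevity and robustness, since it never needs to know what the optimal two-robot strategy actually is.
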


The next lemma shows that $\inf_{\tr \in \trs} \mathcal R_{2,3} (\tr)\leq \sqrt2$

\begin{lemma}
\label{lem: R23 inf lower bound}
Let $\triangle ABC$ be a right isosceles with $\angle A = \pi/2$. Then, we have  
$$\max_{P \in ABC} R_2(P)/R_3(P) = \sqrt2.$$
\end{lemma}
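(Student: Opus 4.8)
The plan is to prove the two inequalities $\max_{P\in ABC} R_2(P)/R_3(P)\ge \sqrt2$ and $\le \sqrt2$ separately, working in standard analytic form where $\angle A=\pi/2$, $\angle B=\angle C=\pi/4$ force $B=(0,0)$, $C=(1,0)$, $A=(1/2,1/2)$ via Observation~\ref{obs: A,I coordinates angles}. The lower bound is witnessed by the incenter $I$. Since $\angle A$ is the largest angle, Lemma~\ref{lem: R2costIncenter} (with the roles relabeled so the dominant vertex is $A$) gives $R_2(I)=\norm{IA}$, while Lemma~\ref{lem: r3 regions} gives $R_3(I)=d(I,BC)=q_I$. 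By symmetry $I=(1/2,q_I)$ with $q_I$ equal to the inradius $(\sqrt2-1)/2$, so a one-line computation yields $R_2(I)/R_3(I)=(1/2-q_I)/q_I=\sqrt2$, which I expect to be the worst starting point.

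For the upper bound I would establish the uniform estimate $R_2(P)\le \sqrt2\,R_3(P)$ for every $P\in ABC$. The engine is a single always-feasible two-robot strategy: send one robot straight to $A$ and the other to the foot of the perpendicular from $P$ onto $BC$. The first robot visits both $AB$ and $AC$ because, for $\angle A=\pi/2$, the optimal bouncing subcone of $\angle A$ has angle $3\angle A-\pi=\pi/2$ and hence coincides with the whole of $\angle A$; since the triangle lies inside that angle, Observation~\ref{obs: Visit2Edges-ii} gives $d(P,\{AB,AC\})=\norm{PA}$ for \emph{all} $P$. This yields $R_2(P)\le\max\{\norm{PA},d(P,BC)\}$. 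It then suffices to show $\norm{PA}\le \sqrt2\,R_3(P)$, since $d(P,BC)\le R_3(P)$ is immediate. Writing $\theta=\angle PAB\in[0,\pi/2]$, the distances of $P$ to the \emph{lines} $AB$ and $AC$ equal $\norm{PA}\sin\theta$ and $\norm{PA}\cos\theta$; as distance to a segment dominates distance to its supporting line, $R_3(P)\ge\max\{d(P,AB),d(P,AC)\}\ge \norm{PA}\max\{\sin\theta,\cos\theta\}\ge \norm{PA}/\sqrt2$, which is exactly the required bound.

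Combining the two directions gives $R_2(P)/R_3(P)\le\sqrt2$ everywhere with equality at $I$, and continuity of the ratio on the compact triangle (with $R_3>0$ throughout, since no point lies on all three edges) guarantees the maximum is attained and equals $\sqrt2$. I do not anticipate a genuine obstacle: the special value $\angle A=\pi/2$ collapses the bouncing subcone onto the full angle and makes the universal strategy clean, and the one delicate point — that $\norm{PA}\le\sqrt2\,R_3(P)$ is derived via distance-to-line rather than distance-to-segment — is legitimate precisely because the latter is never smaller. A more laborious alternative would be to run the explicit region analysis of Lemma~\ref{lem: r2 rightisosceles within} over the refined $R_2$ mixed-hexagon separator and bound the ratio piecewise, but the uniform argument above sidesteps that casework entirely.
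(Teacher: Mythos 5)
Your proposal is correct, and it takes a genuinely different and substantially shorter route than the paper. The paper proves the upper bound by invoking the full $R_2$ region structure of the right isosceles triangle (Corollary~\ref{cor: r2 hexagon right isosceles} and Lemma~\ref{lem: r2 rightisosceles within}), restricting by symmetry to the region $MFAQ$, and then running a four-case analysis over the subregions $MIQ$, $MTUI$, $MFT$, $FKUT$, in each case writing $R_2(P)$ and $R_3(P)$ in coordinates and optimizing the ratio via gradient/nonlinear-programming arguments. You bypass all of that with two clean observations: (i) since $3\angle A-\pi=\angle A$ when $\angle A=\pi/2$, the optimal bouncing subcone is the entire angle, so Observation~\ref{obs: Visit2Edges-ii} gives $d(P,\{AB,AC\})=\norm{PA}$ for every $P$, whence $R_2(P)\leq\max\{\norm{PA},d(P,BC)\}$ by Observation~\ref{obs: cost 1,2,3}(ii); and (ii) $R_3(P)\geq\max\{d(P,AB),d(P,AC)\}\geq\norm{PA}\max\{\sin\theta,\cos\theta\}\geq\norm{PA}/\sqrt2$, where the passage from segment distance to line distance is in the safe direction (and is in fact an equality here, since both base angles are acute). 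Your lower-bound computation at the incenter matches the paper's. What the paper's longer argument buys is the identification of the exact optimal $R_2$ strategies and worst-case loci in each subregion, which fits its broader region-decomposition framework; what your argument buys is brevity and independence from the delicate separator geometry. The proof is sound as written.
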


\begin{proof}
Consider right isosceles $\triangle ABC$ with $\angle A = \pi/2$.
For such a triangle, The $R_3$ regions are summarized in Lemma~\ref{lem: r3 regions}. Also, all $R_2$ regions are summarized in Corollary~\ref{cor: r2 hexagon right isosceles} and Lemma~\ref{lem: r2 rightisosceles within}, see also Figure~\ref{fig: R2 regions right isosceles}. 

We conclude that the $R_2$ regions of right isosceles $ABC$ with $\angle A=\pi/2$ are determined by the refined $R_2$ separator $MFKJLQ$ and the incenter $I$.
More specifically, for every starting point $P$ outside $MFKJLQ$, cost $R_2(P)$ is determined by the cost of visiting the more distant edge (and hence it equals $R_3(P)$). 
Hence, $$\argmax_{P \in ABC} R_2(P)/R_3(P)$$ lies within $MFKJLQ$, and by symmetry, we may further assume that $P$ lies in $MFAQ$. 

We consider the analytic representation of right isosceles $ABC$, that is we set $A=(1/2,1/2), B=(0,0), C=(1,0)$. Using Corollary~\ref{cor: incenter}, the incenter is $I=(1/2,1/\sqrt2-1/2)$. By Lemma~\ref{lem: R2costIncenter}, we have that  $R_2(I)=\norm{IA}=1-1/\sqrt2$. Recalling also that $R_3(I)=d(I,BC)=1/\sqrt2-1/2$, it follows that $R_2(I)/R_3(I)=\sqrt2$, Below, we show that 
$$\argmax_{P \in MFAQ} R_2(P)/R_3(P) \leq \sqrt2.$$
In what follows, we commonly denote $P=(a,b)$, for any placement of $P$.  
Combined with the partition of $ABC$ that determines all costs $R_3(P)$, as per Section~\ref{sec: regions 3}, we are motivated to consider the following four cases. 

\begin{description}
\item[Case 1, $P\in MIQ$:]
We have that $R_2(P)=\norm{PA}$ and $R_3(P)=d(P,BC)$. 
Clearly, $R_2(P)$ is maximized either at $P=M$ or at $P=I$ (one of which may be a local maximizer). Note that 
$\angle AMC 
= \pi - \angle A - \angle MCA 
= \pi - \angle A - \angle C/2
= \pi - \pi/2 - \pi/8
=3\pi/8$.
Since $\angle MAI = \pi/4$, it follows that $\triangle AMI$ is isosceles, and therefore $\norm{MA}=\norm{IM}$. We conclude that $R_2(P)$ is maximized at $P=I$. 
At the same time, $R_3(P)$ is clearly minimized at $P=I$.
Hence, $\argmax_{P \in MFAQ} R_2(P)/R_3(P)=I$.

\item[Case 2, $P\in MTUI$:]
We have that 
$$R_2(P)=\norm{PA}=\sqrt{\left(a-\frac{1}{2}\right)^2+\left(b-\frac{1}{2}\right)^2}.$$ 
Also, the line passing through $A,C$ has equation$x+y-1=0$, hence
$R_3(P)=d(P,AC)= \mid a+b-1\mid/\sqrt2$. 
Let $f(a,b)=\left( R_2(P)/R_3(P) \right)^2$, and note that 
$$\nabla f(a,b) = 
2 \frac{a-b}{(a+b-1)^3}
\left(
\begin{array}{c}
2 b-1 \\
 -2 a+1\\
\end{array}
\right).
$$
Observe that $a< b$, and that $a+b-1<1$, hence directions $e_1, -e_2$ are both increasing. It follows that when $P\in MTUI$, ratio $R_2(P)/R_3(P)$ is maximized at $P=U$. At the same time, point $U$ lies on line $x=1/2$, and easy calculations show that $f(1/2,b)=2$ (that is, the function is constant), concluding this case as well.

\item[Case 3, $P\in MFT$:]
As before, we have $R_3(P)=d(P,AC)= \mid a+b-1\mid/\sqrt2$. 
The optimal strategy for $R_2(P)$ is of LD-type. Let $P'$ be the reflection of $P$ around $AB$, that is $P'=(b,a)$. It follows that $R_2(P)=d(P',BC)=a$, and therefore
$$(R_2(P)/R_3(P) )^2 = 2a^2/(a+b-1)^2.$$
It is easy to see that the last ratio is at most 2, exactly when $(b-1)(2a+b-1)\geq 0$. 
Note also that $b\leq 1/2<1$, hence it suffices to prove that $2a+b-1\leq 0$ for all $P \in MFT$. 

In region $MFT$, curve $MT$ is part of a parabola with directrix $x=0$ and focus $A=(1/2,1/2)$, hence it has equation $(y-1/2)^2=x-1/4$. It follows that point $P=(a,b)$ satisfies $(b-1/2)^2\geq a-1/4$. Since also $P$ is on or below line segment $AB$ (with line equation $y=x$), it follows $a\geq b$, whereas we also have $a\geq 0$. So we consider the following non-linear program
$$
\max\{ 2a+b-1: b\leq a \leq (b-1/2)^2+1/4, a\geq 0\},
$$
and we show that its value is bounded above by 0. 
Note the the objective is linear, so the gradient is never the zero-vector. Therefore any optimizers are attained by making some of the inequality constraints tight. 

If $a\leq 0$ becomes tight, then $2a+b-1=b-1\leq -1/2$. 
If $b\leq a$ becomes tight, then $P=M$, and that point was already considered in cases 1,2. 
It remains to examine the case that constraint $a \leq (b-1/2)^2+1/4$ is tight, that is when $P$ lies in the curve segment $MT$ which is contained in region $MTUI$ already considered in case 2. 

\item[Case 4, $P\in FKUT$:]
Again, we have $$R_3(P)=d(P,AC)= \mid a+b-1\mid/\sqrt2.$$ 
The optimal strategy for $R_2(P)$ is now of DL-type. Let $P''$ be the reflection of $P$ around $BC$, that is $P'=(a,-b)$. It follows that 
$$R_2(P)=d(P'',AC)=\mid a+b\mid/\sqrt2.$$
Taking into consideration that $a,b\geq0$, and that $a+b\leq 1$ (for any point $P=(a,b)$, within $ABC$), we have that 
$$g(a,b):=(R_2(P)/R_3(P) )^2 = (a+b)/(1-a-b),$$
which we show next is at most $\sqrt2$. 

Curve $TU$ of region $FKUT$ is part of a parabola with directrix the reflection of $AB$ around $BC$ (that is line $y=-x$), and focus $A=(1/2,1/2)$. Therefore the equation of the parabola is $(x-1/2)^2+(y-1/2)^2=(x+y)^2/2$. We conclude that points $P \in FKUT$ satisfy 
$$(a-1/2)^2+(b-1/2)^2\geq (a+b)^2/2.$$
At the same time we have that $x\leq 1/2$, so it suffices to prove that optimal value to non-linear program 
$$
\max\{ g(a,b): (a-1/2)^2+(b-1/2)^2\geq (a+b)^2/2, a\leq 1/2\}
$$
is at most $\sqrt2$. 

To that end, we observe that $\partial g(a,b)/\partial a = 1/(1-a-b)^2\geq 0$, that is $e_1$ is an increasing direction. We conclude that optimizers of the previous optimization problem within $FKUT$ happen either at line segment $UK$, or at curve segment $TU$. 

As curve segment $TU$ is contained within region $MTUI$ (already considered in case 2), it remains to examine the case that $P$ lies in line segment $UK$, that is $a=1/2$. 
But then, 
$$
g(1/2,b) =1/(1/2-y) -1 
,$$
which is further maximized when $y$ attains its maximum value, i.e. when $P$ coincides with point $U$, and that point was also considered in case 2. 
\end{description}
\end{proof}

The next lemma shows that $\inf_{\tr \in \trs} \mathcal R_{2,3} (\tr)\geq \sqrt2$.

\begin{lemma}
\label{lem: R23 inf upper bound}
For any triangle $\tr \in \trs$, let $I$ denote its incenter. 
Then, we have $R_2(I)/R_3(I) \geq \sqrt2$. 
\end{lemma}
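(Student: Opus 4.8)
The plan is to collapse the entire ratio into a single trigonometric quantity depending only on the largest angle, and then bound that quantity using the non-obtuseness constraint. First I would observe that the incenter $I$ is equidistant from all three edges, its common distance being the inradius $r$; hence by Lemma~\ref{lem: r3 regions} we have $R_3(I) = r$. Next, letting $C$ denote the vertex of the largest angle, Lemma~\ref{lem: R2costIncenter} gives $R_2(I) = \norm{IC}$. Combining these, $R_2(I)/R_3(I) = \norm{IC}/r$, so it remains to understand $\norm{IC}$ in terms of $r$ and $\angle C$.

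The key (elementary) identity is $\norm{IC} = r/\sin(C/2)$. To see this, note that $CI$ lies along the bisector of $\angle C$, and the foot of the perpendicular from $I$ to either edge incident to $C$ is at distance $r$ from $I$; the resulting right triangle, whose angle at $C$ is $\angle C/2$, yields $r = \norm{IC}\sin(C/2)$. Therefore $R_2(I)/R_3(I) = 1/\sin(C/2)$, reducing the whole lemma to a one-variable trigonometric estimate.

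Finally I would invoke the angle constraints. Since $\angle C$ is the largest angle of a triangle it satisfies $\angle C \geq \pi/3$, and since $\tr$ is non-obtuse it satisfies $\angle C \leq \pi/2$. Thus $C/2 \in [\pi/6,\pi/4]$, an interval on which $\sin$ is increasing, so $\sin(C/2) \leq \sin(\pi/4) = 1/\sqrt{2}$, giving $R_2(I)/R_3(I) = 1/\sin(C/2) \geq \sqrt{2}$, as claimed. As a consistency check, equality holds exactly at $\angle C = \pi/2$ (matching the right isosceles triangle of the lower bound Lemma~\ref{lem: R23 inf lower bound}), while the other extreme $\angle C = \pi/3$ yields $1/\sin(\pi/6) = 2$, matching the equilateral case of Lemma~\ref{lem: R23 sup lower bound}.

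I do not expect a genuine obstacle here: the substantive work has already been done in Lemma~\ref{lem: R2costIncenter}, and the remaining argument is short. The only step requiring minor care is the identity $\norm{IC} = r/\sin(C/2)$, which I would justify directly via the right triangle at the foot of the inradius rather than quoting it, and the verification that the largest angle of a non-obtuse triangle indeed lies in $[\pi/3,\pi/2]$.
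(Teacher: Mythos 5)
Your proposal is correct and follows essentially the same route as the paper's proof: both reduce the ratio to $\norm{IC}/d(I,BC) = 1/\sin(C/2)$ via Lemma~\ref{lem: r3 regions} and Lemma~\ref{lem: R2costIncenter}, then bound $\sin(C/2) \leq \sin(\pi/4)$ using $\pi/3 \leq \angle C \leq \pi/2$ for the largest angle of a non-obtuse triangle. The only difference is that you spell out the right-triangle justification of the identity, which the paper leaves implicit.
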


\begin{proof}
Consider $\triangle ABC$ with largest angle $C$, and incenter $I$. Since $I$ is equidistant from all edges, and by Lemma~\ref{lem: r3 regions}, we have $R_3(I) = d(I, BC)$. Moreover, since $C\geq \pi/3$, by Lemma~\ref{lem: R2costIncenter}, we have that $R_2(I)=\norm{IC}$. But then, since $\triangle ABC$ is non-obtuse, we have $C\leq \pi/2$, and so 
$$
\frac{R_2(I)}{R_3(I)} = 
\frac{\norm{IC}}{d(I,BC)}
=
\frac1{\sin(C/2)} \geq \frac1{\sin(\pi/4)}=\sqrt2.
$$
This finishes the proof of the lemma.
  \end{proof}

\subsection{Searching with 1 vs 2 Robots}
\label{sec: 12 sup}

\subsubsection{Supremum Proof; 1 vs 2 Robots}

In this section we prove the following theorem. 

\begin{theorem}
\label{thm: R12 sup}
$\sup_{\tr \in \trs} \mathcal R_{1,2} (\tr)=3.$
\end{theorem}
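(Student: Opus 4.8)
The plan is to establish Theorem~\ref{thm: R12 sup} by the same two-sided scheme used for the other trade-offs: first prove the uniform upper bound $R_1(\tr,P)\le 3\,R_2(\tr,P)$ for every $\tr\in\trs$ and every $P\in\tr$, which gives $\sup_\tr\mathcal R_{1,2}(\tr)\le 3$; then exhibit a single triangle together with a starting point realizing the ratio $3$ exactly, which gives $\sup_\tr\mathcal R_{1,2}(\tr)\ge 3$. The two bounds together yield the claimed value.

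For the upper bound I would use a generic \emph{simulation} argument that uses no geometry of the triangle. Fix $\tr$ and $P$, and let an optimal $2$-robot solution consist of trajectories $\tau_1,\tau_2$, both starting at $P$, whose union touches all three edges; since $R_2(\tr,P)$ is the makespan, each $\tau_i$ has length at most $R_2(\tr,P)$. A single robot can visit all edges by following $\tau_1$, then returning straight from its endpoint back to $P$, and finally following $\tau_2$. Every point of $\tau_1$ is within arc-length $R_2(\tr,P)$ of $P$, so its endpoint is at Euclidean distance at most $R_2(\tr,P)$ from $P$; hence the total length is at most $R_2(\tr,P)+R_2(\tr,P)+R_2(\tr,P)=3\,R_2(\tr,P)$. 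Thus $R_1(\tr,P)\le 3R_2(\tr,P)$, and $\mathcal R_{1,2}(\tr)\le 3$ for every $\tr$.

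For the matching lower bound I would take the right isosceles $\triangle ABC$ with $\angle A=\pi/2$ in standard analytic form $A=(1/2,1/2),\,B=(0,0),\,C=(1,0)$, and let $T=(1/2,1/4)$ be the midpoint of the shortest altitude, namely the one dropped from the largest-angle vertex $A$ onto $BC$. By Lemma~\ref{lem: half of altitude R1 cost}, evaluated at $\angle A=\pi/2,\ \angle B=\angle C=\pi/4$, the optimal $R_1$ strategy is of LRD type with cost $R_1(\tr,T)=\tfrac12(2-\cos\pi)\sin^2(\pi/4)=\tfrac34$. For $R_2(\tr,T)$ I would invoke Observation~\ref{obs: cost 1,2,3}(ii): in the split where one robot visits $\{AB,AC\}$ and the other visits $BC$, the fact that $\angle A=\pi/2\ge\pi/3$ makes the optimal bouncing subcone of $\angle A$ equal to the whole angle, so by Observation~\ref{obs: Visit2Edges-ii} $d(T,\{AB,AC\})=\norm{TA}=1/4$, while $d(T,BC)=1/4$; hence this split costs $\max\{1/4,1/4\}=1/4$. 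Checking that the remaining two splits—each forcing one robot to touch $BC$ together with a second edge—cost at least $1/4$ shows $R_2(\tr,T)=1/4$. Consequently $R_1(\tr,T)/R_2(\tr,T)=3$, so $\mathcal R_{1,2}(\tr)\ge 3$.

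I expect the only mildly delicate point to be the lower-bound evaluation of $R_2(\tr,T)$, i.e.\ verifying that no splitting of the edges beats $1/4$; the upper bound is essentially free. It is worth emphasizing the structural contrast with Theorem~\ref{thm: R13 sup}: applying the same simulation heuristic to $3$ robots yields only the factor $2\cdot 3-1=5$, which is loose (the truth being $4$) and required the dedicated $T_A,T_B,T_C$ analysis, whereas here the trivial factor $2\cdot 2-1=3$ is already tight, which is precisely why it is realized exactly by a concrete triangle and starting point.
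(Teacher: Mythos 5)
Your proposal is correct and follows essentially the same route as the paper: the upper bound via a single robot simulating the optimal two-robot solution (your straight-line return to $P$ versus the paper's retracing of the cheaper trajectory are interchangeable, both giving $3R_2(\tr,P)$), and the lower bound via the right isosceles triangle with the midpoint of the altitude from the right angle, evaluated through Lemma~\ref{lem: half of altitude R1 cost}. The remaining check you flag, that every split of the edges costs at least $1/4$, is immediate since any split forces some robot to visit $BC$ and $d(T,BC)=1/4$, exactly as the paper implicitly uses.
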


As we show next, the lower bound for $\sup_{\tr \in \trs} \mathcal R_{1,2} (\tr)$ is attained for the right isosceles triangle (and for certain starting point). The upper bound is much easier and is presented next.

\begin{lemma}
\label{lem: R12 sup upper bound}
For any non-obtuse triangle $\tr \in \trs$, and any starting point $P\in \tr$, we have 
$$R_1(P)/R_2(P)\leq 3.$$
\end{lemma}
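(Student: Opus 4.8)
The plan is to derive the bound directly from the structure of the optimal two-robot solution furnished by Observation~\ref{obs: cost 1,2,3}(ii), by constructing an explicit single-robot trajectory whose length is at most three times the two-robot makespan. The point is that a lone robot can simulate the two robots sequentially, passing through the common start $P$ between the two tasks, and the triangle inequality keeps the overhead bounded.

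First I would fix an arbitrary $P\in\tr$ and look at the edge partition realizing $R_2(P)$. By Observation~\ref{obs: cost 1,2,3}(ii), the optimal two-robot strategy assigns a single edge, say $e$, to one robot and the remaining pair $\{e',e''\}$ to the other, so that
$$
R_2(P)=\max\{\, d(P,e),\, d(P,\{e',e''\}) \,\}.
$$
In particular both $d(P,e)\le R_2(P)$ and $d(P,\{e',e''\})\le R_2(P)$. These are the only two facts about the two-robot cost that I need.

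Next I would assemble one feasible single-robot trajectory by concatenating the two tasks through $P$: the robot moves from $P$ to the nearest point of $e$ and returns to $P$ (of total cost $2\,d(P,e)$, since an optimal visit of one segment is a straight move of length $d(P,e)$ there and the same length back), and then follows the optimal two-edge visitation of $\{e',e''\}$ starting from $P$, of cost $d(P,\{e',e''\})$. This trajectory touches all three edges of $\tr$, hence is feasible for one robot, and therefore
$$
R_1(P)\le 2\,d(P,e)+d(P,\{e',e''\})\le 2R_2(P)+R_2(P)=3R_2(P),
$$
which is exactly the claimed inequality.

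There is essentially no technical obstacle in this direction; the only point to verify is that returning to $P$ after the single-edge visit is legitimate and again costs $d(P,e)$, which is immediate because the visit of a single segment is realized by a straight segment of length $d(P,e)$. The factor $3$ cannot be improved in general, as the matching lower bound (attained by the right isosceles triangle with $P$ the middle of the shortest altitude, as announced in the introduction) will demonstrate separately.
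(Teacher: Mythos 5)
Your proof is correct and follows essentially the same route as the paper's: both construct a single-robot trajectory that performs one of the two robots' tasks as a round trip through $P$ and then executes the other task, bounding the total by $3R_2(P)$. The only cosmetic difference is that the paper doubles the \emph{cheaper} of the two trajectories (giving $2\min+\max\leq 3\max$) while you always double the single-edge visit, but since both summands are at most $R_2(P)$ the same factor $3$ results.
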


\begin{proof}
Consider a non-obtuse $\triangle ABC$, and a point $P$. Without loss of generality, assume that 
$$R_2(P) = \max\{ d(P,AB), d(P,\{BC,CA\}) \}.$$ 
Note that for the $R_2(P)$ solution, one robot follows the optimal trajectory for visiting $AB$ and the other follows the optimal trajectory for visiting $\{BC,CA\}$. Let $T_c$ be the least expensive of the two trajectories, and $T_e$ be the most expensive (and break ties arbitrarily if their costs are equal). 

It suffices to present a heuristic trajectory that one  robot could follow that does not cost more than $3R_2(P)$. 
Indeed, starting from $P$, first move along the cheapest trajectory $T_c$ and return to point $P$, followed by moving along $T_e$. 
Clearly, this trajectory visits all $\{AB,BC,CA\}$, and takes time 
$$2 \min \{ d(P,AB), d(P,\{BC,CA\}) \} + \max \{ d(P,AB), d(P,\{BC,CA\}) \},$$ 
which is at most $3 \max \{ d(P,AB), d(P,\{BC,CA\}) \} = R_2(P)$. Since the optimal $R_1(P)$ has cost at most the cost of the heuristic solution, it follows that $R_1(P)/R_2(P)\leq 3$.
  \end{proof}

In order to prove a matching upper bound, we present a non-obtuse triangle and a starting point for which the optimal $R_1$ trajectory is exactly the heuristic used in the proof of Lemma~\ref{lem: R12 sup upper bound}. Indeed, the next lemma shows that $\sup_{\tr \in \trs} \mathcal R_{1,2} (\tr)\geq 3$, and together with the previous lemma imply Theorem~\ref{thm: R12 sup}.

\begin{lemma}
\label{lem: R12 sup lower bound}
Let $ABC$ be a right isosceles triangle with right angle $A$. Let also $P$ be the middle point of the altitude corresponding to angle $A$.
Then, $R_1(P)/R_2(P)=3$.
\end{lemma}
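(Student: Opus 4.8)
The plan is to place $\triangle ABC$ in standard analytic form with $A=(1/2,1/2)$, $B=(0,0)$, $C=(1,0)$, so that $\angle A=\pi/2$ is the largest angle and $\angle B=\angle C=\pi/4$. The altitude from $A$ meets $BC$ at $F=(1/2,0)$, so the prescribed starting point is its midpoint $P=(1/2,1/4)$. I would first record the three edge-distances from $P$: projecting onto $BC$ (the line $y=0$) gives $d(P,BC)=1/4$, while the symmetric projections onto $AB$ and $CA$ give $d(P,AB)=d(P,CA)=1/(4\sqrt2)$. By Observation~\ref{obs: cost 1,2,3}(i) (equivalently Lemma~\ref{lem: r3 regions}) this already yields $R_3(P)=\max\{1/4,\,1/(4\sqrt2),\,1/(4\sqrt2)\}=1/4$.

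For $R_2(P)$ I would combine a lower bound from monotonicity with a matching explicit strategy. Adjoining an idle third robot to any optimal $2$-robot solution shows $R_2(P)\ge R_3(P)=1/4$. For the upper bound, note that since $\angle A=\pi/2\ge\pi/3$, the optimal bouncing subcone of $\angle A$ has angle $3\angle A-\pi=\pi/2=\angle A$, i.e.\ it is all of $\angle A$; hence $P$ lies inside this subcone and by Observation~\ref{obs: Visit2Edges-ii} we have $d(P,\{AB,CA\})=\norm{PA}=1/4$. Thus the $2$-robot strategy sending one robot straight to $A$ (visiting both $AB$ and $CA$ at cost $\norm{PA}=1/4$) and the other straight down to $BC$ (at cost $d(P,BC)=1/4$) has makespan $1/4$, so $R_2(P)=1/4$. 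Equivalently, $P=(1/2,1/4)$ satisfies $\norm{PA}=d(P,BC)$, so it lies exactly on the separating parabola of $A$, i.e.\ it is the boundary point $Q$ of Corollary~\ref{cor: r2 hexagon right isosceles}, and the identity $R_2(P)=R_3(P)$ also follows from Lemma~\ref{lem: R2 region partial}.

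For $R_1(P)$ I would invoke Lemma~\ref{lem: half of altitude R1 cost} directly: $P$ is precisely the midpoint of the altitude corresponding to the largest edge $BC$ of a triangle with $\angle A\ge\angle B\ge\angle C$, so the optimal strategy is of LRD type with cost $\tfrac12(2-\cos(2A))\sin B\sin C\csc(B+C)$. Substituting $A=\pi/2$ and $B=C=\pi/4$ gives $\cos(2A)=-1$, $\sin B\sin C=1/2$, and $\csc(B+C)=1$, whence $R_1(P)=\tfrac12\cdot 3\cdot\tfrac12=3/4$. Combining the three computations yields $R_1(P)/R_2(P)=(3/4)/(1/4)=3$, as claimed.

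I expect no serious obstacle, since both visitation costs can be read off from results already established; the only points requiring care are verifying that the whole of $\angle A$ is its own optimal bouncing subcone (so that the single move to $A$ is optimal for the pair $\{AB,CA\}$) and confirming that $P$ is exactly the special point to which Lemma~\ref{lem: half of altitude R1 cost} applies. It is worth remarking that the optimal single-robot LRD trajectory here coincides in cost with the ``cheapest task, return, most expensive task'' heuristic of Lemma~\ref{lem: R12 sup upper bound} (both tasks costing $1/4$, giving $1/4+1/4+1/4=3/4$), which is precisely why this triangle and starting point make the upper bound of Theorem~\ref{thm: R12 sup} tight.
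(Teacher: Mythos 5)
Your proposal is correct and follows essentially the same route as the paper's proof: both identify $P$ as the point where $\norm{PA}=d(P,BC)=d(P,\{AB,AC\})=1/4$ (so $R_2(P)=1/4$, which you justify slightly more explicitly via the sandwich $R_3(P)\le R_2(P)\le\max\{\norm{PA},d(P,BC)\}$), and both obtain $R_1(P)=3/4$ by substituting $\angle A=\pi/2$, $\angle B=\angle C=\pi/4$ into the LRD cost formula of Lemma~\ref{lem: half of altitude R1 cost}. All your numerical computations check out.
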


\begin{proof}
Note that $\angle A \geq \pi/3$, so starting point $P$ lies in the intersection of angle $A$ bisector and the corresponding separating parabola. In particular $R_2(P)=d(P,BC)=d(P,\{AB,AC\})=\norm{AP}$. Without loss of generality, we may assume that $\norm{BC}=1$, and hence $\norm{AB}=\norm{AC}=\sqrt2/2$, so that $R_2(P)=1/4$.
Note also that by Lemma~\ref{lem: half of altitude R1 cost}, the optimal $R_1$ strategy is of LRD type. More specifically, using $\angle A = \pi/2, \angle B = \angle C = \pi/4$ and substituting in cost function gives $R_1(P)=3/4$, and the claim follows. 
\end{proof}

\subsubsection{Infimum Proof; 1 vs 2 Robots}

In this section we prove the following theorem. 

\begin{theorem}
\label{thm: R12 inf}
$\inf_{\tr \in \trs} \mathcal R_{1,2} (\tr)=5/2$.
\end{theorem}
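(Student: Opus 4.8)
The plan is to follow the two-sided scheme from the excerpt: writing $\alpha = 5/2$, I must establish (i) that for every $\tr \in \trs$ there is a starting point $P$ with $R_1(P)/R_2(P) \geq 5/2$, which gives $\inf_{\tr \in \trs} \mathcal R_{1,2}(\tr) \geq 5/2$, and (ii) that the equilateral triangle satisfies $\max_{P} R_1(P)/R_2(P) = 5/2$, which gives $\inf_{\tr \in \trs} \mathcal R_{1,2}(\tr) \leq 5/2$. As anticipated in Section~\ref{sec: definition and contributions}, the extremal triangle will be equilateral and the worst point the midpoint of the shortest altitude. I would split these into a ``lower bound'' lemma and an ``upper bound'' lemma, paralleling the earlier $\mathcal R_{1,3}$ and $\mathcal R_{2,3}$ analyses.

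For direction (i), I would place $P$ at the midpoint $T$ of the shortest altitude, i.e.\ the altitude from the largest-angle vertex $A$ (relabel so that $\angle A \geq \angle B \geq \angle C$, hence $\angle A \geq \pi/3$). Lemma~\ref{lem: half of altitude R1 cost} gives $R_1(T) = \tfrac12(2-\cos 2A)\sin B \sin C \csc(B+C)$, and since by Observation~\ref{obs: A,I coordinates angles} the altitude length equals $q = \sin B \sin C \csc(B+C)$, this is exactly $R_1(T) = \tfrac12(2-\cos 2A)\,q$. For $R_2(T)$ I use the split in which one robot walks straight to $A$ (thereby visiting both $AB$ and $AC$) while the other drops to $BC$: with $T = (p,q/2)$ in standard analytic form, $d(T,BC) = q/2$ and $d(T,\{AB,AC\}) \leq \norm{TA} = q/2$, so $R_2(T) \leq q/2$. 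Therefore $R_1(T)/R_2(T) \geq 2 - \cos 2A \geq 2 - \cos(2\pi/3) = 5/2$, using $\angle A \geq \pi/3$. This direction is short and self-contained.

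For direction (ii) I would put the equilateral triangle in coordinates $B=(0,0)$, $C=(1,0)$, $A=(\tfrac12,\tfrac{\sqrt3}{2})$ and exploit its dihedral symmetry to restrict to the fundamental domain $\triangle AMI$ (one of the six cells cut by the medians), which by Lemma~\ref{lem: R1regionsEquilateral} is an LRD region. On it $R_1(P) = d(P,B'C')$, and evaluating the line $B'C'$ from \eqref{equa: B'C' line} at $\angle A=\angle B=\angle C=\pi/3$ yields $R_1(a,b) = \tfrac12(\sqrt3\,a - b + \sqrt3)$. By Corollary~\ref{cor: r2 hexagon equilateral} and Lemma~\ref{lem: r2 equilateral within}, the point $W$ (the separator of the angle-$A$ bisector, which here is the midpoint of $ML$, i.e.\ the midpoint of the altitude) and the segment $MW$ split $\triangle AMI$ into $\triangle AMW$, where $R_2(P) = d(P,BC) = b$, and $\triangle MWI$, where $R_2(P) = d(P,[AB,AC])$. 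For the latter, Observation~\ref{obs: Visit2Edges-i} gives the distance from $P$ to the line $AC'$ with $C'$ the reflection of $C$ across $AB$; since $C' = (-\tfrac12,\tfrac{\sqrt3}{2})$, the line $AC'$ is $y = \tfrac{\sqrt3}{2}$, so $R_2(P) = \tfrac{\sqrt3}{2} - b$. Each resulting ratio is increasing in $a$, hence maximized on the edge $x=\tfrac12$; a one-variable check on each piece then shows both ratios are maximized at $W$, with common value $5/2$. Symmetry extends this to the whole triangle, giving $\max_P R_1/R_2 = 5/2$.

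The clean direction is (i); the main obstacle is direction (ii), where the care lies in confirming that the closed forms for $R_1$ and $R_2$ hold throughout the cell $\triangle AMI$ (not merely on its skeleton), and that the symmetry genuinely reduces the global maximization to this single cell. Once the formulas are pinned down, the two remaining maximizations are elementary and both certify the value $5/2$ without exceeding it, matching the lower bound from (i) and completing $\inf_{\tr \in \trs} \mathcal R_{1,2}(\tr) = 5/2$, as stated in Theorem~\ref{thm: R12 inf}.
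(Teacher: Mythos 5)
Your proposal is correct and follows essentially the same route as the paper: the lower bound on the infimum comes from the midpoint $T$ of the shortest altitude via Lemma~\ref{lem: half of altitude R1 cost} and the bound $2-\cos(2A)\geq 5/2$ for $\angle A\geq \pi/3$, and the upper bound comes from the equilateral triangle, reduced by symmetry to the cell $\triangle AMI$ with the same closed forms $R_1=\tfrac12(\sqrt3\,a-b+\sqrt3)$ and $R_2=\max\{b,\sqrt3/2-b\}$, maximized at $W$. The only (harmless) deviation is that you bound $R_2(T)\leq q/2$ directly via the two-robot heuristic (one robot to $A$, one to $BC$) instead of citing Lemma~\ref{lem: R2 region partial} for equality, which suffices since only an upper bound on $R_2(T)$ is needed there.
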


The next lemma shows that $\inf_{\tr \in \trs} \mathcal R_{1,2} (\tr)\leq 5/2$.

\begin{lemma}
\label{lem: R12 inf lower bound}
Consider an equilateral triangle $\tr$. Then, we have  
$\max_{P \in \tr} R_1(P)/R_2(P) = 5/2$.
\end{lemma}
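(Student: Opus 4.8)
The plan is to exploit the full dihedral symmetry $D_3$ of the equilateral triangle to reduce the maximization of $R_1(P)/R_2(P)$ to a single fundamental domain, namely the cell $\triangle AMI$ (one of the six congruent triangles cut out by the three medians/angle bisectors, where $I$ is the incenter/centroid and $M$ is the midpoint of $AB$). Since both $R_1$ and $R_2$ are invariant under the symmetry group and the six images of $\triangle AMI$ tile $\tr$, the global maximum equals the maximum over $\triangle AMI$. Throughout this cell the optimal one-robot strategy is LRD (Lemma~\ref{lem: R1regionsEquilateral}), so $R_1$ has a single closed form. I would place $B=(0,0)$, $C=(1,0)$, $A=(1/2,\sqrt3/2)$ and use equation~\eqref{equa: B'C' line} to obtain $R_1(P)=d(P,B'C')=\tfrac12(\sqrt3 a-b+\sqrt3)$ for $P=(a,b)\in\triangle AMI$. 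One first checks that the LRD bounce indicator line is precisely the line $MI$, so that $\triangle AMI$ lies in the positive LRD bounce and subopt halfspaces and this formula is valid (it reproduces the known values $R_1(I)=2\sqrt3/3$ and $R_1(T)=5\sqrt3/8$, where $T=(1/2,\sqrt3/4)$).

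For $R_2$ I would split $\triangle AMI$ along the segment $MT$, the portion of the edge $ML$ of the medial triangle $MKL$ lying inside the cell; here $T$ is the midpoint of the altitude $AK$ and coincides with the separator $W$ of the angle-$A$ bisector (Corollary~\ref{cor: r2 hexagon equilateral}). Above $MT$, the subcell $\triangle AMT$ lies outside the hexagon separator and inside $\triangle AML$, so $R_2(P)=d(P,BC)=b$ by Corollary~\ref{cor: r2 hexagon equilateral}. Below $MT$, the subcell $\triangle MTI=\triangle MWI$ lies inside the separator, where Lemma~\ref{lem: r2 equilateral within} gives $R_2(P)=d(P,[AB,AC])$; unfolding the two edges incident to $A$ (reflecting $AC$ across $AB$ sends it to the horizontal line $y=\sqrt3/2$ through $A$) yields the clean value $R_2(P)=\sqrt3/2-b$.

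The remaining step is an elementary monotone optimization on each piece. On $\triangle AMT$ the ratio is $\frac{\sqrt3 a-b+\sqrt3}{2b}$, which is increasing in $a$ (pushing the maximum to the edge $a=1/2$) and then decreasing in $b$ (pushing it to $T$), with value $5/2$; on $\triangle MTI$ the ratio is $\frac{\sqrt3 a-b+\sqrt3}{\sqrt3-2b}$, again increasing in $a$ and, restricted to $a=1/2$, increasing in $b$ (its $b$-derivative equals $\frac{2\sqrt3}{(\sqrt3-2b)^2}>0$), so its maximum is again at $T$, with value $5/2$. Since both pieces attain supremum $5/2$ exactly at $T$, the maximum over $\triangle AMI$, hence over $\tr$, is $5/2$, realized at the midpoint of an altitude.

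The main obstacle is not the final optimization, which is monotone in each coordinate, but correctly assembling the piecewise description of $R_1$ and $R_2$ on the fundamental domain: one must verify that all of $\triangle AMI$ sits in the positive LRD bounce/subopt halfspaces so that $R_1=d(P,B'C')$ holds, and that the split point $T$ is simultaneously the midpoint of $ML$, the separator $W$ of the angle-$A$ bisector, and the midpoint of the altitude. These coincidences are exactly what make all three regional formulas agree at $T$ and force the ratio to peak there, so establishing them is the crux that lets the monotone optimization conclude cleanly at a single corner.
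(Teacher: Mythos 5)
Your proposal is correct and follows essentially the same route as the paper: reduce by symmetry to the cell $\triangle AMI$, use the LRD formula $R_1(P)=d(P,B'C')$ (the paper computes the same quantity by rotating $P$ by $2\pi/3$ about $A$), write $R_2(P)=\max\{b,\sqrt3/2-b\}$ via Corollary~\ref{cor: r2 hexagon equilateral} and Lemma~\ref{lem: r2 equilateral within}, and conclude by a monotone optimization peaking at $W=T=(1/2,\sqrt3/4)$ with value $5/2$. The only difference is cosmetic: you split the final maximization into two monotone pieces, while the paper first bounds $a\le 1/2$ and then maximizes over $b$ directly.
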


\begin{proof}
We consider equilateral $\triangle ABC$, with bisectors (and altitudes) $AK,BL,CM$, and incenter $I$, see also Figure~\ref{fig: InfR1R2LowerBoundEquilateral}. Let also $W$ be the intersection of $ML$ with $AK$. By symmetry, it is enough to show that 
$$
\max_{P \in AMI} R_1(P)/R_2(P) =5/2
.$$
Below we use the $R_1$ regions that are summarized in Lemma~\ref{lem: R1regionsEquilateral}, and the $R_2$ regions, that are summarized in Corollary~\ref{cor: r2 hexagon equilateral} and Corollary~\ref{cor: r2 equilateral within}.
          \begin{figure}[h!]
\centering
  \includegraphics[width=8cm]{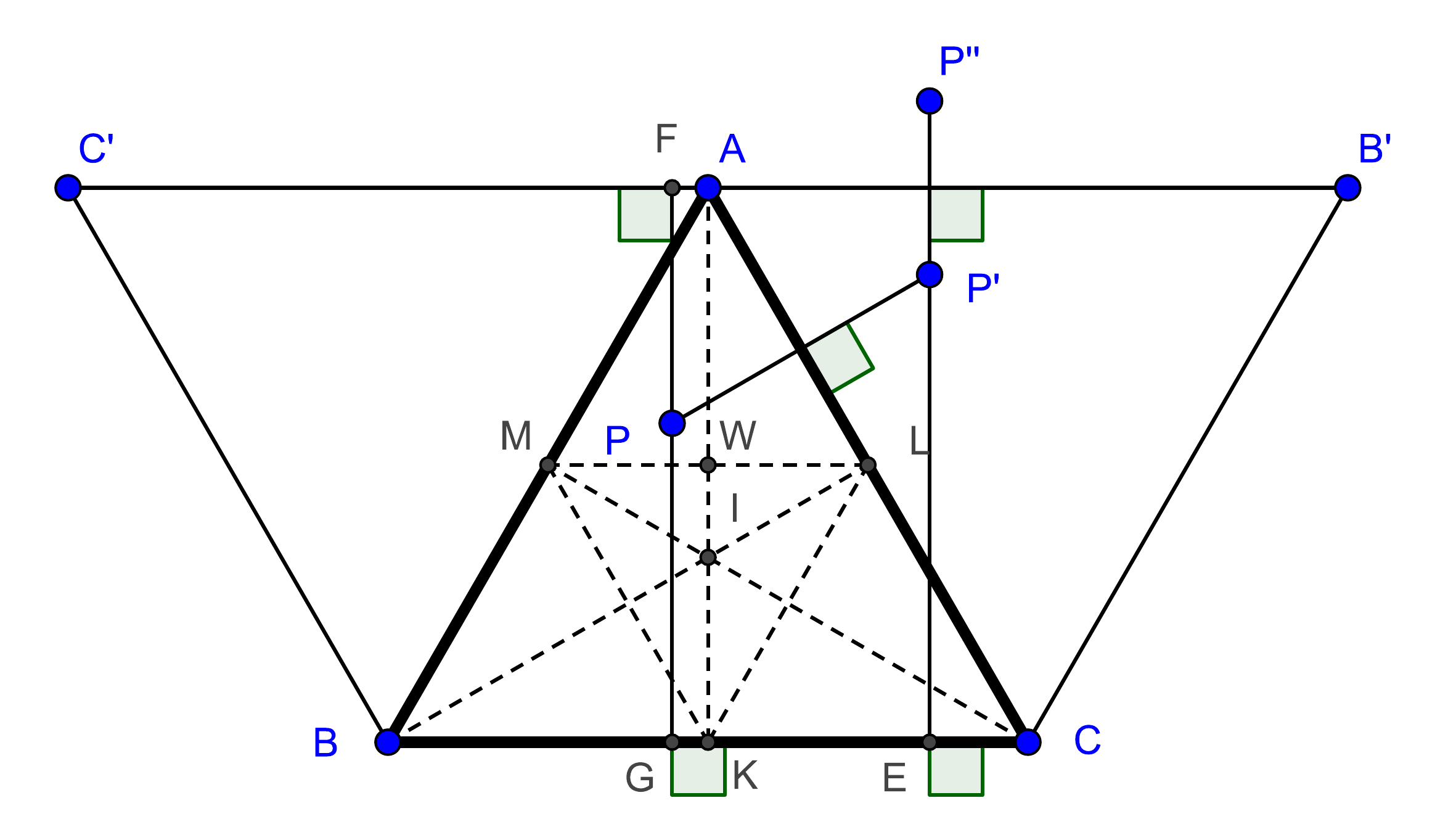}
              \caption{
              Equilateral $\triangle ABC$, and comparison of optimal $R_1, R_2$ strategies for arbitrary starting points. 
              }
              \label{fig: InfR1R2LowerBoundEquilateral}
          \end{figure}

In particular, $R_1$ and $R_2$ regions are determined by the bisectors and points $K,L,M$. More specifically, consider the reflections $C',P',B',P''$ of $C,P,B,P'$ around $AB,AC,AC,AB'$, respectively. We also assume that the triangle is in standard analytic form, i.e. we set 
$A=(p,q)^T, B=(0,0)^T, C=(1,0)^T$ in a Cartesian system (here we treat points as column vectors so at to perform some linear algebra manipulation), where $p=1/2$ and $q=\sqrt3/2$.

For every $P\in AMI$, the optimal $R_1$ strategy is LRD-type, and hence $R_1(P) = d(P'',BC)$. Note also that $P''$ is obtained by rotating $P$ by $2\pi/3$ with center $A$. Hence, if $P=(a,b)^T$ we have that
 $$
 P'' =  \mathcal \mathcal R_{2\pi/3} (P-A)+A
 =
  \mathcal \mathcal R_{2\pi/3} \left( 
\left(
 \begin{array}{c}
 a\\
 b
 \end{array}
 \right)
 -
\left(
 \begin{array}{c}
 p\\
 q
 \end{array}
  \right)
  \right)
  +
  \left(
 \begin{array}{c}
 p\\
 q
 \end{array}
 \right).
 $$
Since also $d(P'',BC)$ equals the second coordinate of $P''$, we have that 
$$
R_1(P) = \sin(2\pi/3)(a-p) +(\cos(2\pi/3)(b-q)+q
=
\frac{\sqrt3}2(a-p)-\frac12(b-q)+q. 
$$

For starting points $P\in AMI$, we have two cases regarding the cost of the optimal $R_2$ strategy. If $P\in AMW$, then we have $R_2(P)=\norm{PG}=b$. 
If $P\in MIW$, then the dominant cost for the $R_2$ strategy is due to a robot that visits $AB, AC$. Note also that $d(P, \{AB,AC\}) = d(P,AC') = q-b$. Overall, we have that for all $P \in AMI$ 
$$
R_2(P) = \max \{ b, q-b\}.
$$

Combining the above, we have that 
\begin{align*}
\max_{P \in AMI} \frac{R_1(P)}{R_2(P)}
& = 
\max_{a,b} 
\frac{\frac{\sqrt3}2(a-1/2)-\frac12(b-\sqrt3/2)+\sqrt3/2}
{\max \{ b, \sqrt3/2-b\}} \\
& \stackrel{(a\leq 1/2)}{\leq} 
\max_{b} 
\frac{-\frac12(b-\sqrt3/2)+\sqrt3/2}
{\max \{ b, \sqrt3/2-b\}} \\
&=
\frac52,
\end{align*}
i.e. the maximum is attained at $a=1/2, b=\sqrt3/4$ which is point $W$. 
  \end{proof}

The next lemma shows that $\inf_{\tr \in \trs} \mathcal R_{1,2} (\tr)\geq 5/2$.

\begin{lemma}
\label{lem: R12 inf upper bound}
For any $\triangle ABC \in \trs$, let $T$ be the middle point of the altitude corresponding to the largest edge. Then, we have $R_1(T)/R_2(T) \geq 5/2$. 
\end{lemma}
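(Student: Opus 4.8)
The plan is to lower-bound $R_1(T)$ using the exact formula already established and to pair it with a cheap upper bound on $R_2(T)$, so that the ratio collapses to a one-variable trigonometric inequality in the largest angle.

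First I would fix the standard analytic form with $\angle A \geq \angle B \geq \angle C$, so that the largest edge is $BC$, placed as $B=(0,0)$, $C=(1,0)$, and $A=(p,q)$. The altitude corresponding to $BC$ is then the vertical segment from $A=(p,q)$ to its foot $(p,0)$, so its midpoint is $T=(p,q/2)$. In particular $T$ sits at height $q/2$ above $BC$ and at distance $\norm{TA}=q/2$ from the apex $A$.

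The key upper bound on $R_2(T)$ comes from a single explicit two-robot strategy. Using the middle partition of Observation~\ref{obs: cost 1,2,3}(ii), I send one robot straight down to $BC$ (cost $d(T,BC)=q/2$) and the other straight up to the vertex $A$, which lies on both $AB$ and $CA$ and hence visits the pair $\{AB,CA\}$ at cost $d(T,\{AB,CA\})\leq\norm{TA}=q/2$. Therefore $R_2(T)\leq\max\{q/2,q/2\}=q/2$; no finer analysis of the $R_2$ regions is needed, since only an upper bound is required. For $R_1(T)$ I would invoke Lemma~\ref{lem: half of altitude R1 cost}, which gives $R_1(T)=\tfrac12(2-\cos(2A))\sin(B)\sin(C)\csc(B+C)$. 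By Observation~\ref{obs: A,I coordinates angles} we have $q=\sin(B)\sin(C)\csc(B+C)$, so this is exactly $R_1(T)=\tfrac12(2-\cos(2A))\,q$. Dividing by the bound on $R_2(T)$ yields
$$
\frac{R_1(T)}{R_2(T)}\ \geq\ \frac{\tfrac12(2-\cos(2A))\,q}{q/2}\ =\ 2-\cos(2A).
$$
Finally, since $\angle A$ is the largest angle we have $A\geq\pi/3$, and since the triangle is non-obtuse, $A\leq\pi/2$; hence $2A\in[2\pi/3,\pi]$ and $\cos(2A)\leq\cos(2\pi/3)=-\tfrac12$, which gives $2-\cos(2A)\geq\tfrac52$ and finishes the proof.

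The step that does the real work is the cited formula from Lemma~\ref{lem: half of altitude R1 cost}; given that, there is no genuine obstacle. The only points to be careful about are (i) confirming that touching the single vertex $A$ legitimately visits both incident edges $AB$ and $CA$, validating the $q/2$ upper bound on $R_2(T)$, and (ii) the $q$-cancellation via $q=\sin(B)\sin(C)\csc(B+C)$, which is what turns the ratio into the clean expression $2-\cos(2A)$. Note the bound is tight precisely at $A=\pi/3$, i.e.\ the equilateral triangle, consistent with the matching construction of Lemma~\ref{lem: R12 inf lower bound}.
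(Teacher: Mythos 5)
Your proof is correct and follows essentially the same route as the paper: both rest on the formula of Lemma~\ref{lem: half of altitude R1 cost} for $R_1(T)$ and reduce the ratio to $2-\cos(2A)\geq 5/2$ via $\angle A\geq\pi/3$. The only (harmless, arguably cleaner) difference is that you bound $R_2(T)\leq q/2$ by the explicit strategy of sending one robot to $BC$ and one to the vertex $A$, whereas the paper invokes Lemma~\ref{lem: R2 region partial} to assert $R_2(T)=d(T,BC)=q/2$ exactly; since only an upper bound on the denominator is needed, your version avoids verifying that $T$ lies outside the refined $R_2$ separator.
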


\begin{proof}
Without loss of generality, we assume that $BC$ is the largest edge of non-obtuse $\triangle ABC$, hence $\angle A$ is the largest angle. 
Consider the standard analytic representation of $ABC$.
By Lemma~\ref{lem: R2 region partial}, we have that $R_2(T)=d(T,BC)=q/2$, which can be expressed by Observation~\ref{obs: A,I coordinates angles} using only triangle angles.
By Lemma~\ref{lem: half of altitude R1 cost}, we also have that  the optimal strategy of $R_1(T)$ is of LRD type and the cost is expressed as a function of the angles. But then, it is immediate that 
$${R_1(T)}/{R_2(T)}=2 -\cos(2A)\geq 5/2,$$
 where the last inequality is due to that $\angle A\geq \pi/3$.
\end{proof}

\section{Conclusions}

We considered a new vehicle routing-type problem in which (fleets of) robots visit all edges of a triangle. We proved tight bounds regarding visitation trade-offs with respect to the size of the available fleet. In order to avoid degenerate cases of visiting the edges with 3 robots, we only focused our study on non-obtuse triangles. The case of arbitrary triangles, as well as of other topologies, e.g. graphs, remains open. We believe the definition of our problem is of independent interest, and that the study of efficiency trade-offs in combinatorial problems with respect to the number of available processors (that may not be constant as in our case), e.g. vehicle routing type problems, will lead to new, deep and interesting questions. 

\nocite{*}
\bibliographystyle{abbrvnat}
\bibliography{triangletradeoffrefs}
\label{sec:biblio}

\end{document}